\newcommand{\R}{\mathbb{R}}
\newcommand{\C}{\mathbb{C}}
\newcommand{\N}{\mathbb{N}}
\newcommand{\id}{\mathbbm{1}}
\newcommand{\vx}{\mathbf{x}}
\newcommand{\be}{\begin{equation}}
\newcommand{\ee}{\end{equation}}
\newcommand{\psii}{\psi^{(1)}}
\newcommand{\psiii}{\psi^{(2)}}
\newcommand{\psitilde}{\widetilde{\psi}}
\newcommand{\config}{\mathcal{Q}}
\newcommand{\spacelike}{\mathscr{S}}
\newcommand{\coincidence}{\mathscr{C}}
\newcommand{\Banach}{\mathscr{B}}
\newtheorem{theorem}{Theorem}[section]
\newtheorem{lemma}[theorem]{Lemma}
\newenvironment{proof}[1][Proof:]{\begin{trivlist}
\item[\hskip \labelsep {\bfseries #1}]}{\end{trivlist}}
\newenvironment{definition}[1][Definition:]{\begin{trivlist}
\item[\hskip \labelsep {\bfseries #1}]}{\end{trivlist}}
\newcommand{\qed}{\hfill\ensuremath{\square}}
\title{Multi-time formulation of particle creation and annihilation via interior-boundary conditions}
\author{
Matthias Lienert\thanks{Fachbereich Mathematik, Eberhard-Karls-Universit\"at, Auf der Morgenstelle 10, 72076 T\"ubingen, Germany.
     E-mail: matthias.lienert@uni-tuebingen.de} \ and
Lukas Nickel\thanks{Mathematisches Institut, Ludwig-Maximilians-Universit\"at,
	Theresienstr.\ 39, 80333 M\"unchen, Germany. E-mail: nickel@math.lmu.de}
}
\date{July 31, 2019}
\begin{document}

\maketitle

\begin{abstract}
\noindent Interior-boundary conditions (IBCs) have been suggested as a possibility to circumvent the problem of ultraviolet divergences in quantum field theories. In the IBC approach, particle creation and annihilation is described with the help of linear conditions that relate the wave functions of two sectors of Fock space: $\psi^{(n)}(p)$ at an interior point $p$ and $\psi^{(n+m)}(q)$ at a boundary point $q$, typically a collision configuration.
 Here, we extend IBCs to the relativistic case. To do this, we make use of Dirac's concept of multi-time wave functions, i.e., wave functions $\psi(x_1,...,x_N)$ depending on $N$ space-time coordinates $x_i$ for $N$ particles. This provides the manifestly covariant particle-position representation that is required in the IBC approach. 
In order to obtain rigorous results, we construct a model for Dirac particles in 1+1 dimensions that can create or annihilate each other when they meet. Our main results are an existence and uniqueness theorem for that model, and the identification of a class of IBCs ensuring local probability conservation on all Cauchy surfaces.
Furthermore, we explain how these IBCs relate to the usual formulation with creation and annihilation operators.  The Lorentz invariance is discussed and it is found that apart from a constant matrix (which is required to transform in a certain way) the model is manifestly Lorentz invariant. This makes it clear that the IBC approach can be made compatible with relativity.\\[0.2cm]

    \noindent \textbf{Keywords:} Dirac operator, Fock space, multi-time wave functions, regularization of quantum field theory, local probability conservation.
\end{abstract}

\section{Introduction} \label{sec:intro}

The creation and annihilation of particles is a key feature of quantum field theory (QFT); however, it is usually connected with the problem of ultraviolet (UV) divergences. Interior-boundary conditions (IBCs) have been introduced as a possible way to circumvent the UV problem.
In the IBC approach (see \cite{ibc1,ibc2} for an introduction), one uses a wave function $\varphi$ on the configuration space $\config$ of a variable number of particles. Consider, for example, $\config = \config^{(1)} \cup \config^{(2)}$, where $\config^{(1)} = \R^d$ and $\config^{(2)} = \{ (\vx_1,\vx_2) \in (\R^d)^2 : \vx_1 \neq \vx_2\}$. The wave function can be represented as $\varphi = \big(\varphi^{(1)}, \varphi^{(2)}\big)$ where $\varphi^{(1)}$ is a single-particle wave function and $\varphi^{(2)}$ a two-particle wave function. An IBC then is a condition relating the value of $\varphi^{(2)}$ at a boundary point of $\config$, here $(\vx,\vx) \in \partial \config$, with the value of $\varphi^{(1)}$ at a suitable interior point, here at $\vx$.

The most important role of IBCs is to ensure that probability is transferred between the sectors of $n$ and $n+1$ particles while the total probability remains conserved. Such a transfer of probability corresponds to particle creation and annihilation. In the example, it describes
the process in which two particles can merge into a single one at a point $\vx$ or, conversely, a single particle at $\vx$ can split up into two. A certain notion of locality is important here: IBCs must only relate boundary points $q \in \partial \config$ with interior points $q' \in \mathring{\config}$ which canonically correspond to each other in the language of particle creation and annihilation. This means that $q = (\vx,\vx)$ must be related with $q' = \vx$, not with any other point.

Given this relation to particle creation and annihilation, it is not surprising that IBCs can also be derived from Hamiltonians which involve creation and annihilation operators. The creation operator can usually not be densely defined since it involves a delta function $\delta^{(d)}(\vx_1-\vx_2)$. 
 One can, however, obtain a boundary condition from the delta function by integrating over the Schr\"odinger equation, say in $\vx_2$ around the point $\vx_1$.
In this way, one arrives at a boundary condition that involves the wave function at points $(\vx_1,\vx_1) \in \partial \config$ and $\vx_1 \in \mathring{\config}$, i.e., at an IBC.

Previous works \cite{ibc1,ibc2,sltt:2017,stt:2018,KS:2016,ls:2018,lampart:2018,tumulka:2018} have focused on this relation of IBCs and non-relativistic (or pseudo-relativistic) Hamiltonians with particle creation and annihilation operators. It has been shown for simple models that understanding the creation part of the Hamiltonian as defining an IBC allows to make these models rigorous without the need for renormalization, which is usually required to treat the UV divergence. What is more, it is possible to explicitly state a well-defined version of the initial Hamiltonian, its domain being restricted by the IBC.

While these results seem promising, they have so far only been established for models non-relativistic or pseudo-relativistic dispersion relations, i.e., models with free Hamiltonians such as $-\Delta$ or $\sqrt{-\Delta +m^2}$. To have an impact on more realistic QFTs, it is important to extend them to a relativistic setting. Among other things, this requires using the Dirac operator to describe fermions. Besides, the idea of IBCs is intimately connected with the particle-position representation of the quantum state. As it is unclear how a wave function $\varphi^{(n)}(t,\vx_1,...,\vx_n)$ behaves under Lorentz transformations, a covariant alternative has to be utilized.

Fortunately, a relativistic version of particle-position representation has been revisited and developed significantly in recent years: the \textit{multi-time picture} of Dirac \cite{dirac:1932}, Fock and Podolsky \cite{dfp:1932}, Bloch \cite{bloch:1934}, Tomonaga \cite{tomonaga:1946} and Schwinger \cite{schwinger:1948}. We refer to \cite{LPT:2017} for a recent overview. In the multi-time picture, one considers a wave function $\psi^{(n)}(x_1,...,x_n)$ on \textit{spacetime con\-fi\-gu\-rations} $(x_1,...,x_n) \in (\R^{1+d})^n$ for $n$ particles. The relation to the usual single-time wave function $\varphi$ is straightforward:
\be
	\varphi^{(n)}(t,\vx_1,...,\vx_n) ~=~ \psi^{(n)}(t,\vx_1;...;t,\vx_n).
	\label{eq:singlemulti}
\ee
Under a Lorentz transformation $\Lambda $, the transformation behavior of $\psi$ is given by\footnote{This is the straightforward generalization to multiple times of the well-known product representation, say, for the $n$-particle Dirac equation $i \partial_t \varphi^{(n)} = \sum_{k=1}^n H^{\rm Dirac}_k \varphi^{(n)}$.}:
\be
	\psi^{(n)}(x_1,...,x_n) ~\stackrel{\Lambda}{\longmapsto}~ S[\Lambda]^{\otimes n} \psi^{(n)}(\Lambda^{-1} x_1,...,\Lambda^{-1} x_n),
	\label{eq:psitrafo}
\ee
where $S[\Lambda]$ are matrices forming a representation of the Lorentz group. In the context of QFT, $\psi$ can be represented as a sequence of $n$-particle wave functions, $\psi = (\psi^{(0)},\psi^{(1)},\psi^{(2)},...)$. The multi-time picture thus offers the covariant version of the particle-position representation that a relativistic treatment of IBCs requires.

Of course, several important questions remain, such as: 
\begin{enumerate}
	\item How can the idea of a probability flux between sectors of different particle numbers be formulated in the multi-time picture?
	\item How can IBCs be made compatible with the dynamics for a multi-time wave function?
\end{enumerate}
To clarify this complex of questions constitutes the main goal of the paper.  As IBCs concern the rigorous formulation of QFTs, it is crucial to prove the existence of the dynamics. For the multi-time case, this is challenging and, to the best of our knowledge, no general methods are available. (This should be contrasted with the single-time case where one can rely on a range of well-developed functional analytic tools centered around the self-adjointness of the Hamiltonian.) Therefore, a general answer to the above-mentioned problems seems presently out of reach. Our paper instead takes an exploratory approach. We identify the simplest nontrivial model which still shows the main features we are setting out to treat (Dirac operators, multi-time wave functions, particle creation and annihilation): a system of a variable number of at most $N$ indistinguishable Dirac particles in 1+1 spacetime dimensions. At this example we explain how to rigorously address the above-mentioned questions 1. and 2. about the multi-time formulation of IBCs. We expect that some of the resulting developments (especially those concerning the multi-time formulation) can be transferred to more general relativistic QFTs.

\paragraph{Structure of the paper.} We start by reviewing the basic concepts and results about multi-time wave functions which are relevant to our work (Sec.\ \ref{sec:multitimereview}). Next (Sec.\ \ref{sec:probcons}), we prove that a beautiful condition in terms of a differential form constructed from the multi-time wave function ensures (local) probability conservation on all Cauchy surfaces (Thm.\ \ref{thm:localcurrentcons}). In Sec.\ \ref{sec:2part}, we introduce our model. This is done for the case of $N=2$ sectors first, both for comprehensibility and because this case is used as a building block for the model with a general number $N \in \N$ of sectors of Fock space. We then identify a general class of IBCs which leads to local probability conservation (Thm.\ \ref{thm:2ibc}) and prove the existence and uniqueness of the model for $N=2$ (Thm.\ \ref{thm:2main}). 
Sec.\ \ref{sec:msec} deals with extending these results to general $N$. Our main result is the existence and uniqueness theorem \ref{thm:mmain}. We then discuss the relation of our model with a Hamiltonian with creation and annihilation operators (Sec.\ \ref{sec:mseccreat}). Lorentz invariance is briefly discussed in Sec.\ \ref{sec:li}. Section \ref{sec:proofs} contains the proofs of our theorems. We conclude with a discussion of the results, including an outlook on possible future directions (Sec.\ \ref{sec:discussion}).

\section{Multi-time wave functions} \label{sec:multitime}

We set $\hbar = 1 = c$ and use the spacetime metric $\eta = {\rm diag} (1,-1,...,-1)$ for (1+$d$)-dimensional Minkowski spacetime with $d$ spatial dimensions.

\subsection{Review of important concepts} \label{sec:multitimereview}

For a variable number of particles, a multi-time wave function becomes a so-called \textit{multi-time Fock function} (see \cite{pt:2013c} and also \cite{DV82b,DV85}). It can be represented as a sequence of $n$-particle wave functions $\psi^{(n)}$,
\be
	\psi = (\psi^{(0)},\psi^{(1)}, \psi^{(2)}, \psi^{(3)}, ... ).
\ee
Since the no-particle amplitude $\psi^{(0)}$ has neither time nor space arguments in the multi-time formalism, it is a constant complex number. Thus, we shall disregard $\psi^{(0)}$ for the rest of the paper and consider $\psi^{(n)}$ only for $n \in \N$, for which
\be
	\psi^{(n)} : \spacelike^{(n)} \subset (\R^{1+d})^n \rightarrow \C^{k(n)},~~~ (x_1,...,x_n) \mapsto \psi(x_1,...,x_n).
	\label{eq:multitimemap}
\ee
Here,  $k(n)$ denotes the number of spin components (e.g., $k(n) = 2^n$ for $n$ Dirac particles in $d=1$). The natural domain of $\psi^{(n)}$ is the set of spacelike configurations,
\be
	\spacelike^{(n)} = \{ (x_1,...,x_n) \in (\R^{1+d})^n : (x_i - x_j)^2 < 0 \ \forall i \neq j \}.
\ee
Here, $(x_i - x_j)^2 = (x_i^0-x_j^0)^2 - |\vx_i - \vx_j|^2$ denotes the Minkowski distance. 
The total configuration space is then given by the set
\be
	\spacelike = \bigcup_{n=1}^\infty \spacelike^{(n)} 
\ee
of spacelike configurations of a variable number of particles. Note that so far, the concept of a multi-time wave function is a straightforward relativistic extension of the usual single-time wave function on Fock space with configuration space $\config = \bigcup_{n=1}^\infty \R^{nd}_{\neq}$ where $\R^{nd}_{\neq} = \{ (\vx_1,...,\vx_n) \in \R^d : \vx_i \neq \vx_j \, \forall i \neq j)\}$.

The dynamics of $\psi$ is usually defined through a set of $n$ PDEs for each $\psi^{(n)}$, i.e.:
\be
	i \partial_{x_k^0} \psi^{(n)}(x_1,...,x_n) = (H_k \psi)^{(n)}(x_1,...,x_n),~~~k=1,...,n.
	\label{eq:multitimeeqs}
\ee
Here, the $H_k$ are differential expressions which may involve different sectors $\psi^{(m)}$. We can see their relation to the usual Hamiltonian as follows. Taking the time derivative of $\psi^{(n)}$ evaluated at equal times $t = x_1^0 = \cdots = x_n^0$ in a particular frame, we find that the single-time wave function $\varphi^{(n)}$ defined according to \eqref{eq:singlemulti} satisfies the Schr\"odinger equation
\be
	i \partial_t \varphi^{(n)}(t,\vx_1,...,\vx_n) = \left( \sum_{k=1}^n H_k \varphi \right)^{(n)}(t,\vx_1,...,\vx_n),
\ee
i.e., the Hamiltonian is given by $H = \sum_k H_k$.

It is understood that the multi-time equations \eqref{eq:multitimeeqs} can be rewritten in a manifestly covariant form, as in the example of free Dirac particles:
\be
	\big(i \gamma^\mu_k \partial_{x_k^\mu} - m \big) \psi^{(n)}(x_1,...,x_n) = 0,~~~k=1,...,n,
	\label{eq:freedirac}
\ee
where $\gamma^\mu_k$ denotes the $\mu$-th Dirac gamma matrix acting on the spin index of the $k$-th particle.

We emphasize that \eqref{eq:multitimeeqs} defines many equations for each $\psi^{(n)}$. In order for \eqref{eq:multitimeeqs} to have solutions for arbitrary initial data (e.g., data at all times equal to zero), the $H_k$ have to satisfy restrictive consistency conditions \cite{pt:2013a,ND:2016}. For the free equations \eqref{eq:freedirac}, these conditions are automatically satisfied. However, interaction potentials are typically inconsistent. Interaction via particle creation and annihilation, on the other hand, does yield a consistent mechanism of interaction for multi-time wave functions, at least on $\spacelike$ (see \cite{pt:2013c,pt:2013d}).

\subsection{Probability conservation for arbitrary Cauchy surfaces} \label{sec:probcons}

Multi-time wave functions carry a physical meaning as a probability amplitude for particle detection. It has recently been demonstrated \cite{LT:2017} that for a wide class of QFTs with local interactions and finite propagation speed, there is a \textit{Born rule for arbitrary Cauchy surfaces} $\Sigma \subset \R^{1+d}$. That means, a suitable quadratic expression $|\psi^{(n)}|_\Sigma^2$ in $\psi^{(n)}$ evaluated along $\Sigma$ yields the probability density to detect $n$ particles at locations $x_1,...,x_n\in \Sigma$. For example, for Dirac particles, one has:
\be
	|\psi^{(n)}|_\Sigma^2(x_1,...,x_n) ~=~ j^{\mu_1... \mu_n}(x_1,...,x_n) \, n_{\mu_1}(x_1) \cdots n_{\mu_n}(x_n),
	\label{eq:curvedborn}
\ee
where $n$ is the future-pointing normal vector field at $\Sigma$ and
\be
	j^{\mu_1... \mu_n}(x_1,...,x_n) ~=~ \overline{\psi}^{(n)} (x_1,...,x_n) \gamma_1^{\mu_1} \cdots \gamma_n^{\mu_n} \psi^{(n)} (x_1,...,x_n)
\label{eq:diraccurrent}
\ee
stands for the Dirac tensor current of the $n$-particle sector (see \cite[chap.\ 1.3]{lienert:2015c} for an explanation of tensor currents). For an equal-time surface $\Sigma_t$, we have $n=(1,0,...,0)$ and considering $(\gamma^0)^2 = \id$, we find $|\psi^{(n)}|_{\Sigma_t}^2= (\psi^{(n)})^\dagger \psi^{(n)}$, i.e., the usual $|\psi|^2$ probability density. Thus \eqref{eq:curvedborn} extends the usual Born rule in the appropriate geometric way.

\noindent Probability conservation then means
\be
	\sum_{n=1}^\infty \int_{\Sigma^n \cap \spacelike^{(n)}} d\sigma_1(x_1) \cdots d \sigma_n(x_n)~|\psi^{(n)}|_\Sigma^2(x_1,...,x_n) = 1 \quad \text{independently of} \ \Sigma.
\ee
To emphasize that we are dealing with a configuration space with a boundary, we have written $\Sigma^n \cap \spacelike^{(n)}$ for the domain of integration. This boundary $\partial \spacelike^{(n)}$ consists of the light-like configurations of $n$ particles. However, only a subset of $\partial \spacelike^{(n)}$ plays a role for probability conservation here, namely the set of coincidence points,
\be
	\coincidence^{(n)} = \{(x_1,...,x_n) \in (\R^{1+d})^n : \exists\, i \neq j : x_i = x_j \}.
\ee
In fact, for $d=1$ (the case we shall focus on later), the dimension of $\coincidence^{(n)}$ is large enough that probability can get lost through $\coincidence^{(n)}$. Accordingly, there must be conditions on the tensor currents $j^{\mu_1\cdots \mu_n}$ which ensure that the probability lost in this way gets redistributed to a different sector of Fock space. We shall now work out a suitable local condition which guarantees exactly that.

Before coming to the main result of the section, we introduce for every $n \in \N$ a certain $nd$-form, the \textit{current form} $\omega^{(n)}$ which is constructed from the tensor currents (see \cite{lienert:2015a,LN:2015} and \cite[chap. 1.3]{lienert:2015c}).
\begin{align}
	\omega^{(n)} = \sum_{\mu_1,...,\mu_n = 0}^d (-1)^{\mu_1 + \cdots + \mu_n} \,j^{\mu_1 ... \mu_n} \, dx_1^0 \wedge \cdots \widehat{dx_1}^{\mu_1}  \cdots \wedge dx_1^d\nonumber\\
\wedge \cdots \wedge dx_n^0 \wedge \cdots \widehat{dx_n}^{\mu_n}  \cdots \wedge dx_n^d,
\label{eq:currentform}
\end{align}
where $\widehat{(\cdot)}$ denotes omission.
Given $\omega^{(n)}$, the condition for probability conservation can be rewritten as follows.
\be
	\sum_{n=1}^\infty \int_{\Sigma^n\cap \spacelike^{(n)}} \omega^{(n)} = 1~~~{\rm independently~of~}\Sigma.
\label{eq:probcons0}
\ee

Now we specialize to $d=1$, denoting spacetime points by $x_i = (t_i,z_i)$. In $d=1$, the configuration space $\spacelike^{(n)}$ can be greatly simplified if one deals with a single species of indistinguishable particles (fermions). This is because we have
\be
	\spacelike^{(n)} = \bigcup_{\sigma \in S_n} \spacelike_\sigma^{(n)}, \quad 	\spacelike_\sigma^{(n)} = \{ (t_1,z_1,...,t_n,z_n) \in \spacelike : z_{\sigma(1)} < \cdots < z_{\sigma(n)} \}.
\label{eq:spacelikesigma}
\ee
where $S_n$ denotes the permutation group.
Thus, it is sufficient to consider $\spacelike_1^{(n)}= \spacelike_{\rm id}^{(n)}$ as the configuration space of $n$ indistinguishable particles in $d=1$, and $\spacelike_1 = \bigcup_{n=1}^\infty \spacelike_1^{(n)}$ for a variable particle number. The idea is to only formulate the multi-time equations \eqref{eq:multitimeeqs} and the IBCs on $\spacelike_1$ (and its boundary). Once a solution $\psi$ on $\spacelike_1$ is found, one can obtain an appropriately normalized multi-time wave function $\widetilde{\psi}$ on $\spacelike$ by anti-symmetric extension and normalization of $\psi^{(n)}$ with a factor $\frac{1}{\sqrt{n!}}$. On each $\spacelike^{(n)}_\sigma$:
\be \label{eq:renormalizedwf}
	\widetilde{\psi}^{(n)}_{s_1 ... s_n}(x_1,...,x_n) = \frac{\mathrm{sgn}( \sigma)}{\sqrt{n!}} \, \psi^{(n)}_{s_{\sigma(1)} ... s_{\sigma(n)}}(x_{\sigma(1)},...,x_{\sigma(n)}),
\ee
where $s_k$ is the spin index of the $k$-th particle.

We are now prepared to prove the condition for probability conservation. For technical reasons, we introduce a highest possible number $N \in \N$ of particles in the system.

\begin{theorem}[Condition for local probability conservation.] \label{thm:localcurrentcons}
Let $N \in \N$ and let $j = (j^{\mu}, j^{\nu \rho}, \cdots, j^{\mu_1...\mu_N})$ such that each $j^{\mu_1 ... \mu_n}$ is differentiable on $\spacelike_1^{(n)}$ and continuous on $\overline{\spacelike}_1^{(n)}$. Furthermore, let each $j^{\mu_1...\mu_n}$ be compactly supported on all sets of the form $\Sigma^n\cap \spacelike_1^{(n)}$ where $\Sigma \subset \R^2$ is a smooth Cauchy surface. Let $\omega^{(n)}$ denote the $n$-form given by $j^{\mu_1...\mu_n}$ as in \eqref{eq:currentform}. Then:
\begin{enumerate}
	\item Global probability conservation in the sense of 
	\be
			\sum_{n=1}^N \int_{\Sigma^n \cap \spacelike^{(n)}_1} \omega^{(n)} = 1~~~\text{for all Cauchy surfaces } \Sigma
\label{eq:probcons}
 \ee
is ensured by the following condition \emph{(local probability conservation)}:
	\be
		\left\{ \begin{array}{l} {\rm d} \omega^{(N)} = 0,\\
		{\rm d} \omega^{(n)} = \sum_{k=1}^n {\Phi_k}^*\omega^{(n+1)} , ~~ n=1,...,N-1.
		\end{array}\right.
		\label{eq:localcurrentcons}
	\ee
	Here, $\Phi_k$ is defined by $(k=1,...,n)$:
\begin{align}
	\Phi_k : ~&\overline{\mathscr{S}}_1^{(n)} \rightarrow \mathscr{C}^{(n+1)}_k = \{ (x_1,...,x_{n+1}) \in \partial \mathscr{S}^{(n+1)}_1 : x_k = x_{k+1}\},\nonumber\\
& (x_1,...,x_k,...,x_{n+1}) \mapsto (x_1,...,x_k,x_k,...,x_{n+1}),
\label{eq:phik}
\end{align}
and $\big({\Phi_k}^*\omega^{(n+1)}\big)(\cdot) = \omega^{(n+1)}(\Phi_k(\cdot))$ denotes the pullback of $\omega^{(n+1)}$ by $\Phi_k$. Moreover, evaluation of $\omega^{(n+1)}$ along $\mathscr{C}^{(n+1)}_k \subset \partial \mathscr{S}^{(n+1)}_1$ refers to the limit of $\omega^{(n+1)}(q)$ for $q \rightarrow \partial \mathscr{S}^{(n+1)}_1$ in $\mathscr{S}^{(n+1)}_1$.
	\item Let $\varepsilon_{\mu \nu}$ denote the Levi-Civita symbol. In terms of the tensor currents $j^{\mu_1 ... \mu_n}$, \eqref{eq:localcurrentcons} is then equivalent to:
	\be
	\left\{ \begin{array}{l} \partial_{x_k^{\mu_k}} j^{\mu_1... \mu_k... \mu_N}= 0 \quad \forall \, k=1,...,N~\text{on} ~ \spacelike_1^{(N)}, \\
		\varepsilon_{\rho \sigma}\, j^{\mu_1 ... \mu_{k-1} \, \rho \, \sigma \, \mu_{k+1} ... \mu_n}(x_1,...,x_k,x_k,...,x_n) = (-1)^k \partial_{x_k^{\mu_k}} j^{\mu_1... \mu_k ... \mu_n}(x_1,...,x_n)\\
\forall \, n=1,...,N-1,~ k=1,...,n, ~\forall \, (x_1,...,x_n) \in \mathscr{S}_1^{(n)}.
		\end{array}\right.
		\label{eq:currentcondition}
	\ee
\end{enumerate} 
\end{theorem}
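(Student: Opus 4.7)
The plan is to apply Stokes' theorem sector by sector on a spacetime region swept out by a one-parameter family of Cauchy surfaces, and to use the hypothesis $d\omega^{(n)} = \sum_k \Phi_k^*\omega^{(n+1)}$ to generate a telescoping cancellation across sectors.

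Given Cauchy surfaces $\Sigma_0$ and $\Sigma_1$, I would interpolate them by a smooth family $\{\Sigma_s\}_{s\in[0,1]}$ of Cauchy surfaces, and for each $n \in \{1,\ldots,N\}$ form the $(n+1)$-dimensional region
\[
R^{(n)} \;=\; \bigcup_{s\in[0,1]} \bigl(\Sigma_s^n \cap \spacelike_1^{(n)}\bigr) \;\subset\; \spacelike_1^{(n)}.
\]
Its oriented boundary consists of the top slice $\Sigma_1^n\cap\spacelike_1^{(n)}$, minus the bottom slice $\Sigma_0^n\cap\spacelike_1^{(n)}$, and a coincidence part $\bigcup_{k=1}^{n-1}\bigcup_s\bigl(\Sigma_s^n\cap\coincidence_k^{(n)}\bigr)$; compact support of $j$ on every Cauchy product eliminates contributions from spatial infinity. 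Writing $A_n$ for the difference of Cauchy-slice integrals and $B_n$ for the coincidence-boundary integral in sector $n$, Stokes' theorem gives $A_n + B_n = \int_{R^{(n)}} d\omega^{(n)}$.

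The key geometric observation is that $\Phi_k$ is a diffeomorphism from $R^{(n)}$ onto the coincidence-$k$ piece of $\partial R^{(n+1)}$: inserting a duplicate of $x_k$ into an ordered $n$-tuple on $\Sigma_s$ produces an $(n+1)$-tuple on $\Sigma_s$ with $z_k=z_{k+1}$, and the map is invertible by deletion. Hence the hypothesis $d\omega^{(n)} = \sum_k \Phi_k^*\omega^{(n+1)}$ identifies $\int_{R^{(n)}} d\omega^{(n)}$ with $\pm B_{n+1}$ under a fixed outward-normal convention. Combining $A_N = -B_N$ (from $d\omega^{(N)}=0$) with $A_n = -B_n + B_{n+1}$ for $n<N$, and using $B_1 = 0$ because $\spacelike_1^{(1)} \cong \R^2$ has no coincidence boundary, the sum $\sum_{n=1}^N A_n$ telescopes to zero, establishing \eqref{eq:probcons}.

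Part 2 is a direct coordinate computation. Exterior differentiation of $\omega^{(n)}$ as in \eqref{eq:currentform} converts each omitted factor into $\partial_{x_k^{\mu_k}} j^{\mu_1\ldots\mu_n}$ with a sign $(-1)^{k-1}$ on top of the existing $(-1)^{\mu_1+\cdots+\mu_n}$; meanwhile, the pullback $\Phi_k^*\omega^{(n+1)}$ identifies $x_{k+1}$ with $x_k$, so that pairs $dx_k^\rho \wedge dx_k^\sigma$ arise, and in $d=1$ these collapse to $\varepsilon_{\rho\sigma}\,dx_k^0\wedge dx_k^1$, yielding the Levi-Civita symbol. Matching coefficients of the resulting $(n+1)$-forms gives the two displayed equations in \eqref{eq:currentcondition}. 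The main obstacle throughout is the careful bookkeeping of orientation and sign conventions: ensuring that the induced boundary orientation on $\partial R^{(n+1)}$ is the one transported by $\Phi_k$, that the $(-1)^k$ in the second equation of \eqref{eq:currentcondition} emerges consistently from the wedge reorderings, and that the telescoping produces genuine cancellation rather than a doubled or residual term.
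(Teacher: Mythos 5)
Your proposal follows essentially the same route as the paper's proof: sweep out a configuration-spacetime region between the two Cauchy surfaces (the paper uses the explicit linear interpolation of the time functions and a spatial cutoff $|z_i|\leq R$ justified by compact support), apply Stokes' theorem sector by sector, identify the coincidence part of $\partial R^{(n+1)}$ with $R^{(n)}$ via $\Phi_k$, and telescope using $d\omega^{(N)}=0$ and the emptiness of the coincidence boundary in sector $1$; Part 2 is the same coordinate computation producing the Levi-Civita symbol and the $(-1)^k$. The argument and the sign bookkeeping you flag are handled in the paper exactly as you outline, so the proposal is correct.
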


The proof can be found in Sec. \ref{sec:prooflocalcurrentcons}.

\paragraph{Remark.} It is remarkable that probability conservation is ensured by the beautiful geometric condition \eqref{eq:localcurrentcons}. We call this condition \textit{local probability conservation.} The adjective ``local'' deserves some explanation. One can see from the proof that we demand a certain detailed balance between the probability flux of the ($n$+1)-particle sector into the set where two of the $n$+1 points coincide and the global influx into the $n$-particle sector. We then mean by ``local current conservation'' that this redistribution of probability happens only between configurations $(x_1,...,x_k, x_k,...,x_n) \in \partial \spacelike^{(n+1)}$ and $(x_1,...,x_k,...,x_n) \in \spacelike^{(n)}$, i.e., configurations which correspond to each other canonically in the particle-position representation: $(x_1,...,x_k, x_k,...,x_n)$ refers to a configuration of $n$+1 particles on a certain Cauchy surface where two of the particles meet, and $(x_1,...,x_k,...,x_n)$ is the configuration where just $n$ particles are present but at the same locations. So the straightforward interpretation of what happens here is that two of the particles merge to form a single one, i.e., one of the particles gets annihilated. It becomes evident that \eqref{eq:localcurrentcons} regulates the way particle creation and annihilation can happen. In more sophisticated theories with exchange particles, we expect that an analogous condition will hold on configurations where an exchange particle reaches one of the other particles. 

In the following, we shall define IBCs as linear relations between the spin components of $\psi^{(n+1)}(x_1,...,x_k,x_k,...,x_n)$ and $\psi^{(n)}(x_1,...,x_k,...,x_n)$ (for $n \geq 1$) which ensure local probability conservation \eqref{eq:localcurrentcons}. We shall identify a large class of local IBCs for the simplest case of $N=2$ sectors of Fock space (i.e., a model including the 1 and 2 particle sectors). The results will serve as a building block for the case of a general $N$.

\section{A building block: the case $N=2$} \label{sec:2part}

\subsection{The model} \label{sec:2secmodel}

We consider a variable number of at most $N=2$ indistinguishable Dirac particles in one spatial dimension. This is the simplest non-trivial case where particle creation and annihilation is possible. We have $\psi = (\psi^{(1)}, \psi^{(2)})$ where $\psi^{(1)}$ and $\psi^{(2)}$ are maps of the form \eqref{eq:multitimemap} with $k(n) = 2^n$ spin components on the reduced configuration spaces $\spacelike_1^{(n)}$, see \eqref{eq:spacelikesigma}. Explicitly, we write:
\begin{equation} \begin{split}
\psii: \R^2 \rightarrow \C^2,~~~ \psii(t,z) &=  \left( \begin{array}{c}
\psii_-(t,z) \\ \psii_+(t,z)
\end{array} \right) , \\
\psiii: \spacelike_1^{(2)} \rightarrow \C^4,~~~ \psiii(t_1,z_1,t_2,z_2) & ~=~ \left( \begin{array}{c}
\psiii_{--}(t_1,z_1,t_2,z_2) \\ \psiii_{-+}(t_1,z_1,t_2,z_2)  \\ \psiii_{+-}(t_1,z_1,t_2,z_2) \\ \psiii_{++}(t_1,z_1,t_2,z_2) 
\end{array} \right). \end{split}
\end{equation}
The dynamics is defined as follows. $\psiii$ obeys the free multi-time Dirac equations on $\spacelike_1^{(2)}$ (here in Hamiltonian form):
\be \label{eq:2modelupper}
	i \partial_{t_k} \psiii(t_1,z_1,t_2,z_2) = H_k^{\rm Dirac} \psiii(t_1,z_1,t_2,z_2) ,~~k=1,2,
\ee
where $H_k^{\rm Dirac} = -i \gamma^0_k \gamma^1_k \partial_{z_k} + m \gamma^0_k$ is the Dirac Hamiltonian acting on the variables of the $k$-th particle.  $\psii$ evolves according to:
\begin{equation}  \label{eq:2modellower}
	i\partial_t \psii(t,z) = H^{\rm Dirac} \psii(t,z) - A \psiii(t,z,t,z),
\end{equation}
where $A$ is a $2\times 4$ matrix. The term $A \psiii(t,z,t,z)$ creates a coupling between the two sectors and allows for a global gain/loss of probability in the 1-particle sector. The matrix $A$ is constrained by current conservation as will be explained in Sec.\ \ref{sec:2secibc}.

Furthermore, $\psii$ and $\psiii$ need to obey the following IBC:
\begin{equation} \label{eq:2modelIBC}
\psiii_{-+}(t,z,t,z)- e^{i \theta} \psiii_{+-}(t,z,t,z) = B \psii(t,z),
\end{equation}
for some $\theta \in [0,2\pi)$. $B$ is a $1\times 2$ matrix which can be expressed in terms of $A$ (see Sec. \ref{sec:2secibc}). Expressions involving $\psi$ on boundary points $q\in \partial \spacelike$, such as $\psiii_{+-}(t,z,t,z)$, denote limits of $\psi$ within $\spacelike_1$ towards the boundary, e.g., $\psiii_{+-}(t,z,t,z) = \lim_{\varepsilon \rightarrow 0} \psiii_{+-}(t,z-\varepsilon,t,z+\varepsilon)$. These limits can be understood in the literal sense; we will consider only continuously differentiable and bounded wave functions in this paper.

The form of the IBC \eqref{eq:2modelIBC} can be motivated as follows.\footnote{We have first learned about this possibility from Roderich Tumulka (private communication). After the completion of the present paper, a general (single-time) formulation of IBCs for codimension-1 boundaries and multi-particle Dirac Hamiltonians has appeared in \cite{stt:2018}.} In the case of no coupling between the two sectors (i.e., $A=0$, $B=0$), the model corresponds to free motion for the 1-particle sector and pure delta interactions for the 2-particle sector. Such relativistic delta interactions for multi-time wave functions have been treated in \cite{lienert:2015a} and the appropriate form of the boundary conditions is known from there. If a coupling between the 1-particle and 2-particle sectors is desired, it is natural to include a linear term $B \psii(t,z)$ on the right hand side. Moreover, because a transfer of probability between the two sectors is expected, we need to add a source term to the free Dirac equation for the 1-particle sector. This source term should be linear and can only depend on $(t,z)$. This suggests that it should have the form $A \psiii(t,z,t,z)$. (A similar approach has been used in \cite[Sec. 2.3]{ibc2} to introduce IBCs in a non-relativistic context.)

In order to simplify the problem, we choose the representation $\gamma^0 = \sigma_1$ and $\gamma^1 = \sigma_1 \sigma_3$ where $\sigma_i,~i=1,2,3$ denote the Pauli matrices. If we then set $m=0$, the Dirac Hamiltonian becomes diagonal:
\begin{equation} \label{eq:formoffreeDirac}
H^{\rm Dirac} = \left( \begin{array}{cc} 
-i\partial_z & 0 \\ 0 & i\partial_z
\end{array} \right).
\end{equation}
This diagonal form makes it possible to use a generalized method of characteristics which has been developed in \cite{lienert:2015a,LN:2015}. For this reason we shall focus on the massless case first (the massive case is treated in Sec.\ \ref{sec:massive} for $N$ sectors of Fock space).

Our goal is to prove the existence and uniqueness of solutions of the system of equations \eqref{eq:2modelupper}, \eqref{eq:2modellower} and \eqref{eq:2modelIBC} with $m=0$ (the case $m>0$ will be treated in Sec.\ \ref{sec:massive}) and initial values given by
\begin{equation} \label{eq:2modelinitial}
\psii|_{t=0} = \psii_0 \in C_b^1(\R,\C^2), \quad \psiii|_{t_1=t_2=0} = \psiii_0 \in C_b^1( \{ (z_1,z_2) \in \R^2, z_1<z_2 \}, \C^4).
\end{equation}
Here, $C^1_b$ denotes the space of continuously differentiable functions which are bounded and have bounded derivatives. In addition, we require the following compatibility conditions between initial values and the IBC:
\begin{align}\label{eq:IBCforinitialvalues}
& \psi^{(2)}_{0,-+}(z,z) - e^{i\theta} \psi^{(2)}_{0,+-}(z,z) = B \psi_{0}^{(1)}(z)
\\ & B \left( H^{\rm Dirac} \psii_0(z) - A\psiii_0(z,z) \right) = i \left( \partial_{z_2} - \partial_{z_1} \right) \left. \left( \psiii_{0,-+}(z_1,z_2)+e^{i\theta}\psiii_{0,+-}(z_1,z_2)  \right) \right|_{z_1=z_2=z} \label{eq:IBCforinitialvalues2}
\end{align}
The first condition expresses that the initial data must satisfy the IBC at time $t=0$. The second condition turns out to be necessary to obtain a $C^1$ solution (see Sec.\ \ref{sec:proof2main} to understand fully why \eqref{eq:IBCforinitialvalues2} arises).
These two conditions can be satisfied as follows. Choose $\psii_0 \in C_b^1$ at will. Then \eqref{eq:2modelinitial} can be read as a boundary condition for $\psiii_0$ which is easy to fulfill. The remaining condition \eqref{eq:IBCforinitialvalues2} is a boundary condition for the $(\partial_{z_2} - \partial_{z_1})$-derivative of $\psiii_0$. It can be satisfied easily as well, as the derivatives of $\psiii_0$ at a boundary point can be chosen independently of the value of $\psiii_0$ at that point.

\begin{theorem} \label{thm:2main}
Let $T>0$ and $A \in \C^{2 \times 4}, B \in \C^{1 \times 2}$ be arbitrary. Then for  $t_1, t_2 \in [-T,T]$ there exists a unique $C^1_b$-solution $\psi$ of the initial boundary value problem \eqref{eq:2modelupper}, \eqref{eq:2modellower}, \eqref{eq:2modelIBC} with given initial values as in \eqref{eq:2modelinitial} that satisfy \eqref{eq:IBCforinitialvalues} and \eqref{eq:IBCforinitialvalues2}. We call such a $\psi$ a \emph{global solution}.
\end{theorem}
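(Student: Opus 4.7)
The plan is to exploit the diagonal form of the massless Dirac Hamiltonian \eqref{eq:formoffreeDirac} in the chosen representation, which turns the multi-time equations \eqref{eq:2modelupper}--\eqref{eq:2modellower} into a system of transport equations along characteristics of slope $\pm 1$, so that the method of characteristics developed in \cite{lienert:2015a,LN:2015} applies. First I would solve for the four spin components of $\psiii$ by propagating initial data along characteristics. For $\psiii_{--}$ and $\psiii_{++}$ both particles travel in the same direction, so the characteristics never reach the coincidence set $\{z_1=z_2\}$; hence these components are determined on all of $\spacelike_1^{(2)}$ by $\psiii_0$ alone. For $\psiii_{-+}$ (converging in forward time) the characteristic from $(t_1,z_1,t_2,z_2)$ traces back to $(0,z_1-t_1,0,z_2+t_2)$, which lies in the initial-data domain whenever $t_1+t_2\geq z_1-z_2$, a condition automatically satisfied for $t_1,t_2\geq 0$. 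The dual component $\psiii_{+-}$ reaches the initial surface first whenever $z_2-z_1>t_1+t_2$, and otherwise its characteristic hits a coincidence point $(t^*,z^*,t^*,z^*)$ with $t^*=(t_1+t_2-(z_2-z_1))/2 > 0$, where its value has to be supplied by the IBC.

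Second I would use the IBC \eqref{eq:2modelIBC} at equal-time coincidence points to express
\be
\psiii_{+-}(t,z,t,z) = e^{-i\theta}\bigl[\psiii_{-+}(t,z,t,z) - B\psii(t,z)\bigr],
\ee
where $\psiii_{-+}(t,z,t,z)$ is already known from $\psiii_0$ (by Step~1) for $t\geq 0$. Substituting this into the source term $-A\psiii(t,z,t,z)$ of \eqref{eq:2modellower} and integrating along the characteristics of the $\psii$-equation yields, componentwise, a linear Volterra integral equation
\be
\psii_\pm(t,z) = \psi^{(1)}_{0,\pm}(z\mp t) + \int_0^t \bigl[G_\pm(s, z\mp(t-s)) + M_\pm\,\psii(s, z\mp(t-s))\bigr]\,ds,
\ee
with $G_\pm$ determined by the initial data of $\psiii$ and $M_\pm\in\C^{1\times 2}$ a constant built from $A$, $B$, $\theta$. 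A Picard iteration in the Banach space $C_b^0([0,T]\times\R,\C^2)$ (the Lipschitz constant of the right-hand side in $\psii$ is uniform in $z$ because $M_\pm$ is a constant) produces a unique bounded continuous $\psii$ on $[0,T]\times\R$ for any $T>0$. With $\psii$ in hand, the IBC determines $\psiii_{+-}$ on the boundary, and characteristic propagation extends it to all of $\spacelike_1^{(2)}$.

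Third I would upgrade the regularity from $C_b^0$ to $C_b^1$. Away from the ``light cone from the origin'' $\{t_1+t_2=z_2-z_1\}$, smoothness transfers through both the characteristic formulas and the Volterra equation by differentiating the initial data. Across this codimension-one surface $\psiii_{+-}$ is glued from two different formulas: continuity there at $t=0$ amounts precisely to the first compatibility condition \eqref{eq:IBCforinitialvalues}; matching of first derivatives follows by differentiating the IBC in $t$ at $t=0$, substituting the evolution equations \eqref{eq:2modelupper}--\eqref{eq:2modellower}, and recognizing the resulting identity as \eqref{eq:IBCforinitialvalues2}. The argument on $[-T,0]$ is symmetric, with the roles of $\psiii_{-+}$ and $\psiii_{+-}$ interchanged (the ``incoming'' component in backward time is $\psiii_{+-}$). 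Uniqueness is immediate, as each step---characteristic propagation, IBC elimination, and the Volterra equation---has a unique output for given data. The main obstacle I expect is precisely the careful $C^1$ matching across the light cone from the origin: this is where the two compatibility conditions enter in an essential way, and where one must verify that the limit of the boundary-value branch actually agrees up to first derivatives with the initial-data branch, rather than merely up to the values.
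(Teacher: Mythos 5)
Your proposal is correct, and it reaches the result by a different organization of the argument than the paper. The paper keeps the two sectors coupled: it proves two lemmas (sector~1 solvable given an arbitrary source $f^{(2)}$ on the diagonal, sector~2 solvable given an arbitrary $\psii$ in the IBC, the latter quoting the explicit formula \eqref{eq:2sectorexplicitsol} from \cite{lienert:2015a}), then defines a map $F$ on a closed subset of $\Banach_1\oplus\Banach_2$ and shows it is a contraction in the exponentially weighted norms \eqref{eq:2gammanorms} for $\gamma$ large, so that Banach's fixed point theorem applies globally in $T$. You instead observe that, for $t\geq 0$, the only component of $\psiii$ on the coincidence set not already determined by $\psiii_0$ alone is $\psiii_{+-}$, and that the IBC eliminates it in favor of the known $\psiii_{-+}(t,z,t,z)=\psiii_{0,-+}(z-t,z+t)$ and $B\psii(t,z)$; this collapses the whole feedback loop into one linear Volterra equation for $\psii$ with a constant-matrix kernel (proportional to $A_{\cdot,+-}B$ up to the phase $e^{-i\theta}$), solvable by standard Picard iteration for any $T$ without weighted norms, after which $\psiii$ is recovered by the same explicit characteristic formulas. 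Your computation of where each component's backward characteristic lands (initial surface versus the vertex at $t^*=(t_1+t_2-(z_2-z_1))/2$) matches the case distinction $u_1\lessgtr v_2$, $v_1\lessgtr u_2$ in \eqref{eq:2sectorexplicitsol}, and your treatment of the $C^1$ matching across $\{v_1=u_2\}$ via \eqref{eq:IBCforinitialvalues} and \eqref{eq:IBCforinitialvalues2} is exactly the content of Lemma \ref{thm:secondlemma} and the well-definedness check \eqref{eq:2sectorC1ofF}. What your route buys is a more elementary and essentially closed-form solution for $N=2$; what the paper's route buys is a template that survives the generalization to $N$ sectors (Sec.~\ref{sec:proofmmain}), where the chain of couplings $\psi^{(n)}\leftrightarrow\psi^{(n+1)}$ no longer decouples into a single-sector Volterra equation and the product-space contraction with weighted norms is doing real work.
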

We shall now determine which matrices $A$ and $B$ lead to local probability conservation. (The existence and uniqueness theorem holds for arbitrary constant matrices $A, B$.) After that, we compare the form of the equations and IBCs with a more familiar Hamiltonian involving creation and annihilation operators and establish a relation between the two formulations (Sec.\ \ref{sec:2secibc}). The proof of Thm.\ \ref{thm:2main} is given is Sec.\ \ref{sec:proof2main}.

\subsection{Probability conservation and IBCs} \label{sec:2secibc}

We need to check which matrices $A$ and $B$ ensure the condition \eqref{eq:localcurrentcons} (or equivalently \eqref{eq:currentcondition})  for local probability conservation. Eq. \eqref{eq:currentcondition} yields the following two conditions. For $n=2$:
\be
	\partial_{x_1^\mu} j^{\mu \nu}(x_1,x_2) = 0 = \partial_{x_2^\nu} j^{\mu \nu}(x_1,x_2)~~\text{on } \spacelike_1^{(2)},
	\label{eq:zerodiv}
\ee
and for $n=1$:
\be
	(j^{01} - j^{10})(x,x) = - \partial_\mu j^\mu(x)~~\forall \, x \in \R^2,
	\label{eq:balance2}
\ee
where $j^{\mu \nu} = \overline{\psi}^{(2)} \gamma_1^\mu \gamma_2^\nu \psiii$ and $j^\mu = \overline{\psi}^{(1)} \gamma^\mu \psii$.

Now, \eqref{eq:zerodiv} is already ensured by the free multi-time Dirac equations \eqref{eq:2modelupper} for the two-particle sector. (This can be verified easily using \eqref{eq:2modelupper} and its adjoint equation.) It will be the role of the IBC \eqref{eq:2modelIBC}  to ensure \eqref{eq:balance2}. We now calculate both sides of \eqref{eq:balance2} in detail to see which relation the matrices $A$ and $B$ need to satisfy. On the one hand, we have:
\be
	\partial_\mu j^\mu(x) \stackrel{\eqref{eq:2modellower}}{=} - 2 \Im \left( {\psii}^\dagger(x) \, A \psiii(x,x) \right).
\ee
On the other hand, the two-particle flow out of the set of coincidence points is given by \cite{lienert:2015a}:
\be
	\big(j^{01}-j^{10}\big)(x,x) = 2 \big( |\psiii_{+-}|^2 - |\psiii_{-+}|^2 \big)(x,x).
\ee
Thus, condition \eqref{eq:balance2} becomes:
\be
	\Im \left( {\psii}^\dagger(x) \, A\,  \psiii(x,x) \right) = \big( |\psiii_{+-}|^2 - |\psiii_{-+}|^2 \big)(x,x).
	\label{eq:2currentbalance}
\ee
It is the content of the following theorem to identify a general class of IBCs which ensure this condition.

\begin{theorem} \label{thm:2ibc}
	The most general translation invariant class of IBCs of the form \eqref{eq:2modelIBC} which ensures \eqref{eq:2currentbalance} (and hence \eqref{eq:currentcondition}) for the model \eqref{eq:2modelupper}, \eqref{eq:2modellower} is given by a phase $\theta \in [0,2\pi)$ and constant matrices $A, B$ defined as follows.
	\be
		A^\dagger = \left( \begin{array}{c} 0~0\\ \, \widetilde{A} \\ 0~0\end{array}\right),
	\label{eq:N}
	\ee
 	where $\widetilde{A}$ is a complex $2 \times 2$ matrix of the form
	\be
		\widetilde{A} = \left( \begin{array}{cc} w_1 & w_2\\ w_1 e^{i\phi} & w_2 e^{i \phi}\end{array}\right)
		\label{eq:ntilde}
	\ee
	with $w_1,w_2 \in \C$ and $\phi \in [0,2\pi)$.\\
	Furthermore, $B$ can be expressed completely in terms of $\widetilde{A}$ as
	\be
		B = \frac{1}{2i} \, (1,e^{i \theta}) \widetilde{A}.
		\label{eq:M}
	\ee
 \end{theorem}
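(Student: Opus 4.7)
The plan is to treat the condition \eqref{eq:2currentbalance} as a pointwise algebraic identity in $\psii(x)\in\C^2$ and $\psiii(x,x)\in\C^4$ that must hold on the linear subspace cut out by the IBC \eqref{eq:2modelIBC}. I would first observe that the right-hand side of \eqref{eq:2currentbalance} depends only on $a:=\psiii_{-+}(x,x)$ and $b:=\psiii_{+-}(x,x)$, and the IBC likewise involves only these two components; the components $\psiii_{--}(x,x)$ and $\psiii_{++}(x,x)$ remain free. Since the identity must then hold for arbitrary choices of these two free components, the first and fourth columns of $A$ must vanish. This yields the block structure \eqref{eq:N} and reduces the problem to determining the $2\times 2$ matrix $\widetilde A$. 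Writing $v_1,v_2\in\C^2$ for the two columns of $\widetilde A^\dagger$, one has $A\,\psiii(x,x) = a\,v_1 + b\,v_2$.

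Next I would eliminate $a$ using the IBC in the form $a=e^{i\theta}b+B\psii$ and substitute into both sides of \eqref{eq:2currentbalance}. The resulting identity is a polynomial of degree one in the free complex variable $b$, with coefficients that are sesquilinear in $\psii$. Splitting $b=x+iy$ and matching the coefficients of $x$ and $y$ separately gives a vector condition
\begin{equation*}
v_1 + e^{-i\theta}\,v_2 + 2i B^\dagger = 0
\end{equation*}
from the $b$-linear terms, while the $b$-independent terms produce, via the polarization identity for sesquilinear forms on $\C^2$, the $2\times 2$ anti-Hermitian matrix condition
\begin{equation*}
v_1 B - B^\dagger v_1^\dagger + 2i B^\dagger B = 0.
\end{equation*}

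The crux is to extract the rigid structure claimed in \eqref{eq:ntilde} from this pair of equations. Substituting the first into the second to eliminate $v_1$ yields a single constraint coupling $v_2$ and $B$. Assuming $B\neq 0$, I would decompose $v_2 = \alpha B^\dagger + w$ orthogonally with $Bw=0$ and expand the resulting constraint in a basis adapted to this splitting: the off-diagonal block forces $w=0$, while the surviving diagonal component fixes $\Im(e^{-i\theta}\alpha)=-1$. Hence both $v_1$ and $v_2$ lie in $\mathrm{span}(B^\dagger)$, and writing $v_2=e^{-i\phi}v_1$ introduces the unique phase $\phi\in[0,2\pi)$ of the parameterization. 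Translated back to $\widetilde A$, the proportionality $v_2=e^{-i\phi}v_1$ is equivalent to the second row of $\widetilde A$ being $e^{i\phi}$ times the first, which is the claimed form \eqref{eq:ntilde}. The degenerate case $B=0$, corresponding to $1+e^{i(\theta+\phi)}=0$, requires a separate but straightforward argument in which the identity collapses to $v_2=-e^{i\theta}v_1$.

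Finally, the formula \eqref{eq:M} for $B$ follows from the vector equation above: computing $B^\dagger=\tfrac{i}{2}(v_1+e^{-i\theta}v_2) = \tfrac{i}{2}(1+e^{-i(\theta+\phi)})v_1$ and taking the Hermitian adjoint reproduces $B=\tfrac{1}{2i}(1,e^{i\theta})\widetilde A$. For sufficiency I would read the implications in reverse: a direct substitution of the claimed $A$ and $B$ into the two algebraic equations verifies them, and together they imply \eqref{eq:2currentbalance}. The main obstacle I expect is the linear-algebra bookkeeping required to extract $v_1,v_2\in\mathrm{span}(B^\dagger)$ from the anti-Hermitian matrix constraint, since it is precisely this step where the rigidity of the admissible IBC class arises and where the degenerate case must be handled on its own.
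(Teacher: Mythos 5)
Your proposal is correct, and its first half coincides with the paper's proof: both eliminate $\psiii_{-+}$ via the IBC, treat \eqref{eq:2currentbalance} as a pointwise algebraic identity in the remaining free variables, and match coefficients to obtain exactly two independent conditions --- your vector equation $v_1+e^{-i\theta}v_2+2iB^\dagger=0$ is the transpose of the paper's \eqref{eq:M}, and your quadratic-form condition $v_1B-B^\dagger v_1^\dagger+2iB^\dagger B=0$ is the paper's second condition written in terms of the columns $v_1,v_2$ of $\widetilde{A}^\dagger$. Where you genuinely diverge is the rigidity step. The paper substitutes \eqref{eq:M} into the second condition to obtain the single compact identity $\widetilde{A}^\dagger\,\mathrm{diag}(1,-1)\,\widetilde{A}=0$ and reads off $|a|=|c|$, $|b|=|d|$, $a^*b=c^*d$ entrywise, which handles all cases (including $B=0$) uniformly. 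You instead eliminate $v_1$, decompose $v_2=\alpha B^\dagger+w$ orthogonally to $\ker B$, and force $w=0$ and $\Im(e^{-i\theta}\alpha)=-1$; I checked that this does give $|v_2|=|v_1|$ (since $|\alpha|^2=u^2+1=|e^{-i\theta}\alpha+2i|^2$ when $\Im(e^{-i\theta}\alpha)=-1$) and hence $v_2=e^{-i\phi}v_1$, i.e.\ \eqref{eq:ntilde}, at the cost of the separate $B=0$ case which you correctly identify as collapsing to $v_2=-e^{i\theta}v_1$. Your route is more geometric and makes visible \emph{why} the two rows of $\widetilde{A}$ must be proportional (both $v_1,v_2$ lie in $\mathrm{span}(B^\dagger)$), while the paper's is shorter and case-free. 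One point in your favor: you explicitly argue that the columns of $A$ paired with $\psiii_{--}$ and $\psiii_{++}$ must vanish because those boundary values are unconstrained by the IBC and absent from the right-hand side of \eqref{eq:2currentbalance}; the paper takes the block form \eqref{eq:N} for granted when it passes to the three-component vector $\psitilde$, so your version closes a small gap in the published argument.
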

The proof is given in Sec. \ref{sec:proof2ibc}.

\paragraph{Remarks.}
\begin{enumerate}
	\item The IBC \eqref{eq:2modelIBC} describes the interaction effect of the annihilation of two particles into one if they meet (and conversely the creation of two particles out of one). It seems reasonable that the interaction between any two particles should be of the same form, regardless of which two particles $k, k+1$ meet, of the particle number of the sectors that are considered and of the total number of sectors. Using this principle, the form of the matrices $A$ and $B$ from theorem \ref{thm:2ibc} will be a crucial building block for a model with $N$ sectors.
\item \textit{Spin index notation.} It is helpful to write the matrices $A$ and $B$ using spin index notation. We have:
\be
	B = B^s~~~ \text{and}~~~A = A_s^{t u},
\ee
where $s,t,u = \pm 1$. An upper index means that the respective matrix will be contracted with a respective lower spin index of $\psi$. A lower index indicates that the matrix times $\psi$ will have that spin index in addition to the spin indices which do not get summed over.
\end{enumerate}

\section{The case of $N>2$ sectors of Fock space} \label{sec:msec}

For simplicity, we treat the massless case first as the proof is more direct and transparent in this case. The case $m>0$ is addressed in Section \ref{sec:massive}.

\subsection{The massless case} \label{sec:massless}

We now generalize both the dynamics as well as the existence and uniqueness results to the case of $N>2$ sectors of Fock space. The wave function then has the form $\psi = \left( \psi^{(1)},...,\psi^{(N)} \right)$, where each $\psi^{(n)}$ is a map
\begin{equation}
\psi^{(n)}: \mathscr{S}_1^{(n)} \to \C^{2^n}, ~~~(t_1,z_1;...;t_n,z_n) \mapsto  \left( \begin{array}{c}
\psi^{(n)}_{-...--}(t_1,z_1;...;t_n,z_n) \\ \psi^{(n)}_{-...-+}(t_1,z_1;...;t_n,z_n)  \\ \psi^{(n)}_{-...+-}(t_1,z_1;...;t_n,z_n)  \\ \vdots \\\psi^{(n)}_{+...++}(t_1,z_1;...;t_n,z_n) 
\end{array} \right).
\end{equation}
For readability, we shall sometimes use semicolons to divide space-time arguments associated with different particle indices.  
As evolution equations, we consider multi-time Dirac equations on $\mathscr{S}_1^{(n)}$ (here in Hamiltonian form) 
\be \begin{split}
i \partial_{t_k} \psi^{(n)} = H^{\rm Dirac}_k \psi^{(n)} + f^{(n)}_k	 ~~~ \Longleftrightarrow ~~~	i(\partial_{t_k} - s_k \partial_{z_k}) \psi^{(n)}_{s_1...s_n} = f^{(n)}_{k,s_1...s_n},
		\\ n=1,...,N;~ k=1,...,n;~s_1,...,s_n = \pm 1.
		\label{eq:multitimeeqwithsources} \end{split}
	\ee
According to the remark at the end of the previous section, the source terms are given by:
	\be
		f_{k,s_1...s_n}^{(n)}(x_1,...,x_n) = (-1)^k \sum_{t,u = \pm 1}A_{s_k}^{tu} \psi^{(n+1)}_{s_1...s_{k-1}\,  t \, u \, s_{k+1}...s_n}(x_1,...,x_k,x_k,x_{k+1},...,x_n)
	\label{eq:sourceterms}
	\ee
	for $n=1,...,N-1$, and
	\be
 		f_{k,s_1...s_N}^{(N)} \equiv 0~~\forall k.
 		\label{eq:mthsourceterm}
 	\ee
In addition, there is an interior-boundary condition. For some $\theta \in [0,2\pi)$, 
\begin{align}
		&\left( \psi^{(n+1)}_{s_1...s_{k-1} -+ s_{k+1} ... s_n}- e^{i\theta}\psi^{(n+1)}_{s_1...s_{k-1} +- s_{k+1} ... s_n} \right) (x_1,...,x_k,x_k, x_{k+1},...,x_n)\nonumber\\
	 &=~ \sum_s B^s \psi^{(n)}_{s_1 ... s_{k-1} s \, s_{k+1} ... s_{n}}(x_1,...,x_n)
	 \label{eq:mgeneralibcs}
	\end{align}
	for all $n=1,...,N-1$, all $k = 1,...,n$, all spin components $s_1,...,s_{k-1},s_{k+1},...,s_n = \pm 1$ and all $x_1,...,x_n \in \overline{\mathscr{S}}_1^{(n)}$. $A, B$ are the same matrices as for $N=2$ (see Thm. \ref{thm:2ibc}).

Initial data are given by
	\begin{equation} \label{eq:minitial}
	\psi^{(n)}|_{t_1=...=t_n=0}=\psi_0^{(n)} \in C^1_b( Z_n, \C^{2^n}),
	\end{equation}
 where
\begin{equation} \label{eq:Zn}
Z_n := \{ (z_1,...,z_n) \in \R^n| z_1<...<z_n  \}
\end{equation}	
and, as before, $C^1_b$ denotes the set of continuously differentiable functions which are bounded and have bounded partial derivatives. 

The initial data have to satisfy the following compatibility conditions (which are the analogs of Eqs.\ \eqref{eq:IBCforinitialvalues} and \eqref{eq:IBCforinitialvalues2} for general $N$).
\begin{align}
	&\left( \psi^{(n+1)}_{0,s_1...s_{k-1} -+ s_{k+1} ... s_n}- e^{i\theta}\psi^{(n+1)}_{0,s_1...s_{k-1} +- s_{k+1} ... s_n} \right) (z_1,...,z_k,z_k, z_{k+1},...,z_n)
	\nonumber\\ \label{eq:IBCforinitialvaluesN}
	 &=~ \sum_s B^s \psi^{(n)}_{0,s_1 ... s_{k-1} s \, s_{k+1} ... s_{n}}(z_1,...,z_n), \\ 
&	\sum_s  B^s \left( H^{\rm Dirac}_k \psi_{s_1...s_{k-1}ss_{k+1}...s_n}^{(n)} + f^{(n)}_{s_1...s_{k-1}ss_{k+1}...s_n} \right)(0,z_1;...0,z_{k-1};0,z;0,z_{k+2}...;0,z_{n+1})    \nonumber
\\ &  =i  \left( \partial_{z_{k}} - \partial_{z_{k+1}}\right)  \left.\left( \psi^{(n+1)}_{0,s_1...s_{k-1} -+ s_{k+1} ... s_n} + e^{i \theta} \psi^{(n+1)}_{0,s_1...s_{k-1} +- s_{k+1} ... s_n} \right)(z_1,...,z_{n+1}) \right|_{z_k=z_{k+1}=z} \label{eq:IBCforinitialvaluesN2}
\end{align}
The first condition ensures compatibility of initial data and IBC; the second condition is needed to obtain a $C^1$-solution.

We consider the following function spaces for $\psi^{(n)}$:
\begin{equation} \label{eq:banachn}
\Banach_n := C_b ^1 ( \{(t_1,z_1;...;t_n,z_n) \in \mathscr{S}^{(n)}_1 : t_k \in [0,T], (z_1,...,z_n) \in Z_n\}, \C^{2^n}),
\end{equation}
In contrast to $N=2$, we only admit positive times to avoid technical complications.

Accordingly, $\psi$ is an element of 
\be
	\Banach ~:=~ \bigoplus_{n=1}^N \Banach_n.
\ee

Our main results are the following theorems, the first one about the existence and uniqueness of solutions and the second one about probability conservation.

\begin{theorem}[Existence and uniqueness of solutions.] \label{thm:mmain}
Let $T>0$. Then for all initial data given by \eqref{eq:minitial} with \eqref{eq:IBCforinitialvaluesN} and \eqref{eq:IBCforinitialvaluesN2} and for all $0 \leq t_1,...,t_N \leq T$,  there exists a unique solution $\psi \in \Banach$ of the initial boundary value problem \eqref{eq:multitimeeqwithsources}--\eqref{eq:mgeneralibcs}.
\end{theorem}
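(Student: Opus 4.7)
The plan is to reformulate the initial-boundary-value problem as a fixed-point equation on the Banach space $\Banach$, exploiting the fact that in the massless case each spin component satisfies a pure transport equation $i(\partial_{t_k} - s_k \partial_{z_k})\psi^{(n)}_{s_1\cdots s_n} = f^{(n)}_{k,s_1\cdots s_n}$ with characteristic velocity $-s_k$ in the $k$-th particle coordinate. Since the $n$ transport operators commute, they can be integrated sequentially to yield a Duhamel-type representation of $\psi^{(n)}_{s_1\cdots s_n}(x_1,\ldots,x_n)$ as a sum of three pieces: an initial-data contribution at $t_k = 0$, a line integral of $f^{(n)}_{k,s_1\cdots s_n}$ along characteristics (linear in $\psi^{(n+1)}$ via \eqref{eq:sourceterms}), and a boundary contribution on the coincidence faces $z_k = z_{k+1}$, which the IBC \eqref{eq:mgeneralibcs} expresses linearly in terms of $\psi^{(n-1)}$.

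Concretely I would define a map $\mathcal{T}: \Banach \to \Banach$ by sending $\Psi$ to the tuple $\mathcal{T}\Psi$ whose $n$-th component is constructed from the characteristic formula with source $f^{(n)}_k[\Psi^{(n+1)}]$ and boundary data supplied from $\Psi^{(n-1)}$ through the IBC. The top sector $n=N$ is consistent because $f^{(N)} \equiv 0$, so $(\mathcal{T}\Psi)^{(N)}$ uses only initial data and the IBC applied to $\Psi^{(N-1)}$; the bottom sector $n=1$ needs no IBC and only picks up the source involving $\Psi^{(2)}$. Both the source and the boundary contributions are linear in $\Psi$ and, by standard estimates along characteristics, can be bounded in the sup-norm by $C T' \|\Psi\|_{\Banach}$ on any time slab of width $T'$. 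This makes $\mathcal{T}$ a contraction for $T'$ small enough, giving a unique fixed point by the Banach fixed-point theorem. Global existence on $[0,T]$ then follows by iterating the construction over consecutive slabs, using the solution at the end of one as the initial condition for the next; the required compatibility conditions propagate along the dynamics and so survive each restart.

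The main technical difficulty is checking that the IBC supplies exactly the right amount of boundary data to render the problem well-posed at each coincidence face. Since the relevant piece of boundary lies in $z_k < z_{k+1}$, only those spin components of $\psi^{(n+1)}$ with $(s_k, s_{k+1}) = (+,-)$ have backward characteristics that converge toward the face $z_k = z_{k+1}$ and therefore require boundary input, while the components with $(s_k, s_{k+1}) = (-,+)$ emerge from the interior and are already determined by interior/initial data. The IBC \eqref{eq:mgeneralibcs} fixes precisely the combination $\psi^{(n+1)}_{\ldots -+ \ldots} - e^{i\theta} \psi^{(n+1)}_{\ldots +- \ldots}$, which can be solved for the incoming component $\psi^{(n+1)}_{\ldots +- \ldots}$ in terms of $\psi^{(n+1)}_{\ldots -+ \ldots}$ and $\psi^{(n)}$: this is the boundary data needed to close the characteristic representation. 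The compatibility conditions \eqref{eq:IBCforinitialvaluesN} and \eqref{eq:IBCforinitialvaluesN2} then ensure that the boundary data so obtained matches the initial data and its $z$-derivatives continuously across the corners where the plane $t=0$ meets the coincidence faces; this matching is exactly what upgrades $\mathcal{T}\Psi$ from merely continuous to $C^1_b$, so that it lies in $\Banach_n$.

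To obtain regularity I would differentiate the characteristic representation once and show that derivatives of $\mathcal{T}\Psi$ are controlled in terms of derivatives of $\Psi$ plus bounded contributions from the fixed initial data, so that the contraction also holds in the $C^1_b$-norm and the fixed point inherits the required smoothness. The structural identities for $A$ and $B$ from Theorem \ref{thm:2ibc} carry over unchanged to every consecutive pair $(n, n+1)$ in the tower, so the analysis effectively reduces to applying the $N=2$ argument at each rung; what is genuinely new for $N > 2$ is mostly the bookkeeping of the coupled system of $N$ integral equations, rather than any novel analytical obstacle.
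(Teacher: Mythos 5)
Your overall strategy---integrate along characteristics, feed the IBC in as boundary data for the incoming spin components, and close the loop with a Banach fixed-point argument on the tower of sectors---is the same as the paper's. Your identification of the $(+,-)$ components as the ones whose backward characteristics run into the face $z_k=z_{k+1}$ and therefore need boundary input, and of the compatibility conditions \eqref{eq:IBCforinitialvaluesN}, \eqref{eq:IBCforinitialvaluesN2} as what secures the $C^1_b$ property at the corner where $\{t=0\}$ meets a coincidence face, is correct. Replacing the paper's exponentially weighted $\gamma$-norms (contraction on all of $[0,T]$ at once) by a short-time contraction plus iteration over slabs is a harmless variation, provided you add the one-line check that the compatibility conditions at a restart time follow from the IBC and the equations of motion on the previous slab.

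There is, however, a genuine gap: you never address the fact that the multi-time system is overdetermined. Each $\psi^{(n)}$ must satisfy $n$ distinct equations $i(\partial_{t_k}-s_k\partial_{z_k})\psi^{(n)}_{s_1\cdots s_n}=f^{(n)}_{k,s_1\cdots s_n}$ with a \emph{different} right-hand side for each $k$, and a Duhamel formula obtained by integrating ``sequentially'' along one chosen path $(0,\dots,0)\to(t_1,0,\dots,0)\to\cdots\to(t_1,\dots,t_n)$ in the time variables solves, a priori, only part of this system. That it solves all $n$ equations requires the integrability (consistency) conditions $i(\partial_{t_k}-s_k\partial_{z_k})f^{(n)}_l=i(\partial_{t_l}-s_l\partial_{z_l})f^{(n)}_k$; the paper spends a telescoping-sum computation on exactly this point. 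Worse, these conditions are \emph{not} automatic for the fixed-point map applied to an arbitrary $\Psi\in\Banach$: for sector $n$ they hold only if $\Psi^{(n+1)}$ itself already satisfies its own multi-time equations (Lemma \ref{thm:nsectorsconsistency}). So even after you have a fixed point, you must run a descending induction---$n=N$ is trivial because $f^{(N)}\equiv 0$, which yields consistency and hence solvability for $n=N-1$, and so on down to $n=1$ (Lemma \ref{thm:mlemmaF})---to conclude that the fixed point solves the full system rather than a proper subsystem. A second, smaller omission: a single backward characteristic can cross several coincidence faces in succession, so the ``boundary contribution'' is not one term but an alternating chain of characteristic integrations and IBC reflections; the paper organizes this by an induction over the number of collisions. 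Your three-piece decomposition would need to be upgraded to such an iterated formula before the contraction estimate can even be stated.
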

The proof is given in Sec. \ref{sec:proofmmain}.

\begin{theorem}[Local probability conservation.] \label{thm:nibc}
	Let $\theta \in [0,2\pi)$ and let $A$ and $B$ be the matrices from Thm. \ref{thm:2ibc}. Then the IBCs \eqref{eq:mgeneralibcs} ensure local probability conservation in the sense of \eqref{eq:localcurrentcons}.
\end{theorem}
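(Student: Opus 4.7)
The plan is to verify the two parts of the equivalent condition \eqref{eq:currentcondition} by reducing each to the two-sector analysis of Thm.~\ref{thm:2ibc}. The first part, $\partial_{x_k^{\mu_k}} j^{\mu_1 \cdots \mu_k \cdots \mu_N} = 0$ on $\spacelike_1^{(N)}$ for all $k$, follows immediately from \eqref{eq:mthsourceterm}: since $f^{(N)}_k \equiv 0$, the top-sector wave function $\psi^{(N)}$ evolves by the free multi-time Dirac equations, and the standard computation---contracting the $k$-th Dirac equation with $\overline{\psi}^{(N)}$ and adding the adjoint---yields the desired current conservation, with the gamma matrices of spectator particles passing through undisturbed.

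For the second equation of \eqref{eq:currentcondition}, fix $n \in \{1,\ldots,N-1\}$ and $k \in \{1,\ldots,n\}$. Computing the divergence in the $k$-th argument of $j^{\mu_1 \cdots \mu_n}$ via \eqref{eq:multitimeeqwithsources} and its adjoint, and substituting the source term \eqref{eq:sourceterms}, yields
\be
(-1)^k\,\partial_{x_k^{\mu_k}} j^{\mu_1 \cdots \mu_k \cdots \mu_n}(x_1,\ldots,x_n)
= -2\,\mathrm{Im}\Big(\psi^{(n)\dagger}\,G^{\widehat{k}}\,A_k\,\psi^{(n+1)}(x_1,\ldots,x_k,x_k,\ldots,x_n)\Big),
\ee
where $G^{\widehat{k}} := \Gamma^{\mu_1}_1 \cdots \widehat{\Gamma^{\mu_k}_k} \cdots \Gamma^{\mu_n}_n$ is the spectator matrix with $\Gamma^\mu := \gamma^0\gamma^\mu$ (so $\Gamma^0 = \id$ and $\Gamma^1 = \sigma_3$ in the chosen representation), and $A_k$ acts on the spin indices $(s_k; t, u)$ according to \eqref{eq:sourceterms}; the factor $(-1)^k$ from \eqref{eq:sourceterms} cancels the one on the right-hand side of \eqref{eq:currentcondition}. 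In parallel, evaluating $\varepsilon_{\rho\sigma} j^{\mu_1 \cdots \rho \sigma \cdots \mu_n}$ at the coincidence $(x_1,\ldots,x_k,x_k,\ldots,x_n)$ using $\varepsilon_{\rho\sigma} \Gamma^\rho_k \Gamma^\sigma_{k+1} = \Gamma^0_k\Gamma^1_{k+1}-\Gamma^1_k\Gamma^0_{k+1}$ produces a bilinear in $\psi^{(n+1)}|_{\mathrm{coinc}}$ proportional to the spin-differences $|\psi^{(n+1)}_{\cdots+-\cdots}|^2 - |\psi^{(n+1)}_{\cdots-+\cdots}|^2$ that already appeared in the two-sector flux balance, sandwiched by a spectator factor diagonal in the $n-1$ spectator spin indices with scalar entries $\prod_{j \neq k} \lambda^{\mu_j}(\tilde s_j)$ ($\lambda^0 \equiv 1$, $\lambda^1(s) = s$).

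The key observation is that the spectator Lorentz data $\{\mu_j\}_{j\neq k}$ and spectator positions $\{x_j\}_{j\neq k}$ are arbitrary and multiply both sides of the identity by the same factor. Fixing these together with spectator spin indices $\{\tilde s_j\}_{j \neq k}$, the identity reduces to the two-sector balance equation \eqref{eq:2currentbalance} for the ``slices''
\be
\phi^{(1)}_{s_k}(x_k) := \psi^{(n)}_{\tilde s_1 \cdots s_k \cdots \tilde s_n}(x_1,\ldots,x_n),\quad
\phi^{(2)}_{tu}(x_k,x_k) := \psi^{(n+1)}_{\tilde s_1 \cdots t\,u \cdots \tilde s_n}(x_1,\ldots,x_k,x_k,\ldots,x_n),
\ee
with the same matrices $A, B$ and phase $\theta$ acting only on the active spin indices. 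Under this slicing the IBC \eqref{eq:mgeneralibcs} reduces to the two-sector IBC \eqref{eq:2modelIBC} for $\phi^{(1)},\phi^{(2)}$, so Thm.~\ref{thm:2ibc} delivers the required balance for every spectator configuration; summing over spectator spin indices then restores \eqref{eq:currentcondition}. The main obstacle is the careful bookkeeping of the spin-index relabellings at the coincidence---matching the spectator indexing of $\psi^{(n+1)}|_{\mathrm{coinc}}$ with that of $\psi^{(n)}$---together with the representation-specific diagonal structure of the $\Gamma^\mu$; once these are in place, the reduction to Thm.~\ref{thm:2ibc} is immediate.
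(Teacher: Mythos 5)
Your proposal follows essentially the same route as the paper's proof: the $n=N$ case is disposed of by free current conservation, and for $n<N$ both sides of \eqref{eq:currentcondition} are expanded in the diagonal spin basis, the common spectator factors $\prod_{j\neq k}(-s_j)^{\mu_j}$ are stripped off, and the resulting identity for each fixed spectator configuration is recognized as the two-sector balance \eqref{eq:2currentbalance}, which the IBC \eqref{eq:mgeneralibcs} guarantees by the computation of Thm.~\ref{thm:2ibc}. The only blemishes are sign-convention slips (your $\lambda^1(s)$ should be $-s$ in the paper's component labelling, and the overall sign in your divergence formula), which do not affect the structure of the argument since these factors appear identically on both sides.
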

The proof can be found in Sec. \ref{sec:proofnibc}.

Next, we establish a relation between the equations of our model and the usual formulation of QFTs via creation and annihilation operators.

\subsection{The massive case $m \neq 0$} \label{sec:massive}

Theorem \ref{thm:mmain} is proven by a fixed point argument. A generalization of this argument can also be applied to yield the same statement in the case of a non-zero mass $m > 0$.\footnote{We are grateful to an anonymous referee for pointing this out to us.} Probability conservation (Thm. \ref{thm:nibc}) trivially remains true as the mass term does not influence functional form of the tensor currents $j^{\mu_1...\mu_N}$.

\begin{theorem}[Existence and uniqueness in the massive case.] \label{thm:massive}
Let $m \in \R$ and $T>0$. Furthermore, let initial data given be given by \eqref{eq:minitial} with \eqref{eq:IBCforinitialvaluesN} and \eqref{eq:IBCforinitialvaluesN2}. Then, for all $0 \leq t_1,...,t_N \leq T$, there exists a unique solution $\psi \in \mathscr{B}$ of the initial boundary value problem \eqref{eq:multitimeeqwithsources},\eqref{eq:mgeneralibcs} with source terms 
	\be \begin{split}
		f_{k,s_1...s_n}^{(n)}(x_1,...,x_n) ~=~ & (-1)^k \sum_{t,u = \pm 1}A_{s_k}^{tu} \psi^{(n+1)}_{s_1...s_{k-1}\,  t \, u \, s_{k+1}...s_n}(x_1,...,x_k,x_k,x_{k+1},...,x_n) \\ & + m \left(\gamma^0_k \psi^{(n)}\right)_{s_1...s_n}
	\end{split} \label{eq:sourcetermswithmass}
	\ee
	for $n=1,...,N-1$, and
	\be
 		f_{k}^{(N)} = m \gamma^0_k \psi^{(N)}.
 		\label{eq:mthsourcetermwithmass}
 	\ee
\end{theorem}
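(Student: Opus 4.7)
The plan is to extend the fixed-point argument that proves Theorem \ref{thm:mmain} to accommodate the additional mass term, using the fact that the characteristic structure of the equations is entirely unaffected. In the chosen representation, the operator $H_k^{\rm Dirac}$ splits as a transport part $-i s_k\partial_{z_k}$ (on the $(s_1,\ldots,s_n)$-component) plus the mass contribution $m\gamma^0_k = m\sigma_1$ acting on the $k$-th spin slot, which simply swaps $s_k \leftrightarrow -s_k$. Thus the left-hand side of the multi-time equations is still the diagonal transport operator $i(\partial_{t_k} - s_k\partial_{z_k})$, and the only new feature in \eqref{eq:sourcetermswithmass} is a bounded linear term that couples $\psi^{(n)}_{s_1\ldots s_n}$ to $\psi^{(n)}_{s_1\ldots(-s_k)\ldots s_n}$ within the same sector.

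First, I would integrate the system along the characteristic $\tau \mapsto (\ldots; \tau, z_k - s_k(t_k-\tau); \ldots)$ in the $k$-th time variable from $\tau = 0$ to $\tau = t_k$, exactly as in the proof of Theorem \ref{thm:mmain}. This transforms the PDE system together with the IBC \eqref{eq:mgeneralibcs} into a coupled integral equation of schematic form $\psi = \mathcal{T}_0 + \mathcal{T}_{\rm IBC}[\psi] + \mathcal{T}_{\rm mass}[\psi]$, where $\mathcal{T}_0$ encodes the initial data, $\mathcal{T}_{\rm IBC}$ is the IBC-induced cascade from higher sectors (already controlled in the massless proof), and $\mathcal{T}_{\rm mass}$ is a bounded linear Volterra-type operator on $\mathscr{B}$. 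A qualitatively new feature compared to the massless case is that $\psi^{(N)}$ no longer satisfies a pure transport equation but its own fixed-point equation, so the downward triangular structure of the massless argument (top sector solved first, then each lower sector in turn) is replaced by a single simultaneous fixed-point problem on all $N$ sectors at once.

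To close the argument on the full interval $[0,T]$ without iterating in small time steps, I would equip $\mathscr{B}$ with the equivalent weighted norm $\|\psi\|_\lambda := \sup \bigl\{ e^{-\lambda(t_1+\cdots+t_n)} \, |\partial^\alpha \psi^{(n)}(t_1,z_1;\ldots;t_n,z_n)| \bigr\}$, where the supremum ranges over $n \in \{1,\ldots,N\}$, multi-indices $|\alpha|\le 1$, and admissible arguments. Both the IBC-cascade operator and the Volterra mass operator have Lipschitz constants that scale like $C/\lambda$ under this norm, so $\mathcal{T}_{\rm IBC} + \mathcal{T}_{\rm mass}$ becomes a strict contraction for $\lambda$ sufficiently large, and the Banach fixed-point theorem yields a unique $\psi \in \mathscr{B}$. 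The verification that this fixed point is genuinely $C^1_b$ and satisfies the original PDE system, IBCs, and initial data uses the compatibility conditions \eqref{eq:IBCforinitialvaluesN} and \eqref{eq:IBCforinitialvaluesN2} in exactly the same way as in the massless proof; these conditions are already formulated in terms of the full source term $f^{(n)}$, so they absorb the mass contribution without modification. The main obstacle, the loss of the sector-by-sector cascade structure caused by the mass-induced self-coupling of each $\psi^{(n)}$, is resolved precisely by this weighted-norm device.
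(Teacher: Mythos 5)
Your proposal is correct and takes essentially the same route as the paper: both treat the mass term as an additional bounded perturbation inside the same exponentially weighted fixed-point scheme, whose Lipschitz constant scales like $m/\gamma$, so that Banach's theorem applies once $\gamma$ is chosen large enough, with the compatibility conditions \eqref{eq:IBCforinitialvaluesN}, \eqref{eq:IBCforinitialvaluesN2} absorbing the mass contribution unchanged. The only cosmetic difference is that the paper preserves the sector-wise definition of the map by evaluating the mass term at the \emph{input} $v^{(n)}$ of $\widetilde{F}$ (rather than recasting it as a Volterra operator on the unknown), so it never needs to describe the argument as abandoning the cascade structure.
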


The proof is given in Sec.\ \ref{sec:proofmassive}.

Next, we establish a relation between the equations of our model and the usual formulation of QFTs via creation and annihilation operators.

\section{Relation to creation/annihilation operators} \label{sec:mseccreat}

Usually, one introduces QFTs using creation and annihilation operators. In the previous sections, we have chosen a different way. It is, therefore, important to connect the two approaches. In order to do this, we now consider the single-time version of our model, so that the multi-time wave function $\psi$ reduces to the single-time wave function $\varphi$ in the Schr\"odinger picture of QFT. To obtain $\varphi$ from $\psi$ when $\psi$ is defined only on $\mathscr{S}_1$, we need to combine \eqref{eq:renormalizedwf} and \eqref{eq:singlemulti}. For $(t,z_1,...,t,z_N) \in \mathscr{S}_\sigma^{(n)}$, we have:
\begin{equation} \label{eq:comparephipsi}
\varphi^{(n)}_{s_1 ... s_n}(t;z_1,...,z_n) = \frac{\mathrm{sgn}( \sigma)}{\sqrt{n!}} \, \psi^{(n)}_{s_{\sigma(1)} ... s_{\sigma(n)}}(t, z_{\sigma(1)},...,t, z_{\sigma(n)}).
\end{equation}
This allows us to identify the interaction part of the Hamiltonian for the single-time version of our model. The annihilation terms in the $n$-th sector are obtained as the sum over the source terms in \eqref{eq:sourceterms}, i.e.\
\begin{align}
	&\left(H_{\rm int}^{\rm ann} \varphi \right)_{s_1...s_n}^{(n)}(t;z_1,...,z_n)\nonumber\\
&= \sqrt{n+1} \sum_{k=1}^n \sum_{t,u=\pm1} (-1)^k  A^{tu}_{s_k} \varphi^{(n+1)}_{s_1...s_{k-1}tus_{k+1}...s_n}(t;z_1,...,z_k,z_k,z_{k+1},...,z_n),
\end{align}
for $1 \leq n \leq N-1$. The factor of $\sqrt{n+1}$ comes from \eqref{eq:comparephipsi}. 

In the particle-position representation, the creation and annihilation operators are given as \cite{schweber:1961}
\begin{align}
	(a_r(z) \varphi)^{(n)}_{s_1...s_n}(t; z_1,...,z_n) &=  \, \sqrt{n+1} \ \varphi^{(n+1)}_{r \, s_1...s_n}(t;z,z_1,...,z_n), \label{eq:creator}\\
	(a_r^\dagger(z) \varphi)^{(n)}_{s_1...s_n}(t; z_1,...,z_n) &= \frac{1}{\sqrt{n}} \sum_{k=1}^{n} (-1)^{k+1} \delta_{s_k \, r} \delta(z-z_k) \varphi^{(n-1)}_{s_1...\widehat{s}_k ... s_n}(t;z_1,..., \widehat{z}_k,...,z_n). \label{eq:annihilator}
\end{align}
This allows us to rewrite $H_{\rm int}^{\rm ann}$ as follows:
\be \label{eq:annihilationint}
	\left( H_{\rm int}^{\rm ann} \varphi \right)^{(n)} =  \left( \int d z \sum_{rst}  A_r^{st} a^\dagger_r(z)  a_s(z) a_t(z) \varphi \right)^{(n)}, \quad 1 \leq n \leq N-1.
\ee

Now, the creation part $H_{\rm int}^{\rm cre}$ of the Hamiltonian cannot be read off straightforwardly from \eqref{eq:sourceterms}. (As we will see, the reason is that it is not a well-defined quantity as it contains $\delta$-functions.) We can, however, obtain $H_{\rm int}^{\rm cre}$ by taking the adjoint of $H_{\rm int}^{\rm ann}$:
\be
	H_{\rm int}^{\rm cre} = (H_{\rm int}^{\rm ann})^\dagger = \int dz \sum_{rst} (A_r^{st})^* a^\dagger_t(z) a^\dagger_s(z) a_r(z).
\ee
The action of $H_{\rm int}^{\rm cre}$ on wave functions is 
\be \label{eq:trilala} \begin{split}
  ( H^{\rm cre}_{\rm int} \varphi)^{(n)}_{s_1...s_n}(t; z_1,...,z_n) = \sum_{\substack{ j,k = 1 \\ j \neq k}}^n \sum_r (-1)^{j+1} \frac{1}{\sqrt{n}} (A^{s_k s_j}_r)^*
   \delta(z_j-z_k) \times \\ \times \varphi^{(n-1)}_{s_1...\widehat{s_j}...r...s_n}(t;z_1,...,\widehat{z_j},...,z_k,...,z_n). \end{split}
\ee
This is indeed not well-defined because the $\delta$-distribution is not an element of $L^2$.

We shall now show that our model of Sec.\ \ref{sec:msec} is a rigorous (and multi-time) version of the single-time model with Hamiltonian $H = H^{\rm free} + H_{\rm int}$. Here, $H^{\rm free}$ is the free Dirac Hamiltonian on the Fock space of a variable number of $1 \leq n \leq N$ particles and the interaction Hamiltonian is given by $H_{\rm int} = H_{\rm int}^{\rm ann} + H_{\rm int}^{\rm cre}$, i.e.\
\be
	H_{\rm int} = \int dz \sum_{rst} \left( A_r^{st} a^\dagger_r(z) a_s(z) a_t(z) + (A_r^{st})^* a^\dagger_t(z) a^\dagger_s(z) a_r(z) \right).
\label{eq:hint}
\ee
We have already seen that the annihilation parts of the two models agree. It remains to study the creation part. To treat this part, we now show at the example $n=2$ that $H^{\rm cre}_{\rm int}$ gives rise to the IBC when suitably interpreted (for $n>2$ one proceeds analogously).  For $n=2$, \eqref{eq:trilala} becomes: 
\be
	 ( H^{\rm cre}_{\rm int} \varphi)^{(2)}_{s_1s_2}(t; z_1,z_2) =\frac{1}{\sqrt{2}} \sum_{r} \left( -(A^{s_1 s_2}_r)^* + (A^{s_2 s_1}_r)^* \right) \delta(z_1-z_2) \varphi^{(1)}_{r}(z_1).
\ee
Considering \eqref{eq:N}, \eqref{eq:ntilde}, one can see that the creation term vanishes for $s_1=s_2$, and that $ (A^{+-}_r)^* = e^{i \phi}  (A^{-+}_r)^*$. Anti-symmetry of $(H^{\rm cre}_{\rm int} \varphi)^{(2)}_{s_1s_2}(t; z_1,z_2)$ dictates $\phi=\pi$. We specialize to $s_1=-1, s_2=+1$; the reversed case leads to the same conclusions. Then:
\be 
	 ( H^{\rm cre}_{\rm int} \varphi)^{(2)}_{-+}(t; z_1,z_2) =-\sqrt{2} \sum_{r}  (A^{-+}_r)^* \delta(z_1-z_2) \varphi^{(1)}_{r}(t;z_1).
\ee
In order to give a proper interpretation to the $\delta$-function, we will integrate the corresponding Dirac equation 
\begin{equation}
 i \partial_t \varphi^{(2)} = H^{\mathrm{free}} \varphi^{(2)} + (H_{\rm int} \varphi)^{(2)}. \end{equation}
in a small neighborhood of the set where $z_1=z_2$. The component with $s_1=-1,s_2=+1$ reads
\begin{equation} \label{eq:diraccomponent}
i \partial_t \varphi_{-+}^{(2)}(t;z_1,z_2) =   \left( -i \partial_{z_1} + i \partial_{z_2} \right) \varphi^{(2)}_{-+}(t;z_1,z_2) + (H_{\rm int} \varphi)_{-+}^{(2)}(t;z_1,z_2).
\end{equation}
It is helpful to use relative coordinates $z = z_1 - z_2, Z = \tfrac{1}{2}(z_1+z_2)$ because we expect by \eqref{eq:2modelIBC} a jump discontinuity of $\varphi^{(2)}$ exactly at $z=0$. Note that $-i \partial_{z_1} + i \partial_{z_2}  = -2i \partial_z$. We integrate \eqref{eq:diraccomponent} over $dz$ from $- \varepsilon$ to $+ \varepsilon$, and let $\varepsilon$ go to zero. All terms vanish except the $\delta$-function and the derivatives w.r.t.\ $z$, which means 
\be
0 = \lim_{\varepsilon \to 0} \int_{ -\varepsilon}^{ \varepsilon} dz \left( -2i \partial_{z} \varphi^{(2)}_{-+}(t,Z+z/2,Z-z/2) + ( H^{\rm cre}_{\rm int} \varphi)^{(2)}_{-+}(t,Z+z/2,Z-z/2)\right).
\ee
For $s_1=-1,s_2=+1$ this becomes, omitting the common time variable $t$,
\be 
	 -i \lim_{\varepsilon \to 0} \left( \varphi^{(2)}_{-+}(Z+\varepsilon,Z-\varepsilon)- \varphi^{(2)}_{-+}(		
	 Z-\varepsilon,Z+\varepsilon) \right) = -\frac{1}{\sqrt{2}}  \sum_{r} (A^{-+}_r)^* \varphi^{(1)}_{r}(Z).
\ee
Using the anti-symmetry of $\varphi$, we arrive at 
\be  \label{eq:ibcfromdelta}
	 \lim_{\varepsilon \to 0} \left( \varphi^{(2)}_{-+}(Z-\varepsilon,Z+\varepsilon)+ \varphi^{(2)}_{+-}(Z-\varepsilon,Z+\varepsilon) \right) =  -\frac{1}{\sqrt{2} i} \left( w_1 \, \varphi^{(1)}_-(Z) + w_2 \,  \varphi^{(1)}_+(Z) \right).
	 \ee
We now compare this with the IBC \eqref{eq:2modelIBC} for $\phi=\pi$. Setting in addition $\theta = \pi$, we have:
\begin{equation}
\begin{split}
 \lim_{\varepsilon \to 0} \left( \varphi^{(2)}_{-+}(Z-\varepsilon,Z+\varepsilon)+ \varphi^{(2)}_{+-}(Z-\varepsilon,Z+\varepsilon) \right) &= \frac{1}{\sqrt{2}} B \varphi^{(1)}(Z) = -\frac{1}{2\sqrt{2} i} (1, -1) \widetilde{A}  \varphi^{(1)}(Z) 
 \\ & = -\frac{1}{\sqrt{2} i} \left( w_1 \, \varphi^{(1)}_-(Z) + w_2 \, \varphi^{(1)}_+(Z) \right),
\end{split}
\end{equation}
in agreement with \eqref{eq:ibcfromdelta}. 

The fact that we obtain the IBC \eqref{eq:2modelIBC} only for $\theta = \pi$ can be explained as follows. In \cite{LN:2015} it was shown that $\theta = \pi$ corresponds to the non-interacting case if no coupling between different sectors is present while $\theta \neq \pi$ leads to point interactions. For our model, this suggests that the case $\theta = \pi$ corresponds to interactions purely through particle exchange while $\theta \neq \pi$ includes additional point interactions. However, in the above argument, our starting point was $H_{\rm int}^{\rm ann}$, and this part of the Hamiltonian does not contain additional point interactions. In order to obtain the IBC for $\theta \neq \pi$, one would have to add point interactions to $H_{\rm int}$ manually. The IBC approach, on the other hand, incorporates the possibility of additional point interactions from the very start, and the case $\theta \neq \pi$ occurs more naturally.

Next, we discuss the Lorentz invariance of our model.

\section{Lorentz invariance} \label{sec:li}

In this section we discuss the behavior of our model under (proper) Lorentz transformations.
Lorentz invariance here concerns several aspects:
\begin{enumerate}
	\item Covariance of the wave function,
	\item Invariance of the domain,
	\item Probability conservation in all Lorentz frames,
	\item Invariance of the equations of motion,
	\item Invariance of the boundary conditions.
\end{enumerate}
Item 1.\ is clear because a multi-time wave function is a manifestly covariant object (see Eq. \eqref{eq:psitrafo}). In our case, $S[\Lambda]$ is given by the standard spinorial representation for the Dirac equation. 2.\ is also ensured as the set $\spacelike$ of spacelike configurations is invariant under Lorentz transformations. ($\spacelike_1$ is also invariant under proper Lorentz transformations but not under reflections.) Concerning 3., we have already established in Thm. \ref{thm:localcurrentcons} that our model leads to probability conservation on all Cauchy surfaces which include the equal-time surfaces of all frames. We shall discuss 4.\ and 5.\ now.

An element $\Lambda$ of the proper Lorentz group $\mathcal{L}^\uparrow_+$ in $d=1$ is a boost in the only existing spatial direction, characterized by a parameter $\beta \in \R$. Under $\Lambda$, the multi-time wave function transforms according to \eqref{eq:psitrafo}.
In 1+1 dimensions and with our choice of the basis in spin space, we have:
\begin{equation}
S[\Lambda]_s^{s'} ~=~ \delta_s^{s'} \left(\cosh(\beta/2) - s \sinh(\beta/2)\right).
\end{equation}
We shall check whether the transformed wave function $\psi'$ solves the primed versions of Eqs. \eqref{eq:multitimeeqwithsources} and \eqref{eq:mgeneralibcs}. Indeed, as a consequence of these equations one finds:
\begin{equation} \begin{split}
i(\partial_{t_k}-s_k \partial_{z_k}) {\psi'}^{(n)}_{s_1...s_n}(x_1,...,x_n) ~= &  \sum_{t,u=\pm 1} (-1)^k \left(\cosh(\beta/2) - s_k \sinh(\beta/2)\right) A^{tu}_{s_k} \times \\ & \times {\psi'}^{(n+1)}_{s_1...s_{k-1}tus_{k+1}...s_n}(x_1,...,x_k,x_k,x_{k+1},...,x_n). \end{split}
\end{equation}
Here we have used that $A^{tu}_{s_k}=0$ whenever $t=u$ and that $S[\Lambda]^+_+ S[\Lambda]^-_-$ cancels out because $\left(\cosh(\beta/2) -  \sinh(\beta/2)\right)\left(\cosh(\beta/2) +\sinh(\beta/2)\right)=1$. The equation would be Lorentz invariant if the matrix $A$ transformed like a spinor, with its upper indices transformed via $S^{-1}[\Lambda]$, i.e.
\begin{equation} \begin{split}
{A'}^{tu}_{s_k} ~&=~ \left(\cosh(\beta/2) + t \sinh(\beta/2)\right) \left(\cosh(\beta/2) + u \sinh(\beta/2)\right) \left(\cosh(\beta/2) - s_k \sinh(\beta/2)\right) A^{tu}_{s_k} \\ & =~ \left(\cosh(\beta/2) - s_k \sinh(\beta/2)\right) A^{tu}_{s_k}. \end{split}
\end{equation}
Since $A$ does not transform in this way, but is a fixed matrix, Lorentz invariance is broken in this regard. The situation is similar for the IBC, where \eqref{eq:mgeneralibcs} implies
\begin{align}
		&\left( {\psi'}^{(n+1)}_{s_1...s_{k-1} -+ s_{k+1} ... s_n}- e^{i\theta}{\psi'}^{(n+1)}_{s_1...s_{k-1} +- s_{k+1} ... s_n} \right) (x_1,...,x_k,x_k, x_{k+1},...,x_n)\nonumber\\
	 &=~ \sum_s  \left(\cosh(\beta/2) + s \sinh(\beta/2)\right)  B^s {\psi'}^{(n)}_{s_1 ... s_{k-1} s \, s_{k+1} ... s_{n}}(x_1,...,x_n)
	\end{align}
Thus, if $B$ transformed like a spinor, the model would be manifestly Lorentz invariant.
One can now clearly see that the only point where Lorentz invariance fails is the occurrence of the constant matrices $A$ and $B$. This is due to the simplification that we only consider fermions. The matrix $A$, for example, needs to be introduced to match the number of spin components of $A \psi^{(n+1)}$ to the one of $\psi^{(n)}$. For more realistic QFTs with appropriate types of bosons (e.g.\ photons) as exchange particles this situation would not occur, and consequently there would be no issue with Lorentz invariance. To formulate such a multi-time IBC model with bosons as exchange particles as well as to address the problems that come along with it (such as the question of a suitable position representation for photons) is left as a task for future work.

In this context, we note that there are Lorentz invariant models of self-interacting fermions in 1+1 dimensions, such as the Thirring model \cite{thirring}. We have chosen a different model here which corresponds to cubic terms in the field operators (instead of quartic terms such as $(\overline{\psi} \gamma_\mu \psi) (\overline{\psi} \gamma^\mu \psi)$ as in the Thirring model) because the cubic coupling is, even though not fully Lorentz invariant, closer to quantum electrodynamics and the view of photons mediating the interaction between electrons. That decision derives from the fact that the core of the motivation for studying IBCs is the hope that these might contribute to a rigorous formulation of realistic QFTs free of the UV problem. Nevertheless, it would be a point of interest in its own right (and an idea for future research) to investigate if a multi-time formulation of the Thirring model (or a related model) using IBCs is feasible as well.

\section{Proofs} \label{sec:proofs}

\subsection{Proof of theorem \ref{thm:localcurrentcons}} \label{sec:prooflocalcurrentcons}

	We start with statement 1. The proof is based on a technique developed in \cite{lienert:2015a,LN:2015}. Let $\Sigma_1, \Sigma_2 \subset \R^2$ be smooth Cauchy surfaces given by time functions $\tau_i(z) : \R \rightarrow \R$, i.e.:
\be
	\Sigma_i = \{ (t,z) \in \R^2 : t = \tau_i(z)\},~~i=1,2.
\ee
We shall show \eqref{eq:probcons} in the following form:
	\be
		\sum_{n=1}^N \int_{\Sigma_1^n\cap \mathscr{S}^{(n)}_1} \omega^{(n)} = \sum_{n=1}^N \int_{\Sigma_2^n\cap \mathscr{S}^{(n)}_1} \omega^{(n)}.
	\label{eq:probcons2}
	\ee
To show \eqref{eq:probcons2}, we consider each sector $n$ separately and construct a closed surface $S^{(n)}$ to which we can apply Stokes' theorem. As $j$ is compactly supported in the spatial directions, we can choose $R>0$ such that for all $n = 1,...,N$, $j^{\mu_1...\mu_n}(x_1,...,x_n) = 0$ if there is a $k \in \{1,...,n \}$ such that the variable $z_k$ in $x_k=(t_k,z_k)$ satisfies $|z_k|>R$ . We define
\be
	\Sigma_i^R = \{ (t,z) \in \Sigma_i : |z| < R\},~~i=1,2.
\ee
Then,
\be
	\int_{\Sigma_i^n\cap \mathscr{S}^{(n)}_1} \omega^{(n)} = \int_{(\Sigma_i^R)^n\cap \mathscr{S}^{(n)}_1} \omega^{(n)}.
\label{eq:sigmarint}
\ee
Now consider the configuration spacetime volume
\be
	V_R^{(n)} = \left\{ (t_1,z_1;...;t_n,z_n) \in \overline{\mathscr{S}}_1^{(n)} \left| \, \exists s \in [0,1] : \begin{array}{l}
	 \forall i: t_i = \tau_1(z_i) + s(\tau_2(z_i) - \tau_1(z_i)) \\ {\rm and}~ |z_i| \leq R
\end{array}	 \right. \right\}.
\ee
$V_R^{(n)}$ is a bounded and closed, hence compact ($n$+1)-dimensional submanifold of $\R^{2n}$.  Its boundary $\partial V_R^{(n)}$ has the form
\be
	\partial V_R^{(n)} =(\Sigma_1^R)^n \cup (\Sigma_2^R)^n \cup M^{(n)}_R
\ee
and $M^{(n)}_R \subset \overline{\mathscr{S}}_1^{(n)}$ has the two parts
\be
	M^{(n)}_R = M_{1,R}^{(n)} \cup M_{2,R}^{(n)}
\ee
with
\be
	M_{1,R}^{(n)} = \{ (t_1,z_1;...;t_n,z_n) \in \partial V_R^{(n)} |\, \exists \, i : |z_i| = R\},
\ee
hence $j^{\mu_1...\mu_n} = 0$ on $M_{1,R}^{(n)}$, and
\be
	M_{2,R}^{(n)} = \{ (t_1,z_1;...;t_n,z_n) \in \partial V_R^{(n)} \, |~ \exists\, i: (t_i,z_i) = (t_{i+1},z_{i+1})  \}.
\label{eq:m2}
\ee
In this situation, we can apply Stokes' theorem to obtain:
\be
	\int_{\partial V^{(n)}_R} \omega^{(n)} = \int_{V_R^{(n)}} d \omega^{(n)},
\ee
and, as $j^{\mu_1...\mu_n} = 0$ on $M_{1,R}^{(n)}$, we find (considering \eqref{eq:sigmarint} as well as orientation conventions):
\be
	\int_{\Sigma_1^n \cap  \mathscr{S}^{(n)}_1} \omega^{(n)} -\int_{\Sigma_2^n \cap  \mathscr{S}^{(n)}_1} \omega^{(n)} = \int_{V_R^{(n)}} d \omega^{(n)} -\int_{M_{2,R}^{(n)}} \omega^{(n)}.
\label{eq:probbalance}
\ee
Summation over $n=1,...,N$ yields:
\be
	\sum_{n=1}^N \int_{\Sigma_1^n \cap  \mathscr{S}^{(n)}_1} \omega^{(n)} - \sum_{n=1}^N \int_{\Sigma_2^n \cap  \mathscr{S}^{(n)}_1} \omega^{(n)} = \sum_{n=1}^N \left(\int_{V_R^{(n)}} d \omega^{(n)} -\int_{M_{2,R}^{(n)}} \omega^{(n)}\right).
\label{eq:summedprobbalance}
\ee
We now show that condition \eqref{eq:localcurrentcons} makes the right hand side vanish. To this end, note that $M_{2,R}^{(1)} = \emptyset$. Furthermore, we have $d \omega^{(N)} = 0$ by assumption. Thus, if we can show
\be
	\int_{V_R^{(n)}} d \omega^{(n)} = \int_{M_{2,R}^{(n+1)}} \omega^{(n+1)}
\label{eq:probbalanceproof}
\ee
we obtain a telescoping sum and therefore, indeed
\be
	\sum_{n=1}^N \left(\int_{V_R^{(n)}} d \omega^{(n)} -\int_{M_{2,R}^{(n)}} \omega^{(n)}\right) = 0.
\ee
We turn to the proof of \eqref{eq:probbalanceproof}. Considering \eqref{eq:m2}, we have:
\be
	M_{2,R}^{(n+1)} = \bigcup_{k=1}^{n} R_k^{(n+1)},
\label{eq:m2decomp}
\ee
where
\be
	R_k^{(n+1)} = \{ (t_1,z_1;...;t_{n+1},z_{n+1}) \in M_{2,R}^{(n+1)} : (t_k,z_k) = (t_{k+1},z_{k+1}) \}.
\ee
Then, noting that $\Phi_k: (x_1,...,x_k,x_k,x_{k+1},...,x_n) \mapsto (x_1,...,x_n)$ from \eqref{eq:phik} defines a bijective map between $R_k^{(n+1)}$ and $V_R^{(n+1)}$, we obtain:
\begin{align}
	\int_{M_{2,R}^{(n+1)}} \omega^{(n+1)} &\stackrel{\eqref{eq:m2decomp}}{=} \sum_{k=1}^n \int_{R_k^{(n+1)}} \omega^{(n+1)}\nonumber\\
	&= \sum_{k=1}^n  \int_{V_R^{(n)}} {\Phi_k}^* \, \omega^{(n+1)}.
\end{align}
This equals $\int_{V_R^{(n)}} d\omega^{(n)}$ if
\be
	\sum_{k=1}^n {\Phi_k}^* \, \omega^{(n+1)} = d\omega^{(n)}.
\ee
This, in turn, is ensured by condition \eqref{eq:localcurrentcons}. To summarize, if  \eqref{eq:localcurrentcons} holds, we obtain \eqref{eq:probbalanceproof} and hence \eqref{eq:probcons2} which is equivalent to \eqref{eq:probcons}.

\vspace{0.2cm}
We now turn to point 2.  To this end, we explicitly compute both sides of \eqref{eq:localcurrentcons}. Denoting omission by $\widehat{(\cdot)}$, we find:
\begin{align}
	{\Phi_k}^* \, \omega^{(n+1)} = & \sum_{\mu_1,...,\widehat{\mu}_k,\widehat{\mu}_{k+1},...,\mu_{n+1}} (-1)^{\mu_1+ \cdots +\widehat{\mu}_k + \widehat{\mu}_{k+1} + \cdots + \mu_{n+1}} \, j^{\mu_1 ... \mu_{n+1}} \nonumber\\
&~~~~~~~~~~d x_1^{1-\mu_1} \wedge \cdots \wedge ( - dx_k^0 \wedge dx_k^1 - dx_k^1 \wedge dx_k^0) \wedge \cdots \wedge dx_n^{1-\mu_{n+1}}\nonumber\\
= & \sum_{\mu_1,...,\widehat{\mu}_k,\widehat{\mu}_{k+1},...,\mu_{n+1}}  (-1)^{1+\mu_1+ \cdots +\widehat{\mu}_k + \widehat{\mu}_{k+1} + \cdots + \mu_{n+1}} \nonumber\\
&~~~~~~~\varepsilon_{\rho\, \sigma}\,  j^{\mu_1 ... \mu_{k-1} \, \rho \, \sigma \,  \mu_{k+1} ... \mu_{n+1}} \, d x_1^{1-\mu_1} \wedge \cdots \wedge dx_k^0 \wedge dx_k^1 \wedge \cdots \wedge dx_n^{1-\mu_{n+1}}.
\label{eq:restrictedcurrentform}
\end{align}
The left hand side of \eqref{eq:localcurrentcons} is given by:
\begin{align}
	d\omega^{(n)} ~=~ &\sum_{k=1}^n \sum_{\mu_1,...,\mu_n} (-1)^{\mu_1+\cdots \mu_k+ \cdots\mu_n} \, \partial_{k,\mu_k} j^{\mu_1 ...\mu_k... \mu_n} \nonumber\\
& (-1)^{(k-1) + \mu_k} dx_1^{1-\mu_1} \wedge \cdots \wedge dx_k^0 \wedge dx_k^1 \wedge \cdots \wedge dx_n^{1-\mu_n}.
\end{align}
Thus, ${\rm d} \omega^{(N)} = 0$ is equivalent to the first line of \eqref{eq:currentcondition}. Comparing $\sum_{k=1}^n {\Phi_k}^* \, \omega^{(n+1)}$ and $d\omega^{(n)}$, we obtain the following condition (relabelling indices $\mu_{k+1}...\mu_{n+1} \rightarrow \mu_k ... \mu_n$ in \eqref{eq:restrictedcurrentform}):
\be
	\varepsilon_{\rho \, \sigma}\,  j^{\mu_1 ... \mu_{k-1}\, \rho \, \sigma \, \mu_{k+1} ... \mu_n}(x_1,...,x_k,x_k,x_{k+1},...,x_n) = (-1)^k \partial_{k,\mu_k} j^{\mu_1 ...\mu_k... \mu_n}(x_1,...,x_n).
\ee
This is identical to the second line of \eqref{eq:currentcondition}. \qed

\subsection{Proof of theorem \ref{thm:2ibc}} \label{sec:proof2ibc}

	We start from the balance condition \eqref{eq:2currentbalance}. As we aim at a translation invariant model \eqref{eq:2modellower}-\eqref{eq:2modelIBC}, it is clear that the phase $\theta$ and the matrices $A, B$ must be constant.

For ease of notation, we omit the arguments of the wave function. In order to simplify \eqref{eq:2currentbalance}, we eliminate the component $\psiii_{-+}$ from the equation using the IBC \eqref{eq:2modelIBC}. Introducing
\be
	\psitilde = \left( \begin{array}{c} \psii_- \\ \psii_+\\ \psiii_{+-}\end{array} \right),
\ee
\eqref{eq:2currentbalance} can be rewritten as:
\begin{align}
	&\frac{1}{2i} \, \psitilde^\dagger \left[ \left( \begin{array}{cc} \widetilde{A}^\dagger  \left( \begin{array}{c} B\\ 0~0\end{array} \right) & \widetilde{A}^\dagger \left( \begin{array}{c} e^{i\theta}\\ 1 \end{array}\right) \\ ~&~\\0~~~~~~~~0 & 0 \end{array}\right)  - \left( \begin{array}{cc} \left( B^\dagger \begin{array}{c}  0\\0\end{array} \right)  \widetilde{A} & \begin{array}{c} 0\\ 0\end{array} \\ ~&~\\ (e^{-i\theta}, 1) \, \widetilde{A} & 0 \end{array}\right)\right] \psitilde \nonumber\\
	&= ~ \psitilde^\dagger \left( \begin{array}{cc} -B^\dagger B & -B^\dagger e^{i\theta}\\ -e^{-i\theta} B& 0 \end{array}\right) \psitilde.
\end{align}
As this equation has to hold for all $\psitilde$, we obtain a condition for the matrices on l.h.s. and r.h.s. Evaluating this condition in detail, we obtain just two independent conditions: Firstly,
	\be
		 \frac{1}{2i}(e^{-i\theta},1) \, \widetilde{A} = e^{-i\theta} B ~~~
		 \Leftrightarrow ~~~ B = \frac{1}{2i} (1,e^{i\theta}) \, \widetilde{A}.
		 \label{eq:mofn}
	\ee
	This yields \eqref{eq:M}. Secondly,
	\be
		 \frac{1}{2i} \left[ \widetilde{A}^\dagger \left( \begin{array}{c} B\\ 0~0\end{array} \right) - \left( B^\dagger \begin{array}{c}  0\\0\end{array} \right) \widetilde{A} \, \right] = -B^\dagger B.
	\ee
	Plugging \eqref{eq:mofn} into this equation, we obtain the condition
	\begin{align}
		\frac{1}{4} \, \widetilde{A}^\dagger \left( \begin{array}{cc} 2 & e^{i\theta}\\ e^{-i\theta} & 0 \end{array} \right) \widetilde{A} ~&=~ \frac{1}{4} \, \widetilde{A}^\dagger \left( \begin{array}{cc} 1 & e^{i\theta}\\ e^{-i\theta} & 1 \end{array} \right) \widetilde{A} \nonumber\\
		\Leftrightarrow~~~~~~~~~  \widetilde{A}^\dagger \left( \begin{array}{cc} 1 & 0\\ 0 & -1 \end{array} \right) \widetilde{A} ~&=~ 0.
		\label{eq:condn}
	\end{align}
Let $\widetilde{A} = \left( \begin{array}{cc} a&b\\ c&d\end{array}\right)$. \eqref{eq:condn} then yields the three conditions (i)$ |a| = |c|$, (ii) $|b| = |d|$ and (iii) $a^* b = c^* d$. These force $\widetilde{A}$ to be a rank-1 matrix of the form \eqref{eq:ntilde}. 
\qed

\subsection{Proof of theorem \ref{thm:2main}} \label{sec:proof2main}

We now prove the existence and uniqueness of solutions for our model with two sectors of Fock space. Later we shall also do this for $N$ sectors; however, the case $N=2$ is crucial to develop the technique of the proofs and makes the proof for a general $N>2$ much more transparent.

The proof is divided into two steps. First we show that:
\begin{itemize}
\item Given the wave function in sector 1, we obtain a unique solution for sector 2.
\item Given the wave function in sector 2, we obtain a unique solution for sector 1.
\end{itemize}
Second, we use a fixed point argument to find a combined solution of both sectors. The first step is carried out in the following lemmas.

\begin{lemma} \label{thm:firstlemma}
Let \begin{equation} \label{eq:2sectordefB} \begin{split}
\Banach_1& :=  C_b^1 \left( [-T,T] \times \R, \C^2 \right),
\\ \Banach_2& := C_b^1 \left( \{ (t_1,z_1,t_2,z_2) \in \mathscr{S}_1^{(2)} | t_1,t_2 \in [-T,T],  z_1 < z_2  \}, \C^4 \right).
\end{split}
\end{equation}
Then, given a function $f^{(2)} \in \Banach_2$ and initial values $\psi^{(1)}(0,\cdot) \in C_b^1(\R,\C^2)$, there exists a unique solution $\psi^{(1)} \in \Banach_1$ of
\begin{equation} \label{eq:lowersectorwithf}
i \partial_t \psii(t,z) = H^{\rm Dirac}_1 \psii(t,z) - A f^{(2)} (t,z,t,z)
\end{equation}
with these initial values.
\end{lemma}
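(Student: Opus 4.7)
My plan is to solve \eqref{eq:lowersectorwithf} explicitly by the method of characteristics. With the representation chosen in the paper and $m=0$, the Hamiltonian $H^{\rm Dirac}_1$ is diagonal (cf.\ \eqref{eq:formoffreeDirac}), so the system decouples into the two scalar transport equations
\begin{equation}
i(\partial_t - s\,\partial_z)\psi^{(1)}_s(t,z) \;=\; -\bigl(A f^{(2)}(t,z,t,z)\bigr)_s, \qquad s \in \{-1,+1\}.
\end{equation}
The characteristic of $\partial_t - s\,\partial_z$ through $(t,z)$ is $\tau \mapsto \bigl(\tau,\, z + s(t-\tau)\bigr)$. Along this line the PDE reduces to an ODE in $\tau$ which, integrated from $\tau=0$ to $\tau=t$, yields the Duhamel-type formula
\begin{equation}
\psi^{(1)}_s(t,z) \;=\; \psi^{(1)}_{s,0}(z+st) \;+\; i\int_0^t \bigl(A f^{(2)}\bigr)_s\!\bigl(\tau,\,z+s(t-\tau),\,\tau,\,z+s(t-\tau)\bigr)\,d\tau.
\label{eq:planformula}
\end{equation}

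Uniqueness is then immediate, since the derivation above is reversible: any $C^1_b$ solution of \eqref{eq:lowersectorwithf} must coincide with \eqref{eq:planformula} pointwise. For existence, I would take \eqref{eq:planformula} as a definition and verify three things: membership in $\Banach_1$, validity of the PDE, and the prescribed initial condition. The initial condition is immediate because the integral vanishes at $t=0$. The PDE follows by differentiation under the integral sign in $z$ together with the Leibniz rule in $t$; the boundary contribution at $\tau=t$ produces $i(A f^{(2)})_s(t,z,t,z)$, which is exactly what is needed to reproduce the inhomogeneity when the combination $i(\partial_t - s\,\partial_z)$ is formed on the right-hand side of \eqref{eq:planformula}.

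The one point that requires a brief check is that the boundary trace $g(t,z) := f^{(2)}(t,z,t,z)$, defined via the limiting convention of the paper, lies in $C^1_b([-T,T]\times\R,\C^4)$. This follows from the definition of $\Banach_2$: the bounded continuous first partials of $f^{(2)}$ on $\mathscr{S}_1^{(2)}$ admit continuous extensions to the diagonal $\{z_1=z_2\}$, and the derivatives of $g$ are then obtained by the chain rule, $\partial_t g = (\partial_{t_1}+\partial_{t_2})f^{(2)}|_{\mathrm{diag}}$ and $\partial_z g = (\partial_{z_1}+\partial_{z_2})f^{(2)}|_{\mathrm{diag}}$, both of which are bounded. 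From this, one gets the sup-norm bound $\|\psi^{(1)}_s\|_\infty \leq \|\psi^{(1)}_{s,0}\|_\infty + T\,\|A\|\,\|g\|_\infty$ and analogous bounds for the first derivatives, establishing $\psi^{(1)} \in \Banach_1$.

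I expect the verification of this boundary-trace regularity to be the only mildly delicate point; the rest is a direct method-of-characteristics computation made possible by the diagonal form of $H^{\rm Dirac}_1$ at $m=0$. This is precisely the feature that will break down in the massive case and force the fixed-point generalization of Sec.\ \ref{sec:massive}.
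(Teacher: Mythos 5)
Your proposal is correct and is essentially the paper's own proof: the diagonal form of $H^{\rm Dirac}$ at $m=0$ decouples the system into two transport equations, which are integrated along characteristics to give exactly the explicit Duhamel formula \eqref{eq:solutionformula1}, with existence and uniqueness both read off from that formula. The only difference is that you explicitly flag the $C^1_b$ regularity of the diagonal trace $f^{(2)}(t,z,t,z)$ as the delicate point, which the paper's proof passes over silently by simply asserting $\widetilde{f}_\pm \in C^1_b$.
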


\begin{proof}
Rewriting the system \eqref{eq:lowersectorwithf} gives 
\begin{equation}
	\left( \begin{array}{c} (\partial_t + \partial_z) \psi^{(1)}_-(t,z) \\
	(\partial_t - \partial_z) \psi^{(1)}_+(t,z)
	\end{array} \right) = i A f^{(2)}(t,z,t,z) =: \left( \begin{array}{c} \widetilde{f}_-(t,z) \\ \widetilde{f}_+(t,z)  
	\end{array} \right) ,
	\label{eq:deffpm}
\end{equation}
with $\widetilde{f}_\pm \in C_b^1(\R^2,\C)$. Now \eqref{eq:deffpm} can be directly integrated along characteristics. The solution is given by:
\begin{equation}
\psi^{(1)}(t,z) = \left( \begin{array}{c}\psi^{(1)}_-(0,z-t) \\
	\psi^{(1)}_+(0,z+t)
	\end{array} \right) + \int_0^t ds \left( \begin{array}{c}\widetilde{f}_-(s,z-t+s) \\
	\widetilde{f}_+(s,z+t-s)
	\end{array} \right). \label{eq:solutionformula1}
\end{equation}
We have $\psi^{(1)}\in \Banach_1$ because the initial values and $\widetilde{f}$ are $C_b^1$-functions.\qed
\end{proof}

\begin{lemma} \label{thm:secondlemma}
Given a function $\psi^{(1)} \in \Banach_1$ and initial values
\begin{equation}
\psi^{(2)}(0,\cdot,0,\cdot) = \psi^{(2)}_0 \in C_b^1 \left( \{ (z_1,z_2) \in \R^2 | z_1 < z_2  \}, \C^4 \right) \end{equation}
which satisfy \eqref{eq:IBCforinitialvalues} as well as the condition
\begin{equation} \label{eq:2modelconditionC1}
\left. B \,\partial_t \psii(t,z)\right|_{t=0} = \left( \partial_{z_2} - \partial_{z_1} \right) \left. \left( \psiii_{0,-+}(z_1,z_2)+e^{i\theta}\psiii_{0,+-}(z_1,z_2)  \right) \right|_{z_1=z_2=z}
\end{equation}
there exists a unique solution $\psi^{(2)} \in \Banach_2$ of \eqref{eq:2modelupper} with boundary condition \eqref{eq:2modelIBC}.
\end{lemma}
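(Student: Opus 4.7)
The plan is to convert the massless multi-time Dirac system into four decoupled scalar transport equations, solve three of the four spin components directly from initial data by the method of characteristics, and then use the IBC to close the remaining component whose characteristics trace into the coincidence set. In the chosen representation with $m=0$, Eq.~\eqref{eq:2modelupper} is equivalent to
\begin{equation*}
(\partial_{t_k} - s_k \partial_{z_k}) \psiii_{s_1 s_2}(t_1,z_1,t_2,z_2) = 0, \quad k=1,2, \; s_1, s_2 \in \{\pm 1\},
\end{equation*}
so each spin component admits the characteristic representation $\psiii_{s_1 s_2}(t_1,z_1,t_2,z_2) = g_{s_1 s_2}(z_1 + s_1 t_1,\, z_2 + s_2 t_2)$ for a function $g_{s_1 s_2}:\R^2 \to \C$ that satisfies $g_{s_1 s_2}(z_1, z_2) = \psi^{(2)}_{0, s_1 s_2}(z_1, z_2)$ on $\{z_1 < z_2\}$.

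First I would classify the spin sectors by whether the evaluation point $(u_1, u_2) = (z_1 + s_1 t_1,\, z_2 + s_2 t_2)$ lies in the initial-data half-plane $\{u_1 < u_2\}$. The spacelike bound $|t_1 - t_2| < z_2 - z_1$ gives $t_1 - t_2 < z_2 - z_1$, which immediately yields $u_1 < u_2$ for both $(s_1, s_2) = (-,-)$ and $(+,+)$; hence $g_{--}$ and $g_{++}$ are determined on the entire domain by initial data alone. For $(s_1, s_2) = (-, +)$ one checks that $u_1 < u_2$ iff $t_1 + t_2 > z_1 - z_2$, and for $(+,-)$ iff $t_1 + t_2 < z_2 - z_1$. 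Thus $g_{-+}$ and $g_{+-}$ are initially determined only on $\{u_1 < u_2\}$ and must be extended to $\{u_1 > u_2\}$. Rewriting \eqref{eq:2modelIBC} in the variables $(u_1, u_2) = (z+t, z-t)$ yields the closed-form extension
\begin{equation*}
g_{+-}(u_1, u_2) = e^{-i\theta}\bigl[\, g_{-+}(u_2, u_1) - B\,\psii\bigl(\tfrac{u_1 - u_2}{2},\, \tfrac{u_1 + u_2}{2}\bigr)\bigr], \quad u_1 > u_2,
\end{equation*}
where $g_{-+}(u_2, u_1)$ is already known from initial data since $u_2 < u_1$; the analogous rearrangement determines $g_{-+}$ on $\{u_1 > u_2\}$. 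Substituting back into $\psiii_{s_1 s_2}(t_1,z_1,t_2,z_2) = g_{s_1 s_2}(z_1 + s_1 t_1,\, z_2 + s_2 t_2)$ produces a candidate $\psiii$ that, by construction, satisfies the transport equations and the IBC; uniqueness is immediate since $g_{s_1 s_2}$ is pinned down pointwise at every point of its domain.

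The hard part will be verifying that the piecewise-defined candidate lies in $\Banach_2$, i.e.\ is globally $C^1$ with bounded derivatives. The transition surface $\{u_1 = u_2\}$ of the characteristic variables pulls back to the hypersurface $\{t_1 + t_2 = z_2 - z_1\}$ (respectively $\{t_1 + t_2 = z_1 - z_2\}$ for the $-+$ component), which sits in the \emph{interior} of $\spacelike_1^{(2)}$ and is not a physical coincidence, so both continuity and differentiability of $\psiii$ across it are genuine constraints. Matching the one-sided limits of $g_{+-}$ from the two half-planes at $\{u_1 = u_2\}$ reduces precisely to the first compatibility condition \eqref{eq:IBCforinitialvalues} evaluated at $t=0$. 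Matching the partial derivatives $\partial_{u_1} g_{+-}$ and $\partial_{u_2} g_{+-}$, after using Eq.~\eqref{eq:2modellower} at $t=0$ to trade $\partial_t \psii$ for the spatial derivative and the coupling term $A\,\psiii(0,z,0,z)$, reduces precisely to the second compatibility condition \eqref{eq:2modelconditionC1}; this is why that hypothesis was needed. Boundedness of $g_{s_1 s_2}$ and its first partials on the extension region follows from the $C^1_b$-bounds on $\psi_0^{(2)}$ and $\psii$ together with the linearity of the extension formula. Combining these pieces shows $\psiii \in \Banach_2$.
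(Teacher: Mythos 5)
Your proposal is correct and follows essentially the same route as the paper: integrate the decoupled transport equations along characteristics, use the IBC to extend the $-+$ and $+-$ components into the half-plane of characteristic variables not reached by the initial data, and check that continuity and $C^1$-matching across the transition surface reduce to \eqref{eq:IBCforinitialvalues} and \eqref{eq:2modelconditionC1} respectively (the paper imports the solution formula from an earlier work rather than rederiving it). One small imprecision: within this lemma $\psii$ is an arbitrary element of $\Banach_1$, so the derivative matching must be phrased directly in terms of $\partial_t\psii|_{t=0}$ as in \eqref{eq:2modelconditionC1} — the substitution via \eqref{eq:2modellower} belongs to the later well-definedness argument for the fixed-point map, not here.
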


\begin{proof}
The lemma is a special case of theorem 3.3. in \cite{lienert:2015a}, where the solution (eq. (32) in that paper) was given explicitly for a general class of initial boundary value problems. Adapted to our notation and with the characteristic variables $u_k := z_k - t_k$ and $v_k := z_k + t_k$ for $k=1,2$, the solution of \eqref{eq:2modelupper} on $\mathscr{S}^{(2)}_1$ with boundary condition \eqref{eq:2modelIBC} reads as follows:
\begin{equation} \label{eq:2sectorexplicitsol}
	\begin{split}
	\psi^{(2)}_{--}(t_1,z_1,t_2,z_2) & =  \psi^{(2)}_{0,--} (u_1,u_2) ,
	\\ \psi^{(2)}_{-+}(t_1,z_1,t_2,z_2) & = \left\lbrace \begin{array}{cl}
\psi^{(2)}_{0,-+}(u_1,v_2) & \text{for} \ u_1 < v_2,
\\ e^{i\theta} \psi^{(2)}_{0,+-}(v_2,u_1) + B \psi^{(1)}(\tfrac{v_2-u_1}{2}, \tfrac{v_2+u_1}{2}) & 	\text{for} \ u_1 \geq v_2,
\end{array}	 \right. 
	\\ \psi^{(2)}_{+-}(t_1,z_1,t_2,z_2) & = \left\lbrace \begin{array}{cl}
\psi^{(2)}_{0,+-}(v_1,u_2) & \text{for} \ v_1 < u_2,
\\ e^{-i\theta} \left( \psi^{(2)}_{0,-+}(u_2,v_1) - B \psi^{(1)}(\tfrac{v_1-u_2}{2}, \tfrac{v_1+u_2}{2}) \right) & 	\text{for} \ v_1 \geq u_2,
\end{array}	 \right. 
	\\  \psi^{(2)}_{++}(t_1,z_1,t_2,z_2) & =  \psi^{(2)}_{0,++} (v_1,v_2).
	\end{split}
\end{equation}
The such defined $\psiii$ inherits the $C_b^1$-property from the initial and boundary values wherever $u_k \neq v_j$ for $j \neq k$. At $u_1=v_2$ resp.\ $v_1=u_2$, continuity of $\psiii_{-+}$  resp.\ $\psiii_{+-}$ amounts to condition \eqref{eq:IBCforinitialvalues}. In order to check differentiability at those points, we compare the limits of the respective partial derivatives in the case differentiation in \eqref{eq:2sectorexplicitsol}. We start with comparing $\partial_{z_1} \psiii_{-+}$ for $u_1 \nearrow v_2$ and $u_1 \searrow v_2$ at $v_2=z$. Let $D_k$ denote the derivative w.r.t.\ the $k$-th argument. The condition for the two limits to coincide then is:
\begin{equation}
	D_1 \psiii_{0,-+}(z,z) ~=~ e^{i \theta}D_2 \psiii_{0,+-}(z,z) + \tfrac{1}{2} B (D_2 \psii(0,z) - D_1 \psii(0,z) ).
\end{equation}
Inserting the $z$-derivative of \eqref{eq:IBCforinitialvalues},
\begin{equation}
(D_1 + D_2) (\psiii_{0,+-}(z,z) - e^{i \theta} \psiii_{0,+-}(z,z)) = B D_2 \psii(0,z),
\end{equation}
this becomes \eqref{eq:2modelconditionC1}. Similar computations show that all other partial derivatives exist and are continuous as a result of the same conditions.
 \qed
\end{proof}

With the lemmas at hand, we now construct a fixed point map for our model. For given initial values \eqref{eq:2modelinitial} satisfying the compatibility conditions \eqref{eq:IBCforinitialvalues} and \eqref{eq:IBCforinitialvalues2}, we let
\be
	\mathcal{D}= \left\lbrace (\psi^{(1)},\psi^{(2)}) \in \Banach_1 \oplus \Banach_2  \, \big| \, \psi^{(1)}(0,z)=\psii_0(z), \psiii(0,z_1,0,z_2)=\psiii_0(z_1,z_2) \right\rbrace.
\ee
Clearly, $\mathcal{D}$ is closed in $\Banach_1 \oplus \Banach_2$. 

\begin{definition} Let $F: \mathcal{D} \to \mathcal{D},~ (f^{(1)}, f^{(2)}) \mapsto (\psi^{(1)}, \psi^{(2)})$ be defined by the following procedure.
\begin{itemize}
\item Take $\psi^{(1)}$ to be the unique solution of \eqref{eq:lowersectorwithf} according to lemma \ref{thm:firstlemma}.
\item Using the such constructed $\psi^{(1)}$ in the interior-boundary condition, $\psi^{(2)}$ is defined to be the unique solution of \eqref{eq:2modelupper} and \eqref{eq:2modelIBC} according to lemma \ref{thm:secondlemma}.
\end{itemize}
\end{definition}

\paragraph{Well-definedness of $F$:} We need to check that $F$ actually maps into $\mathcal{D}$. This is true if the following points hold.
\begin{itemize}
\item \textit{$C^1_b$-property.} Lemma \ref{thm:secondlemma} gives a $C^1_b$-solution under the conditions \eqref{eq:IBCforinitialvalues} and \eqref{eq:2modelconditionC1}. We require \eqref{eq:IBCforinitialvalues} and \eqref{eq:IBCforinitialvalues2}. Since $\psii$ solves \eqref{eq:lowersectorwithf}, 
\begin{equation}  \label{eq:2sectorC1ofF}
B \partial_t \psii(t,z)|_{t=0} = -iB \left( H^{\rm Dirac} \psii(0,z) - Af^{(2)}(0,z,0,z) \right) 
\end{equation}
with $f^{(2)}(0,z,0,z) = \psiii_0(z,z)$ and inserting \eqref{eq:IBCforinitialvalues2} implies \eqref{eq:2modelconditionC1}.
\item \textit{Initial values.} These are preserved under $F$ as $(\psi^{(1)}, \psi^{(2)})=F(f^{(1)},f^{(2)})$ is constructed with respect to the same initial values.
\end{itemize}

By construction of $F$, we immediately obtain the following result.
\begin{lemma} \label{thm:lemmaF} Let $(\psi^{(1)}, \psi^{(2)}) \in \Banach_1 \oplus \Banach_2$. Then the following statements are equivalent: 
\begin{enumerate}
\item[i)] $(\psi^{(1)}, \psi^{(2)})$ is a $C^1_b$-solution of the initial boundary value problem \eqref{eq:2modelupper},\eqref{eq:2modellower}, \eqref{eq:2modelIBC} with initial values given as in \eqref{eq:2modelinitial}.
\item[ii)] $(\psi^{(1)}, \psi^{(2)})$ lies in $\mathcal{D}$ and is a fixed point of $F$.
\end{enumerate} 
\end{lemma}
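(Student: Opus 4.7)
\textbf{Proof proposal for Lemma \ref{thm:lemmaF}.} The claim is an equivalence, so the plan is to prove the two implications (i)$\Rightarrow$(ii) and (ii)$\Rightarrow$(i) separately, both of which should be short consequences of the construction of $F$ together with the uniqueness parts of Lemmas \ref{thm:firstlemma} and \ref{thm:secondlemma}. No new analytic estimate is required; the content is mostly bookkeeping.

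For (i)$\Rightarrow$(ii): Assume $(\psi^{(1)},\psi^{(2)})\in\Banach_1\oplus\Banach_2$ is a $C^1_b$-solution of the IBVP with the prescribed initial data. Membership in $\mathcal{D}$ is immediate, since $\mathcal{D}$ is defined precisely by the initial-value constraints. To show that it is a fixed point of $F$, I would evaluate $F(\psi^{(1)},\psi^{(2)})=:(\widetilde{\psi}^{(1)},\widetilde{\psi}^{(2)})$. By definition of $F$, $\widetilde{\psi}^{(1)}$ is the unique $\Banach_1$-solution of \eqref{eq:lowersectorwithf} with source $f^{(2)}=\psi^{(2)}$ and initial datum $\psi^{(1)}_0$; but \eqref{eq:lowersectorwithf} with $f^{(2)}=\psi^{(2)}$ is exactly \eqref{eq:2modellower}, and $\psi^{(1)}$ solves this equation with the same initial datum by assumption. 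The uniqueness clause of Lemma \ref{thm:firstlemma} then forces $\widetilde{\psi}^{(1)}=\psi^{(1)}$. Similarly, $\widetilde{\psi}^{(2)}$ is the unique $\Banach_2$-solution of \eqref{eq:2modelupper}, \eqref{eq:2modelIBC} coupled to $\widetilde{\psi}^{(1)}=\psi^{(1)}$ with initial datum $\psi^{(2)}_0$; and $\psi^{(2)}$ is such a solution as well. Lemma \ref{thm:secondlemma} applies here because its hypotheses \eqref{eq:IBCforinitialvalues} and \eqref{eq:2modelconditionC1} are inherited from the assumed compatibility conditions \eqref{eq:IBCforinitialvalues} and \eqref{eq:IBCforinitialvalues2} on the initial data together with the fact that $\psi^{(1)}$ satisfies \eqref{eq:2modellower} at $t=0$. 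Its uniqueness clause then yields $\widetilde{\psi}^{(2)}=\psi^{(2)}$, so $F(\psi^{(1)},\psi^{(2)})=(\psi^{(1)},\psi^{(2)})$.

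For (ii)$\Rightarrow$(i): Assume $(\psi^{(1)},\psi^{(2)})\in\mathcal{D}$ with $F(\psi^{(1)},\psi^{(2)})=(\psi^{(1)},\psi^{(2)})$. Unwinding the definition of $F$ directly gives the desired properties: the first component produced by $F$ is a $\Banach_1$-function solving \eqref{eq:lowersectorwithf} with $f^{(2)}=\psi^{(2)}$, which coincides with \eqref{eq:2modellower}; and the second component is a $\Banach_2$-function solving \eqref{eq:2modelupper} together with the IBC \eqref{eq:2modelIBC} relative to $\psi^{(1)}$. Since this pair equals $(\psi^{(1)},\psi^{(2)})$ by the fixed-point assumption, $(\psi^{(1)},\psi^{(2)})$ itself satisfies both equations and the IBC. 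Membership in $\mathcal{D}$ supplies the initial conditions \eqref{eq:2modelinitial}, and the $C^1_b$-property is automatic from the codomain of $F$.

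The only step that requires any vigilance is the appeal to Lemma \ref{thm:secondlemma} in the direction (i)$\Rightarrow$(ii): one must verify that compatibility condition \eqref{eq:2modelconditionC1} holds with $f^{(2)}=\psi^{(2)}$. This was already checked under the heading ``Well-definedness of $F$'' in the excerpt (equation \eqref{eq:2sectorC1ofF}), and the same computation applies verbatim here, so I would simply cite it rather than redo it. With that, both implications are settled and the equivalence is complete. \qed
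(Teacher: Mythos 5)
Your proof is correct and follows exactly the route the paper intends: the paper dispenses with this lemma by the single remark ``By construction of $F$, we immediately obtain the following result,'' and your argument is precisely that construction spelled out, using the uniqueness clauses of Lemmas \ref{thm:firstlemma} and \ref{thm:secondlemma} for (i)$\Rightarrow$(ii) and unwinding the definition of $F$ for (ii)$\Rightarrow$(i). Your careful verification that \eqref{eq:2modelconditionC1} follows from \eqref{eq:IBCforinitialvalues2} via \eqref{eq:2sectorC1ofF} is the same computation the paper performs under ``Well-definedness of $F$,'' so nothing is missing.
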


The main work now lies in proving the following.

\begin{lemma} \label{thm:lemmafixedpoint}
For every $T >0$, $F$ possesses a unique fixed point in $\mathcal{D}$.
\end{lemma}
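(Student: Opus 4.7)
The plan is to apply Banach's fixed-point theorem to $F$; by Lemma~\ref{thm:lemmaF} this immediately yields the claim. The natural ambient space is $\Banach_1 \oplus \Banach_2$ with its $C^1_b$-norm, and $\mathcal{D}$ is a closed subset of it (defined by prescribing the initial trace). Since a priori the Lipschitz constant of $F$ on the full interval $[-T,T]$ is merely bounded, not small, I would first produce a contraction on a sufficiently short time slab $[0, T^*]$ (with $T^*$ depending only on $\|A\|$ and $\|B\|$) and then iterate slab-by-slab along the time axis. The interval $[-T, 0]$ is handled symmetrically by time reversal.

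For the contraction estimate, I would take two inputs $(f_1^{(1)}, f_1^{(2)}), (f_2^{(1)}, f_2^{(2)}) \in \mathcal{D}$, write $\delta f^{(j)}$ and $\delta \psi^{(j)}$ for the differences of inputs and outputs, and read off bounds directly from the explicit solution formulas \eqref{eq:solutionformula1} and \eqref{eq:2sectorexplicitsol}. Differentiating \eqref{eq:solutionformula1} under the integral immediately gives
\[
\|\delta\psi^{(1)}\|_\infty + \|\partial_z \delta\psi^{(1)}\|_\infty \leq C_A\, T^* \,\|\delta f^{(2)}\|_{C^1_b([0,T^*])}.
\]
The delicate term is $\partial_t \delta\psi^{(1)}$: from \eqref{eq:lowersectorwithf} it equals $-iH^{\rm Dirac}\delta\psi^{(1)} + iA\,\delta f^{(2)}(t,z,t,z)$, and the last summand naively carries no prefactor of $T^*$. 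This is the main obstacle of the argument; I expect to overcome it by exploiting that both inputs share the same initial trace $\psi_0^{(2)}$, so $\delta f^{(2)}$ vanishes identically on $\{t_1 = t_2 = 0\}$. Applying the fundamental theorem of calculus along the diagonal path $s \mapsto (s, z, s, z)$ then yields $|\delta f^{(2)}(t,z,t,z)| \leq 2t\,\|\delta f^{(2)}\|_{C^1_b}$, restoring the missing $T^*$ factor. For $\delta \psi^{(2)}$, inspection of \eqref{eq:2sectorexplicitsol} shows that the only $\psi^{(1)}$-dependence is linear, through $B\psi^{(1)}$ evaluated at a characteristic point whose time coordinate lies in $[0, T^*]$; the chain rule then gives $\|\delta \psi^{(2)}\|_{C^1_b([0,T^*])} \leq C_B \|\delta\psi^{(1)}\|_{C^1_b([0,T^*])}$. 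Combining everything, $\|F(f_1) - F(f_2)\|_{C^1_b([0,T^*])} \leq C\,T^* \|f_1 - f_2\|_{C^1_b([0,T^*])}$, a strict contraction once $T^* < 1/C$.

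Banach's theorem then produces a unique fixed point on $[0, T^*]$. Because this fixed point actually solves the multi-time equations and the IBC, its trace at $t = T^*$ automatically satisfies the analogs of the compatibility conditions \eqref{eq:IBCforinitialvalues} and \eqref{eq:IBCforinitialvalues2}, so the same contraction argument can be reapplied on $[T^*, 2T^*]$ with these traces as new initial data. Iterating finitely many times covers $[0, T]$, and time reversal handles $[-T, 0]$. Uniqueness on the full interval follows by induction on slabs: any two global fixed points must coincide on each $[kT^*, (k+1)T^*]$ by the uniqueness provided by Banach's theorem on that slab.
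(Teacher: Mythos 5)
Your overall strategy differs from the paper's: you run a short\nobreakdash-time contraction in the plain $C^1_b$ norm and iterate slab by slab, whereas the paper obtains a contraction on all of $[-T,T]$ in one stroke by equipping $\Banach_1\oplus\Banach_2$ with the exponentially weighted norms \eqref{eq:2gammanorms} and choosing the weight $\gamma$ large (each time integral then contributes a factor $1/\gamma$ in place of your $T^*$). The crucial technical point is the same in both arguments, and you identify it correctly: the term $A\,\delta f^{(2)}(t,z,t,z)$ in $\partial_t\delta\psi^{(1)}$ carries no small factor a priori, and one recovers it by using that both inputs lie in $\mathcal{D}$ (hence $\delta f^{(2)}$ vanishes on the initial surface) together with the fundamental theorem of calculus along the time direction; this is exactly the paper's estimate \eqref{eq:2boundderivatives}, with $1/\gamma$ playing the role of your $T^*$. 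Your treatment of the second sector via the explicit formula \eqref{eq:2sectorexplicitsol}, giving $\|\delta\psi^{(2)}\|\leq C_B\|\delta\psi^{(1)}\|$ with no need for a small factor there, also matches the paper. The weighted\nobreakdash-norm device buys you a single application of Banach's theorem with no restarting and no need to re-verify compatibility conditions at intermediate times; your route is more elementary but pays for it in the iteration step.

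That iteration step is where there is a genuine gap. The space $\Banach_2$ consists of functions on $\{(t_1,z_1,t_2,z_2)\in\spacelike_1^{(2)}: t_1,t_2\in[-T,T]\}$, and this domain contains spacelike configurations whose two time coordinates lie in \emph{different} slabs, e.g.\ $t_1\in[0,T^*]$, $t_2\in[T^*,2T^*]$ with $|z_1-z_2|>|t_1-t_2|$. Your slab-by-slab scheme, which restarts from the equal-time trace $\psi^{(2)}(kT^*,\cdot,kT^*,\cdot)$, only ever produces $\psi^{(2)}$ on the diagonal time blocks $t_1,t_2\in[kT^*,(k+1)T^*]$, so as written it neither constructs nor proves uniqueness of the solution on the mixed-time part of the multi-time domain. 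The gap is repairable: the coupling into sector $1$ involves only the equal-time diagonal values $f^{(2)}(t,z,t,z)$, so one can run your contraction for the pair ($\psi^{(1)}$, diagonal trace of $\psi^{(2)}$) and afterwards fill in all of $\psi^{(2)}$ on $\spacelike_1^{(2)}$ by the explicit formula \eqref{eq:2sectorexplicitsol}, which depends on $\psi^{(1)}$ and the initial data only. But this extra step must be made explicit, and the uniqueness claim on mixed-time configurations must be argued through that formula rather than through Banach's theorem on a slab. A secondary point to write out carefully is your assertion that the compatibility conditions \eqref{eq:IBCforinitialvalues} and \eqref{eq:IBCforinitialvalues2} propagate to $t=T^*$: the first is just the IBC at that time, but the second requires differentiating the IBC in time and substituting the evolution equations \eqref{eq:2modelupper} and \eqref{eq:2modellower}, which is legitimate for a $C^1$ fixed point but is not automatic from the statement that ``the fixed point solves the equations.''
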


\begin{proof}
Let $\gamma \geq 0$. We equip $\Banach_1$ and $\Banach_2$ with the weighted norms
\begin{equation} \label{eq:2gammanorms}
\begin{split}
\left\| f^{(1)} \right\|_{\Banach_1, \gamma} & :=  \sup_{t \in [-T,T], z \in \R} \left( \Big( |f^{(1)}(t,z)| + \max_{y \in \{t,z\}} |\partial_y f^{(1)}(t,z)| \Big) e^{- \gamma |t|} \right),
\\ \left\| f^{(2)} \right\|_{\Banach_2, \gamma} & :=  \sup_{t_1,t_2 \in [-T,T], (t_1,z_1,t_2,z_2) \in \mathscr{S}^{(2)}_1} \bigg( \Big( |f^{(2)}(t_1,z_1,t_2,z_2)| \\ & \hspace{0.9cm}+ \max_{y \in \{t_1,z_1,t_2,z_2 \}} |\partial_y f^{(2)}(t_1,z_1,t_2,z_2)|\Big)  e^{-\tfrac{\gamma}{2} (|t_1| + |t_2|)} \bigg),
\end{split}
\end{equation}
where $| \cdot |$ denotes the maximum norm of $\C^2$ and $\C^4$, respectively.
For $\gamma = 0$, the norms \eqref{eq:2gammanorms} reduce to the canonical norms on $C_b^1$-functions. In that case, one obtains complete spaces. As a consequence of the inequalities
\begin{equation} 
\left\| f^{(k)} \right\|_{\Banach_k, 0} e^{- \gamma T} ~ \leq ~ \left\| f^{(k)} \right\|_{\Banach_k, \gamma} ~\leq~ \left\| f^{(k)} \right\|_{\Banach_k, 0}, \quad k=1,2,
\end{equation}
the norms \eqref{eq:2gammanorms} are equivalent for all $\gamma \geq 0$. This implies that $\Banach_1 \oplus \Banach_2$ equipped with the norm
\begin{equation}
\left\| (f^{(1)}, f^{(2)}) \right\|_{\Banach_1 \oplus \Banach_2,\gamma} :=~ \left\| f^{(1)} \right\|_{\Banach_1,\gamma} + ~ \left\| f^{(2)} \right\|_{\Banach_2,\gamma}
\end{equation}
is a Banach space. Recall that $\mathcal{D}$ is a closed subset of $\Banach_1 \oplus \Banach_2$.
Our goal is to use Banach's fixed point theorem, so it remains to show that $F: \mathcal{D} \to \mathcal{D}$ is a contraction.

Let $f,g \in \mathcal{D}$ and $F \left( f^{(1)}, f^{(2)} \right) =:  (\psi^{(1)}, \psi^{(2)})$ and $F \left( g^{(1)}, g^{(2)} \right) =:  (\phi^{(1)}, \phi^{(2)})$. Moreover, we define $\widetilde{g}$ like $\widetilde{f}$ in \eqref{eq:deffpm} but with $f$ replaced by $g$. Using the solution formula \eqref{eq:solutionformula1}, we obtain:
\begin{equation}
\left( \begin{array}{c}(\psi^{(1)}_- -  \phi^{(1)}_-)(t,z) \\
	(\psi^{(1)}_+ -  \phi^{(1)}_+)(t,z)
	\end{array} \right) =  \int_0^t ds \left( \begin{array}{c}(\widetilde{f}_- - \widetilde{g}_-)(s,z-t+s) \\
	(\widetilde{f}_+ - \widetilde{g}_+)(s,z+t-s)
	\end{array} \right).
\end{equation}
This implies:
\begin{equation} \begin{split}
\left|\psi^{(1)}_\pm - \phi^{(1)}_\pm\right|(t,z) ~&\leq~ \mathrm{sgn}(t) \int_0^t ds ~|(\widetilde{f}_\pm - \widetilde{g}_\pm)(s,z\pm t - \pm s) | \,  e^{-\gamma |s|} e^{\gamma |s|} 
\\ & \leq \sup_{s \in [-|t|,|t|]} \left( |\widetilde{f}_\pm - \widetilde{g}_\pm|(s, z \pm t - \pm s) \, e^{- \gamma |s|}  \right) \mathrm{sgn}(t) \int_0^t ds ~ e^{\gamma |s|}
\\ & \leq \sup_{s \in [-|t|,|t|], y \in \R}\left( |\widetilde{f}_\pm - \widetilde{g}_\pm|(s,y) \, e^{- \gamma |s|} \right) \frac{1}{\gamma} e^{\gamma |t|}.
 \end{split}
\end{equation}
Hence, recalling \eqref{eq:deffpm}, we find:
\begin{equation}  \label{eq:firstbound}
 \begin{split}
\left|\psi_\pm^{(1)} - \phi_\pm^{(1)}\right|(t,z) \, e^{- \gamma |t|}&  ~\leq~ \frac{1}{\gamma} \sup_{s \in [-|t|,|t|], y \in \R} |\widetilde{f}_\pm - \widetilde{g}_\pm|(s,y)\, e^{- \gamma |s|}
\\ & ~\leq~  \frac{1}{\gamma} \sup_{s \in [-|t|,|t|], y \in \R} \|A \|_{\infty} \, |f^{(2)} - g^{(2)}|(s,y,s,y) \, e^{-\gamma |s|}. \end{split}
\end{equation}
For the $z$-derivative, we obtain an analogous formula:
\begin{align} \label{eq:zderivative} 
\left|\partial_z (\psi^{(1)}_\pm - \phi^{(1)}_\pm)(t,z) \right|\, e^{-\gamma |t|} ~ &\leq~ \frac{1}{\gamma} \sup_{s \in [-|t|,|t|], y \in \R} \left|\partial_y(\widetilde{f}_\pm - \widetilde{g}_\pm)(s,y) \right| \, e^{-\gamma |s|} \nonumber\\
&\leq~ \frac{1}{\gamma} \sup_{s \in [-|t|,|t|], y \in \R} \|A \|_{\infty} \, \left|\partial_y(f^{(2)} - g^{(2)})(s,y,s,y) \right| \, e^{-\gamma |s|}.
\end{align}
In the estimate for $\partial_t (\psi^{(1)}_\pm - \phi^{(1)}_\pm)(t,z)$ we obtain a similar expression as \eqref{eq:zderivative} (with $\partial_y$ replaced by $\partial_s$) plus $(\widetilde{f}_\pm - \widetilde{g}_\pm)(t,z)$. The latter appears due to the time-dependent upper bound of the integral $\int_0^t ds$. We can bound it as follows.
\begin{align} \label{eq:2boundderivatives} 
\left|(\widetilde{f}_\pm - \widetilde{g}_\pm)(t,z)\right| ~& =~ |(\widetilde{f}_\pm - \widetilde{g}_\pm)(t,z) - (\widetilde{f}_\pm - \widetilde{g}_\pm)(0,z)| ~= ~\bigg| \int_0^t \partial_s (\widetilde{f}_\pm - \widetilde{g}_\pm)(s,z) ds \bigg|
\nonumber\\ & \leq~ \mathrm{sgn}(t) \int_0^t \big| \partial_s (\widetilde{f}_\pm - \widetilde{g}_\pm)(s,z) \big| e^{-\gamma |s|} e^{\gamma |s|} ds 
\nonumber\\ & \leq~  \sup_{s \in [-|t|,|t|], y \in \R} \left( |\partial_s( \widetilde{f}_\pm - \widetilde{g}_\pm)(s,z)| e^{-\gamma |s|} \right) \frac{1}{\gamma} e^{\gamma |t|}.
\end{align}
Here we have used that $\widetilde{f}_\pm(0,z) = \widetilde{g}_\pm(0,z)$ by definition of $\mathcal{D}$ and that $\widetilde{f}$ and $\widetilde{g}$ are $C^1_b$-functions.
\eqref{eq:2boundderivatives} implies:
\be \label{eq:additionalterm}
	\left|(\widetilde{f}_\pm - \widetilde{g}_\pm)(t,z)\right| e^{-\gamma |t|} ~\leq~  \frac{1}{\gamma}  \,  \|A \|_{\infty} \sup_{s \in [-|t|,|t|], y \in \R} \left| \partial_s(f^{(2)} - g^{(2)})(s,z,s,z)\right| e^{-\gamma |s|}.
\ee
 Gathering the previous estimates \eqref{eq:firstbound}, \eqref{eq:zderivative}, \eqref{eq:additionalterm} and considering \eqref{eq:2gammanorms}, we obtain the bound
\begin{equation} \label{eq:boundfirstnorm}
	\left\| \psi^{(1)} -  \phi^{(1)} \right\|_{\Banach_1,\gamma} ~\leq~   \frac{2}{\gamma} \, \| A \|_{\infty} \left\| f^{(2)} - g^{(2)} \right\|_{\Banach_2,\gamma}.
\end{equation}
To bound the norm for the second sector, recall the solution formula \eqref{eq:2sectorexplicitsol}. Since $\psi^{(2)}$ and $\phi^{(2)}$ have the same initial data $\psi^{(2)}_0$, their difference is given by (recall $u_k = z_k-t_k$, $v_k = z_k+t_k$):
\begin{align} \label{eq:difference}
(\psi^{(2)}-\phi^{(2)})(t_1,z_1,t_2,z_2) = \left( \begin{array}{c}
0 \\ \hspace*{-0.2cm} \big[ (B \psi^{(1)})_- - (B \phi^{(1)})_-\big] (\tfrac{v_2-u_1}{2},\tfrac{v_2+u_1}{2}) \, \id_{\{u_1 \geq v_2\} } (t_1,z_1,t_2,z_2)
\\ \hspace*{-0.2cm} \big[ (B \psi^{(1)})_+ - (B \phi^{(1)})_+\big](\tfrac{v_1-u_2}{2},\tfrac{v_1+u_2}{2}) \, \id_{\{v_1 \geq u_2\} } (t_1,z_1,t_2,z_2)  \\ 0
\end{array} \hspace*{-0.2cm} \right)
\end{align}
where we write $B \alpha = \left( \begin{array}{c}
(B \alpha)_- \\ (B \alpha)_+
\end{array} \right)$ for two-component vectors $\alpha$. For positive times $t_1,t_2>0$, only the third line of \eqref{eq:difference} is nonzero, and then the weight factors in the $\gamma$-norms satisfy
\begin{equation}
 e^{- \tfrac{\gamma}{2}(|t_1|+|t_2|)} = e^{- \tfrac{\gamma}{2}(t_1+t_2)}  \leq e^{- \tfrac{\gamma}{2}(t_1+t_2)} \underbrace{e^{- \tfrac{\gamma}{2} (z_1-z_2)}}_{\geq 1 \ \text{as}\  z_1 \leq z_2} = e^{-\gamma \tfrac{v_1-u_2}{2}}= e^{-\gamma \tfrac{|v_1-u_2|}{2}}.
\end{equation}
If one time is positive and the other is negative, $(\psi^{(2)}-\phi^{(2)})(t_1,z_1,t_2,z_2) =0$. For $t_1, t_2 < 0$, only the second line of \eqref{eq:difference} is nonzero, and the weight factors in the $\gamma$-norms satisfy
\begin{equation}
 e^{- \tfrac{\gamma}{2}(|t_1|+|t_2|)} = e^{ \tfrac{\gamma}{2}(t_1+t_2)}  \leq e^{\tfrac{\gamma}{2}(t_1+t_2)} \underbrace{e^{ \tfrac{\gamma}{2} ( z_2-z_1)}}_{\geq 1 \ \text{as}\  z_1 \leq z_2} \leq e^{\gamma \tfrac{v_2-u_1}{2}} =e^{-\gamma \tfrac{|v_2-u_1|}{2}}.
\end{equation}
Now $(v_1-u_2)/2$ and $(v_2-u_1)/2$ appear as the time arguments of $[ (B \psi^{(1)})_\mp - (B \phi^{(1)})_\mp]$, respectively, in \eqref{eq:difference}. It is therefore clear that
\begin{equation}
\left\| \psi^{(2)} -  \phi^{(2)} \right\|_{\Banach_2,\gamma} \leq~ \left\| B\psi^{(1)} -  B\phi^{(1)} \right\|_{\Banach_1,\gamma} \leq~ \left\| B \right\|_\infty \left\|  \psi^{(1)} -  \phi^{(1)} \right\|_{\Banach_1,\gamma}.
\end{equation}
Together with \eqref{eq:boundfirstnorm}, this implies that
\begin{align}
\left\| \big(\psi^{(1)}, \psi^{(2)}\big) - \big(\phi^{(1)}, \phi^{(2)}\big) \right\|_{\Banach_1 \oplus \Banach_2,\gamma} & \leq~ \big(\| B \|_\infty +1 \big) \left\|  \psi^{(1)} -  \phi^{(1)} \right\|_{\Banach_1,\gamma} \nonumber \\
& \leq~ \frac{2}{\gamma} \, \big(\| B \|_\infty +1 \big) \, \| A \|_\infty \left\| f^{(2)} - g^{(2)} \right\|_{\Banach_2,\gamma} \nonumber
\\ & \leq~ C \left\| \big(f^{(1)}, f^{(2)} \big) - \big(g^{(1)},g^{(2)} \big) \right\|_{\Banach_1 \oplus \Banach_2,\gamma} ,
\end{align}
with the constant $C =  \frac{2}{\gamma} (\| B \|_\infty +1) \| A \|_\infty$.

Choosing, for example, $\gamma = 10 \, (\| B \|_\infty +1) \| A \|_\infty$ we have $C= \tfrac{1}{5}<1$. Thus, $F$ is a contraction and Banach's fixed point theorem yields the claim. \qed
\end{proof}

Together with lemma \ref{thm:lemmaF} this proves theorem \ref{thm:2main}, establishing that the multi-time IBC system has a unique global $C_b^1$-solution for all times.

\subsection{Proof of theorem \ref{thm:mmain}} \label{sec:proofmmain}

As a preparation, we prove the following statement which expresses a certain harmony of the interaction terms in the multi-time equations for different sectors of Fock space.

\begin{lemma}[Consistency conditions.] \label{thm:nsectorsconsistency} The system of multi-time equations \eqref{eq:multitimeeqwithsources}, \eqref{eq:sourceterms} and \eqref{eq:mthsourceterm} satisfies the consistency conditions
	\be
		i(\partial_{t_k} - s_k \partial_{z_k}) f_{l,s_1...s_n}^{(n)} = i(\partial_{t_l} - s_l \partial_{z_l}) f_{k,s_1...s_n}^{(n)},
		\label{eq:consistencymsectors}
	\ee
for all $k,l =1,...,n$, $s_1,...,s_n = \pm 1$,
	\begin{enumerate}
	\item[(i)] for $n=N$ (trivially).	
	\item[(ii)] for $n=1,2,...,N-1$ if $\psi^{(n+1)}$ satisfies the multi-time equations \eqref{eq:multitimeeqwithsources}.

\end{enumerate}
\end{lemma}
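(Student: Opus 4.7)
The plan is to handle the two cases of the lemma separately, with case (i) being immediate and case (ii) reducing to a careful symbolic computation.

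\textbf{Case (i)} is trivial: by \eqref{eq:mthsourceterm}, $f^{(N)}_k\equiv 0$ for every $k$, so both sides of the claimed identity vanish identically.

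\textbf{Case (ii)} will be attacked by direct computation. Fix $n\in\{1,\dots,N-1\}$ and $k\neq l$ (the case $k=l$ is automatic); by symmetry in $(k,l)$ we may assume $k<l$. The plan is to start from the explicit form \eqref{eq:sourceterms} of $f^{(n)}_k$ and move the operator $i(\partial_{t_l}-s_l\partial_{z_l})$ through it by the chain rule. Introducing the collision-insertion map $\iota_k(x_1,\dots,x_n)=(x_1,\dots,x_k,x_k,x_{k+1},\dots,x_n)$, the derivative acts on the $(l+1)$-th argument of $\psi^{(n+1)}\circ\iota_k$ because $k<l$ shifts the $l$-th spacetime variable by one in the $(n+1)$-particle indexing; the spin index at position $l+1$ is then $s_l$, so the multi-time equation \eqref{eq:multitimeeqwithsources} for $\psi^{(n+1)}$ converts the derivative into $f^{(n+1)}_{l+1}\circ\iota_k$. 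The analogous computation for $i(\partial_{t_k}-s_k\partial_{z_k})f^{(n)}_l$ (no index shift occurs at position $k$ because $k<l$) produces $f^{(n+1)}_{k}\circ\iota_l$.

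If $n+1=N$, both inner source terms vanish by \eqref{eq:mthsourceterm} and the identity is immediate. Otherwise, one further expansion via \eqref{eq:sourceterms} expresses each side as a sum over $t,u,t',u'$ of products $A^{tu}_{s_k}A^{t'u'}_{s_l}$ multiplied by $\psi^{(n+2)}$ evaluated at the doubly-coincident configuration $(x_1,\dots,x_{k-1},x_k,x_k,x_{k+1},\dots,x_{l-1},x_l,x_l,x_{l+1},\dots,x_n)$ with spin string $(s_1,\dots,s_{k-1},t,u,s_{k+1},\dots,s_{l-1},t',u',s_{l+1},\dots,s_n)$. The key structural observation is that the composite collision maps satisfy $\iota_{l+1}\circ\iota_k=\iota_k\circ\iota_l$ for $k<l$ and that the two orders of inserting $(t,u)$ at positions $k,k+1$ and $(t',u')$ at positions $l+1,l+2$ produce the same spin string; hence the $\psi^{(n+2)}$ factors on both sides coincide.

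The remaining step, and the main technical obstacle, is to verify that the overall prefactors agree. Each side carries a product of two sign factors $(-1)^{(\cdot)}$ from the two applications of \eqref{eq:sourceterms}: one with outer argument $k$ or $l$ from the $n$-sector source term, and one with the shifted inner argument $l+1$ or $k$ coming from the multi-time equation on the $(n+1)$-sector. A careful accounting for these signs, together with the antisymmetric structure of $A^{tu}_{s}$ in $(t,u)$ enforced by \eqref{eq:ntilde} and the spin-swap antisymmetry of $\psi^{(n+2)}$ at the boundary coincidences (inherited from the antisymmetric extension \eqref{eq:renormalizedwf}), matches the two double sums. Once the signs are reconciled the identity \eqref{eq:consistencymsectors} follows on $\mathscr{S}_1^{(n)}$, and continuity of $\psi^{(n+2)}$ on $\overline{\mathscr{S}}_1^{(n+2)}$ extends it to boundary configurations.
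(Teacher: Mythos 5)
Your overall strategy coincides with the paper's: dispose of $n=N$ and $n+1=N$ via \eqref{eq:mthsourceterm}, and otherwise reduce both sides of \eqref{eq:consistencymsectors} to $\psi^{(n+2)}$ at the doubly-coincident configuration by applying \eqref{eq:sourceterms} and then the multi-time equation \eqref{eq:multitimeeqwithsources} for $\psi^{(n+1)}$, taking into account that for $k<l$ the derivative in the $l$-th variable hits argument number $l+1$ of $\psi^{(n+1)}\circ\iota_k$. Up to that point your computation tracks the paper's Eqs.\ \eqref{eq:cccalc1}--\eqref{eq:cccalc2}, and your observation that the two insertion maps commute and produce the same spin string is exactly what the paper uses.

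The gap is in the final step, which you defer to a ``careful accounting'' that you do not perform and whose proposed ingredients are not available in the model. In the paper the two double sums are identified purely by relabelling the dummy indices $(t,u)\leftrightarrow(v,w)$ after asserting that both prefactors equal $(-1)^{k+l}$; no structural property of $A$ or of $\psi^{(n+2)}$ enters. You instead invoke (a) antisymmetry of $A^{tu}_s$ in $(t,u)$ and (b) spin-swap antisymmetry of $\psi^{(n+2)}$ at coincidence configurations. Neither holds: \eqref{eq:N}--\eqref{eq:ntilde} give $A^{--}_s=A^{++}_s=0$ and $A^{+-}_s=e^{i\phi}A^{-+}_s$, which is antisymmetry only in the special case $\phi=\pi$; and $\psi^{(n+2)}$ is defined only on the ordered domain $\overline{\spacelike}_1^{(n+2)}$, where the relation between the components $\ldots-+\ldots$ and $\ldots+-\ldots$ at a coincidence point is the IBC \eqref{eq:mgeneralibcs} (involving $e^{i\theta}$ and the lower sector), not antisymmetry --- the antisymmetric extension \eqref{eq:renormalizedwf} is applied only a posteriori to define $\widetilde{\psi}$ on all of $\spacelike$ and imposes no constraint on $\psi$ itself. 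Your sketch is also internally inconsistent: you first assert that after commuting the insertion maps the $\psi^{(n+2)}$ factors on the two sides already coincide, and then propose to use an antisymmetry of those same factors to absorb a leftover sign; you cannot have both. Note that the raw prefactors are $(-1)^{l}\cdot(-1)^{k}$ on one side and $(-1)^{k}\cdot(-1)^{l+1}$ on the other (the extra $+1$ coming from precisely the index shift you identified), so the sign bookkeeping is the entire content of this step; it has to be settled by explicit computation as in the paper, not by appeal to symmetries the model does not possess. Finally, the extension to boundary configurations by continuity is unnecessary, since the consistency condition is only needed on $\spacelike_1^{(n)}$.
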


\begin{proof}
	In the cases $n=1$ and $n=N$ there is nothing to show (note \eqref{eq:mthsourceterm} for $n=N$).
	 For $n=N-1$, we have, for all $k = 1,...,N-1$:
	\begin{align}
		&i(\partial_{t_k} - s_k \partial_{z_k}) f_{l,s_1...s_{N-1}}^{(N-1)}(x_1,...,x_{N-1}) \nonumber\\
		&\stackrel{\eqref{eq:sourceterms}}{=} \sum_{t ,u = \pm 1} (-1)^l A^{tu}_{s_l} i(\partial_{t_k} - s_k \partial_{z_k}) \psi^{(N)}_{s_1 ... s_{l-1} t \, u\, s_{l+1} ... s_{N-1}} (x_1,...,x_{l-1}, x_l, x_l, x_{l+1},...,x_{N-1}) \nonumber\\
		&= 0
	\end{align}
	because of \eqref{eq:mthsourceterm}. In particular, \eqref{eq:consistencymsectors} follows.
		
	 For $2 \leq n \leq N-2$, w.l.o.g. let $k < l$ and consider: 
	\begin{align}
		&i(\partial_{t_k} - s_k \partial_{z_k}) f_{l,s_1...s_{n}}^{(n)}(x_1,...,x_n) \nonumber\\
		&\stackrel{\eqref{eq:sourceterms}}{=} \sum_{t ,u = \pm 1} (-1)^l A^{tu}_{s_l} i(\partial_{t_k} - s_k \partial_{z_k}) \psi^{(n+1)}_{s_1 ... s_{l-1} t \, u\, s_{l+1} ... s_n} (x_1,...,x_{l-1}, x_l, x_l, x_{l+1},...,x_n) \nonumber\\
		&\stackrel{\eqref{eq:multitimeeqwithsources},\eqref{eq:sourceterms}}{=} \sum_{t ,u,v,w = \pm 1} (-1)^{k+l} A^{tu}_{s_l}A^{vw}_{s_k} \psi^{(n+2)}_{s_1 ... s_{k-1} v \, w\, s_{k+1} s_{l-1} t \, u\, s_{l+1} ... s_n} (x_1,...,x_{k-1}, x_k, x_k, x_{k+1},...,\nonumber\\
		&~~~~~~~~~~~~~~~~~~~~~~~~~~~~~~~~~~~~~~~~~~~~~~~~~
		~~~~~~~~~~~~~~~~~~~~~~~x_{l-1}, x_l, x_l, x_{l+1},...,x_n).
		\label{eq:cccalc1}
	\end{align}
	A similar calculation yields:
	\begin{align}
		&i(\partial_{t_l} - s_l \partial_{z_l}) f_{k,s_1...s_{n}}^{(n)}(x_1,...,x_n) \nonumber\\
		&= \sum_{t ,u,v,w = \pm 1} (-1)^{k+l} A^{tu}_{s_k} A^{vw}_{s_l} \psi^{(n+2)}_{s_1 ... s_{k-1} t \, u\, s_{k+1} s_{l-1} v \, w\, s_{l+1} ... s_n} (x_1,...,x_{k-1}, x_k, x_k, x_{k+1},...,\nonumber\\
		&~~~~~~~~~~~~~~~~~~~~~~~~~~~~~~~~~~~~~~~~~~~~~~~~~
		~~~~~~~~~~~~~~~~~~~~~~~x_{l-1}, x_l, x_l, x_{l+1},...,x_n).
		\label{eq:cccalc2}
	\end{align}
	Relabeling $t,u \leftrightarrow v,w$ shows that \eqref{eq:cccalc1}, \eqref{eq:cccalc2} agree; hence we obtain \eqref{eq:consistencymsectors}. \qed
\end{proof}

The idea now is to prove Thm. \ref{thm:mmain} using a fixed point argument. In fact, it is possible to explicitly write down the solution of the model for a particular sector provided given the wave function of the neighboring sectors. First we explain how to do this heuristically. Then we define the fixed point map and show that it is, indeed, a contraction in a sequence of lemmas.

\paragraph{Heuristics.} We now explain at the example $N=3$ how to obtain a solution of the multi-time equations for a particular sector, given the solution on the neighboring sectors. In our previous work \cite{LN:2015}, we constructed the solution for $A = 0$ such that probability is conserved for each sector, separately. This was done following the so-called \textit{multi-time characteristics} back to the initial value surface at time zero. The multi-time characteristic associated with a certain component $\psi^{(3)}_{s_1 s_2 s_3}$ and a particular point $(t_1,z_1;t_2,z_2;t_3,z_3) \in \spacelike^{(3)}_1$ is defined as the set that contains $(t_1,z_1;t_2,z_2;t_3,z_3)$ and along which that component would be constant by the homogeneous part of the multi-time equations \eqref{eq:multitimeeqwithsources}, $(\partial_{t_k} - s_k \partial_{z_k}) \psi^{(3)}_{s_1 s_2 s_3} = 0$, $k=1,2,3$. This leads to Cartesian products of three lines given by the characteristic variables $c_k:=z_k+s_kt_k$ appearing in the multi-time equations. For example, for $\psi^{(3)}_{+-+}(t_1,z_1;t_2,z_2;t_3,z_3)$ the multi-time characteristic is given by the set 
\begin{equation}
\left\{ \left. (s_1,y_1;s_2,y_2;s_3,y_3) \in \R^6 \ \right| \ s_1 + y_1 = c_1, s_2-y_2=c_2, s_3+y_3=c_3 \right\}.
\end{equation}
Figure \ref{fig:chars} shows two examples of multi-time characteristics with the three lines all drawn in one space-time diagram. 
\begin{figure}[h] \begin{center}\includegraphics[scale=0.85]{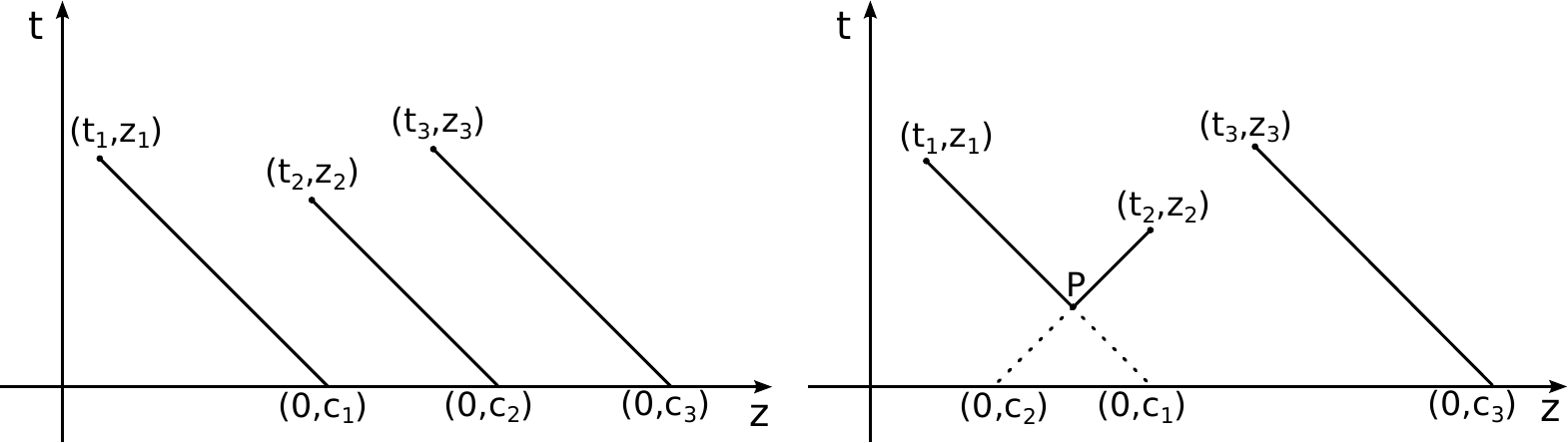}
\caption{Two examples for multi-time characteristics. The one on the left, for the component $\psi^{(3)}_{+++}$, does not intersect the coincidence point set $\coincidence$. The one the right, for $\psi^{(3)}_{+-+}$, intersects $\coincidence$ (cf. point $P$). } \label{fig:chars}
\end{center} \end{figure}
In the left picture, for the component $\psi^{(3)}_{+++}$, the lines of the multi-time characteristic can be followed back to the initial value surface without intersecting the boundary $\partial \spacelike^{(3)}$. ``Following back'' here means to choose a certain curve $\gamma(\tau)$ in the characteristic which connects $(t_1,z_1;t_2,z_2;t_3,z_3)$ with the point $(0,c_1;0,c_2;0,c_3)$.  Along the curve, the Dirac equation in the respective variables becomes an ordinary differential equation of the form $\frac{d}{d\tau} \psi(\gamma(\tau)) = f(\gamma(\tau))$ which can be integrated easily. Therefore, we will define an operator $I$ below that just integrates the inhomogeneity along the characteristic lines and gives the solution. We will choose our curve $\gamma(\tau)$ corresponding to a certain path in the space of the time variables along the three lines which comprise the characteristic. In our case:
\begin{equation}
(0,0,0) \longrightarrow (t_1,0,0) \longrightarrow (t_1,t_2,0)  \longrightarrow (t_1,t_2,t_3) .
\end{equation}
As it happens often in the study of multi-time equation, a change of this path in the time variables must not change the final result, which requires a certain integrability condition, called the \textit{consistency condition}. This condition was shown in lemma \ref{thm:nsectorsconsistency}.
In the right half of the picture, we additionally have to take the boundary condition into account because at the point $(P;P;t_3,z_3) \in \coincidence$, the characteristic intersects the coincidence point set $\coincidence$ and consequently leaves the domain. At this vertex point, the IBC has to be used, which we will implement via another operator $V$. The IBC then relates the value of the component $\psi^{(3)}_{s_1s_2 s_3}$ (here $\psi^{(3)}_{+-+}$) with the components of $\psi^{(2)}$ and with a different component of $\psi^{(3)}$, (here $\psi_{-++}^{(3)}$) which is associated with a different multi-time characteristic with one vertex less. One then follows this new multi-time characteristic back in time until either the boundary is reached again (then one repeats the process with a different component) or the initial surface $t_1 = t_2 = t_3 = 0$. In the picture on the right of Fig. \ref{fig:chars}, the IBC only has to be used once, as there is only one vertex point. In general, this procedure results in a formula where one uses the operators $I$ and $V$ alternatingly to obtain the solution from the initial data. The number of vertices in diagrams such as Figure \ref{fig:chars} determines how many steps the process takes.

\paragraph{Solution formula.} We construct the fixed point map sector-wise. To this end, let $n \in \N$ and assume that for all $k=1,...,n$, $j=1,...,n-1$, boundary functions $g^{(n)}_j \in D_{n-1}$, inhomogeneities $f^{(n)}_k \in D_n$ and initial values $\psi^{(n)}_0 \in C_b^1(Z_n, \C^{2^n})$ are given (see \eqref{eq:Zn} for the definition of $Z_n$). We shall solve the following initial boundary value problem: 
\begin{align} \label{eq:thegeneralsystem}
	i \left( \partial_{t_k} - s_k \partial_{z_k} \right) \psi^{(n)}_{s_1...s_n} ~&=~ f^{(n)}_{k,s_1...s_n}, 
	\nonumber\\ \left( \psi^{(n)}_{s_1...s_{j-1}-+s_{j+2}...s_n} - e^{i \theta} \psi^{(n)}_{s_1...s_{j-1}+-s_{j+2}...s_n} \right) & (x_1,...,x_j,x_j,x_{j+2},...,x_n)  \nonumber\\ & =~ g^{(n)}_{j,s_1...s_{j-1}s_{j+2}...s_n}(x_1,...,x_j,x_{j+2},...,x_n), \nonumber\\
	\ \psi^{(n)} |_{t_1=...=t_n=0} ~&=~ \psi^{(n)}_0.
\end{align}
Throughout the section we assume that the initial data are compatible with the boundary conditions in the sense of Eqs.\ \eqref{eq:IBCforinitialvaluesN} and \eqref{eq:IBCforinitialvaluesN2}.

The solution $\psi^{(n)}$ shall be constructed through repeated application of the operators $I$ and $V$ which we define now. For every $(t_1,...,t_n)\in \R^n $ and $t \in \R$, we define a map $I_t^{(t_1,...,t_n)}$ with ``$I$'' for ``integration along the characteristic''. 
\begin{equation} \begin{split} \label{eq:defI}
&I_t^{(t_1,...,t_n)}:  \ C_b^1(Z_n,\C^{2^n}) \to C_b^1(Z_n,\C^{2^n}),\\\
& \big(I_t^{(t_1,...,t_n)}\phi \big)_{s_1...s_n}(z_1,...,z_n)~=~\phi_{s_1...s_n}(c_1, ... , c_n)
\\ & -i \sum_{k=1}^n \int_{0}^{t_k} ds ~ f^{(n)}_{k,s_1...s_n}(t+t_1,z_1;...;t+t_{k-1},z_{k-1};t+s, c_k-s_k s; t, c_{k+1};...;t,c_n). \end{split}
\end{equation}
Here, $c_k=z_k + s_k t_k$ where $t_k$ is the time variable in the upper index of $I$ and $z_k$ the spatial variable from the argument of $I_t^{(t_1,...,t_n)}\phi$.

Moreover, for every $t \in \R, j \in \{1,...,n-1\}$ we define maps $V^{t}_j: C_b^1(Z_n,\C^{2^n}) \to C_b^1(Z_n,\C^{2^n})$ with ``$V$'' for ``switching indices at the vertex'' (at time $t$) by:
\begin{equation} \label{eq:defV}
\begin{array}{l c l} \left( V^{t}_j \phi \right)_{s_1...s_j s_j...s_n}(z_1,...,z_n) & = & \phi_{s_1...s_j s_j...s_n}(z_1,...,z_n), \vspace*{0.2cm}
\\  \left( V^t_j \phi \right)_{s_1...-+...s_n}(z_1,...,z_n) & = &
e^{i \theta} \phi_{s_1...+-...s_n}(z_1,...z_{j+1},z_j,...,z_n) \\ & & + \ g^{(n)}_{j,s_1...s_n} (t,z_1;...;\widehat{t,z_j};...;t,z_n),  \vspace*{0.2cm}
\\ \left( V^t_j \phi \right)_{s_1...+-...s_n}(z_1,...,z_n) & = &
e^{-i \theta} \big[ \phi_{s_1...-+...s_n}(z_1,...,z_{j+1},z_j,...,z_n)  \\ & &  - \ g^{(n)}_{j,s_1...s_n} (t,z_1;...;\widehat{t,z_j};...;t,z_n) \big]  ,
\end{array}
\end{equation}
where $\widehat{(\cdot)}$ denotes omission.

Furthermore, for every point $(t_1,z_1;...;t_n,z_n) \in \mathscr{S}_1^{(n)}$ and every spin index $s_1,...,s_n$, we define a set of \textit{collisions} according to the following rules. As previously, we let $c_k=z_k + s_kt_k$. A collision is a pair of indices $(j,k)$ in the set
\begin{equation}
\mathsf{Collisions} := \left\{ (j,k) \in \{1,...,n\}^2 :  j<k \ \text{but} \ c_j > c_k \right\}.
\end{equation}
$\mathsf{Collisions}$ is a finite set with $L:=|\mathsf{Collisions}|< n^2-1$. Its elements coincide with the index pairs of those lines which cross in the diagrams in Fig.\ \ref{fig:chars}. With each collision $(j,k)$, we associate a \textit{collision time}
\begin{equation}
\tau_{(j,k)} := \frac{1}{2}(c_j-c_k).
\end{equation}
We label these collision times in increasing order\footnote{There is a zero measure set of points for which this ordering is not possible since several collision times are equal. We omit these points in the upcoming considerations. Later, one can recover the value of the wave function at these points by continuation.}, i.e.\ identify each $\tau_{(j,k)} =: \tau_a$ with $a=1,...,L$ such that 
$\tau_1 < \tau_2 < ... < \tau_L$. Set $\tau_0 := 0$. Moreover, each collision is assigned an  index $k_a$ according to the formula:
\begin{equation} \label{eq:defparticleindex}
k_a((j,k)) := j + \left| \{ (j,l) \in \mathsf{Collisions} : l < k \} \right|, \quad a=1,...,L.
\end{equation}
In the diagrams in  Fig.\ \ref{fig:chars}, this number corresponds to the numbers of lines left of the vertex where the lines $j ,k$ cross plus one. The $\tau_a$ and $k_a$ are functions of the space-time point and the spin index only. Most importantly, they allow us to write down an explicit solution formula which is obtained by following the characteristics from collision to collision, as motivated heuristically above:
\begin{equation} \label{eq:solNIBC}
\psi^{(n)}_{s_1...s_n}(t_1,z_1;...;t_n,z_n) = \hspace*{-3pt} \left( I_{\tau_L}^{(t_1-\tau_L,...,t_n-\tau_L)} \left( \prod_{a=1}^L V^{\tau_a}_{k_a} I_{\tau_{a-1}}^{(\tau_a - \tau_{a-1},...,\tau_a - \tau_{a-1})} \right) \psi^{(n)}_0 \right) _{s_1...s_n} \hspace*{-0.9cm} (z_1,...,z_n).
\end{equation}
It is understood that the factors in the product are written from right to left, i.e.:
\begin{equation}
\left( I_{\tau_L}^{(t_1-\tau_L,...,t_n-\tau_L)} V^{\tau_L}_{k_L} I^{(\tau_L - \tau_{L-1},...,\tau_L - \tau_{L-1})} \dots V^{\tau_1}_{k_1} I_0^{(\tau_1,...,\tau_1)} \psi^{(n)}_0 \right)_{s_1...s_n} (z_1,...,z_n).
\end{equation}

\begin{lemma} \label{thm:existencinsectorn} 
Let $n \in \N$. The function $\psi^{(n)}$ defined by \eqref{eq:solNIBC} for given $f^{(n)}_k$ and $g^{(n)}_j$ is the unique solution of the IBC system \eqref{eq:thegeneralsystem} in the $n$-th sector,
provided the inhomogeneities satisfy, for all $j \neq k$, the consistency conditions
\begin{equation} \label{eq:ccgeneraln}
i\left( \partial_{t_k} - s_k \partial_{z_k} \right) f_{j, s_1...s_n}^{(n)} ~=~ i\left( \partial_{t_j} - s_j \partial_{z_j} \right) f_{k, s_1...s_n}^{(n)} . 
\end{equation}
\end{lemma}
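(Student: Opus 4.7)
The plan is to establish existence by verifying that the explicit formula \eqref{eq:solNIBC} satisfies each component of the IBC system, and uniqueness by a standard characteristic argument applied to the difference of two putative solutions. I would organize the argument around the structure of the backward characteristic: the collisions $\tau_0=0<\tau_1<\cdots<\tau_L$ partition the characteristic into $L+1$ consecutive pieces, each lying in the interior of $\mathscr{S}^{(n)}_1$ away from coincidences.

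First, I would analyze the operator $I_t^{(t_1,\ldots,t_n)}$ in isolation. A direct differentiation, using the fundamental theorem of calculus, shows that applying $i(\partial_{t_k}-s_k\partial_{z_k})$ to $(I_0^{(t_1,\ldots,t_n)}\phi)_{s_1\ldots s_n}(z_1,\ldots,z_n)$ produces the $k$-th source term $f^{(n)}_{k,s_1\ldots s_n}$ plus a collection of cross-terms coming from the derivatives hitting integrands corresponding to indices $l\neq k$. These cross-terms assemble into a sum of differences of the form $(\partial_{t_k}-s_k\partial_{z_k})\int_0^{t_l}ds\,f^{(n)}_l - (\partial_{t_l}-s_l\partial_{z_l})\int_0^{t_k}ds\,f^{(n)}_k$, each of which vanishes exactly by the consistency condition \eqref{eq:ccgeneraln}. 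Thus, on any collision-free piece, the formula locally solves the multi-time PDEs.

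Second, I would analyze the effect of the vertex operator $V^{\tau_a}_{k_a}$ at a collision. At the collision time $\tau_a$, the $k_a$-th and $(k_a+1)$-th characteristic lines meet, i.e.\ $c_{k_a}=c_{k_a+1}$, placing the configuration at a coincidence point. Reading off \eqref{eq:defV} and comparing with the IBC line in \eqref{eq:thegeneralsystem}, one verifies that $V^{\tau_a}_{k_a}$ exactly enforces the jump relation between the $-+$ and $+-$ spin components (with the boundary source $g^{(n)}_{k_a}$ present), while leaving the $--$ and $++$ components untouched, and also swaps the two spatial arguments past the coincidence to continue the backward propagation on the correctly ordered domain. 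Chaining these together yields \eqref{eq:solNIBC} as the unique continuous extension across every collision. The initial condition is automatic: at $t_1=\cdots=t_n=0$ we have $\mathsf{Collisions}=\emptyset$, the product is empty, and $I_0^{(0,\ldots,0)}\psi^{(n)}_0=\psi^{(n)}_0$.

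Uniqueness follows from linearity: if $\psi^{(n)}$ and $\tilde\psi^{(n)}$ are two $C^1_b$-solutions, their difference satisfies the homogeneous problem (trivial inhomogeneity and trivial boundary data) with zero initial values. The same backward characteristic procedure, with the homogeneous $V$-operator, traces any configuration back through finitely many collisions to the initial surface, yielding zero. The main technical obstacle is the regularity claim across the surfaces where the combinatorial collision structure changes (for instance, where two collision times $\tau_a$ merge or a collision is on the verge of appearing). Continuity across such surfaces is built into the construction because the $V$-operator enforces the IBC from both sides; $C^1_b$-matching of the one-sided partial derivatives, however, requires the compatibility conditions \eqref{eq:IBCforinitialvaluesN} and \eqref{eq:IBCforinitialvaluesN2} on the initial data, which by virtue of the PDE propagate to $C^1$-matching of derivatives at coincidence points at all later times. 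Carefully tracking these one-sided limits through the alternating application of $I$ and $V$ will be the bulk of the work.
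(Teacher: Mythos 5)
Your proposal follows essentially the same route as the paper: the paper proves the lemma by induction on the number $L$ of collisions, verifying the multi-time PDEs on collision-free pieces via the operator $I$ (where the consistency conditions \eqref{eq:ccgeneraln} are used, through a telescoping sum and the fundamental theorem of calculus, to compensate for the asymmetry of the integral formula in the particle index) and enforcing the IBC at each vertex via $V$, with uniqueness built into the same induction --- exactly the structure you describe. The only minor differences are that the cross-terms you mention do not vanish outright but rather combine, via \eqref{eq:ccgeneraln}, to shift the evaluation point of $f_k^{(n)}$ from the initial characteristic position to the actual configuration, and that the $C^1$-matching issue you flag as ``the bulk of the work'' is deferred by the paper to the well-definedness discussion of the fixed-point map rather than carried out inside this lemma.
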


\begin{proof}
For $L \in \N_0$, we prove via induction over the number $L$ of collisions the statement $\mathbf{A(L)}$: \emph{A function $\psi^{(n)} \in \mathcal{D}$ solves the IBC system \eqref{eq:thegeneralsystem} at all points $(t_1,z_1;...;t_n,z_n)$ and for all spin indices $s_1,...,s_n$ for which $|\mathsf{Collisions}| \leq L$ if and only if it is given by \eqref{eq:solNIBC} at those points.}
\\[0.2cm]
\textbf{Base Case} $\mathbf{A(0)}$. At points with $|\mathsf{Collisions}|=0$, \eqref{eq:solNIBC} yields
\begin{align} \label{eq:nocollision}
\psi^{(n)}_{s_1...s_n}&(t_1,z_1;...; t_n,z_n) ~=~ \psi^{(n)}_{0,s_1...s_n}(c_1,...,c_n) 
\nonumber\\ -i &  \sum_{k=1}^n \int_{0}^{t_k} ds ~ f_{k,s_1...s_n}^{(n)}(t_1,z_1;...;t_{k-1},z_{k-1};s, c_k-s_ks; 0, c_{k+1};...;0,c_n).
\end{align}
We first show that \eqref{eq:nocollision} indeed is a solution of the IBC system. As $L=0$,  the IBC does not come into play here. The initial conditions are satisfied by construction (see \eqref{eq:solNIBC}). We now calculate the derivatives w.r.t. the $n$-th coordinates. Omitting spin indices, we find:
\begin{align} \label{eq:calculforn}
i (\partial_{t_n} - s_n \partial_{z_n} ) \psi^{(n)} ~&=~   f_n^{(n)}(t_1,z_1;...;t_n, c_n-s_nt_n)  + \int_0^{t_n} \hspace*{-0.1cm} ds \underbrace{i (\partial_{t_n} - s_n\partial_{z_n} ) f_{k}^{(n)}(...;s,c_n-s_n s)}_{=0}
\nonumber\\ &~~~ + \sum_{k=1}^{n-1} \int_0^{t_k} \hspace*{-0.1cm} ds \underbrace{i (\partial_{t_n} - s_n\partial_{z_n} ) f_{k}^{(n)}(...;0,c_n)}_{=0} 
\nonumber\\ &=~ f_n^{(n)}(t_1,z_1;...;t_n, z_n).
\end{align}
Next, we consider an arbitrary particle index $j \neq n$. Because \eqref{eq:nocollision} is not symmetric in the particle indices, more work is required to see that the multi-time equation is satisfied. We shall use the consistency condition \eqref{eq:ccgeneraln} to show this. We compute via a telescoping sum:
\begin{equation} \label{eq:telescopesum}
\begin{split} 
\int_0^{t_j} ds \, & f_j^{(n)}(t_1,z_1;...;t_{j-1},z_{j-1};s,c_j-s_js;0,c_{j+1};...;0,c_n)
\\  =  \int_0^{t_j} ds \,& f_j^{(n)}(t_1,z_1;...;t_{j-1},z_{j-1};s,c_j-s_js;t_{j+1},z_{j+1};...;t_n,z_n)
\\  + \sum_{l=j+1}^n  \int_0^{t_j} ds \,  \Big( & f_j^{(n)}(t_1,z_1;...;t_{j-1},z_{j-1};s,c_j-s_js;0,c_{j+1};...;0,c_{l-1};0,c_l;t_{l+1},z_{l+1};...;t_n,z_n)  \\    -  & f_j^{(n)}(t_1,z_1;...;t_{j-1},z_{j-1};s,c_j-s_js;0,c_{j+1};...;0,c_{l-1};t_l,z_l;t_{l+1},z_{l+1};...;t_n,z_n) \Big).
\end{split}
\end{equation}
Next, we use the fundamental theorem of calculus, abbreviating $(t_1,z_1; ...;t_{j-1},z_{j-1})$ by $\star$ and $(t_{l+1},z_{l+1};...;t_n,z_n)$ by $\sharp$,
\begin{equation} \label{eq:fcalculation} \begin{split}
\int_0^{t_j} ds  \Big( & f_j^{(n)}(\star; s,c_j-s_js;0,c_{j+1};...;0,c_{l-1};0,c_l; \sharp) \\    - & f_j^{(n)}(\star ;s,c_j-s_js;0,c_{j+1};...;0,c_{l-1};t_l,z_l;\sharp) \Big) \\ = - \int _0^{t_j} ds \int_0^{t_l} dr & \, \frac{d}{dr} f_j^{(n)}(\star; s,c_j-s_js;0,c_{j+1};...;0,c_{l-1};r,c_l-s_lr;\sharp) 
\\ \stackrel{\eqref{eq:ccgeneraln} }{=} - \int_0^{t_l} dr  \int _0^{t_j} ds & \, \frac{d}{ds} f_l^{(n)}(\star;s,c_j-s_js;0,c_{j+1};...;0,c_{l-1};r,c_l-s_lr;\sharp) 
\\ = \int_0^{t_l} dr \Big( &  f_l^{(n)}(\star;0,c_j;0,c_{j+1};...;0,c_{l-1};r,c_l-s_lr;\sharp)  \\  - & f_l^{(n)}(\star;t_j,z_j;0,c_{j+1};...;0,c_{l-1};r,c_l-s_lr;\sharp) \Big),
 \end{split}
\end{equation}
 where the consistency condition \eqref{eq:ccgeneraln} has been used to obtain $\tfrac{d}{dr}f_j^{(n)}(\cdots) = \tfrac{d}{ds} f_l^{(n)}(\cdots)$ with the argument $(\cdots)$ as in \eqref{eq:fcalculation}. Inserting the result of \eqref{eq:fcalculation} in all the summands in \eqref{eq:telescopesum} leads to
\begin{equation} \begin{split}
\int_0^{t_j} ds \, & f_j^{(n)}(\star;s,c_j-s_js;0,c_{j+1};...;0,c_n)
\\ =  \int_0^{t_j} ds \, & f_j^{(n)}(\star;s,c_j-s_js;t_{j+1},z_{j+1};...;t_n,z_n)
\\ + \sum_{l=j+1}^n \int_0^{t_l} ds \, \Big( &  f_l^{(n)}(\star;0,c_j;0,c_{j+1};...;0,c_{l-1};s,c_l-s_l s;\sharp)  \\  - & f_l^{(n)}(\star;t_j,z_j;0,c_{j+1};...;0,c_{l-1};s,c_l-s_l s;\sharp) \Big).
\end{split}
\end{equation} 
Inserting this expression into \eqref{eq:nocollision} yields: 
\begin{equation}
\begin{split}
\psi^{(n)}_{s_1...s_n} & (t_1,z_1;...;t_n,z_n) =  \psi^{(n)}_0(c_1,...,c_n) 
\\ -i & \sum_{k < j} \int_{0}^{t_k} ds ~ f_{k,s_1...s_n}^{(n)}(t_1,z_1;...;t_{k-1},z_{k-1};s, c_k-s_ks; 0, c_{k+1};...;0,c_n)
\\ -i & \int_{0}^{t_j} ds ~ f_{j,s_1...s_n}^{(n)}(t_1,z_1;...;t_{k-1},z_{k-1};s, c_j-s_js; t_{k+1}, z_{k+1};...;t_n,z_n)
\\ -i & \sum_{k>j} \int_{0}^{t_k} ds ~ f_{k,s_1...s_n}^{(n)}(t_1,z_1;...;0,c_j;...;t_{k-1},z_{k-1};s, c_k-s_ks; 0, c_{k+1};...;0,c_n). 
\end{split}
\end{equation} 
Considering this expression, it becomes obvious that the same calculation as in \eqref{eq:calculforn} results in
\begin{equation}
i (\partial_{t_j} - s_j \partial_{z_j} ) \psi^{(n)}_{s_1...s_n}(t_1,z_1,...,t_n,z_n)  = f_{j,s_1...s_n}^{(n)}(t_1,z_1,...,t_n, z_n),
\end{equation} 
as desired. So \eqref{eq:nocollision} defines a solution of the IBC system for points without collisions. 

The fact that \eqref{eq:nocollision} gives the only solution of the IBC
system on the points under consideration follows from the uniqueness of
solutions of each single equation $(\partial_{t_k} - s_k \partial_{z_k})
\psi^{(n)}= f^{(n)}_k$ )see \cite{evans} and compare also \cite[thm.\ 4.4]{LN:2015}).
\\[0.2cm]
\textbf{Induction step} $\mathbf{A(L-1)} \Rightarrow \mathbf{A(L)}$. Let a point $(t_1,z_1,...,t_n,z_n)$ and spin indices $(s_1,...,s_n)$ be given such that $|\mathsf{Collisions}|=L$. The collision with the greatest time $t_L$ must have the form $(k_L, k_{L}+1)$ with $k_L$ given by \eqref{eq:defparticleindex}. (This can be seen from diagrams such as Fig.\ \ref{fig:chars} and is easy to prove using the claim in the proof of \cite[lemma 6.2]{LN:2015}.) W.l.o.g. we assume that $s_{k_L} = +1$ and $s_{k_{L}+1}=-1$. The reversed case is the only other possible one; it can be treated analogously.

We proceed in two steps: First we connect the value of $\psi^{(n)}$ at $(t_1,z_1,...,t_N,z_N)$ with the value at the largest collision time $\tau_L$ via integration along the multi-time characteristics. Secondly, we implement the IBC at time $\tau_L$ via the operator $V$ acting on a wave function component with only $L-1$ collisions (that is known due to the induction assumption).

For the first step, suppose the function $\psi^{(n)}_{\tau_L} = \psi^{(n)}|_{t_1=...=t_n=\tau_L} \in C_b^1(Z_n, \C^{2^n})$ is given. Analogously to the base case $L=0$, it follows that the component $\psi^{(n)}_{s_1...s_n}$ solves the equations
\begin{equation}
i \left( \partial_{t_k} - s_k \partial_{z_k} \right) \psi^{(n)}_{s_1...s_n} ~=~ f_{k,s_1...s_n}^{(n)}, \quad k=1,...,n
\end{equation}
if and only if it is given by
\begin{equation}
\psi^{(n)}_{s_1...s_n}(t_1,z_1,...,t_n,z_n) ~=~ \left( I_{\tau_L}^{(t_1-\tau_L,...,t_n-\tau_L)}\psi^{(n)}_{\tau_L} \right)_{s_1...s_n} (z_1,...,z_n).
\end{equation} 
In the second step, we want to find $\psi^{(n)}_{\tau_L}$. The multi-time characteristic associated with $\psi^{(n)}_{s_1...s_n}(t_1,z_1,...,t_n,z_n)$ intersects the boundary of $\mathscr{S}_1^{(n)}$ in $P:=(\tau_L, z_1 + s_1 (\tau_L - t_1), ..., \tau_L,$ $z_n + s_n (\tau_L - t_n))$.
At that point, the component $\psi^{(n)}_{s_1...-+...s_n}$ (where $- = s_{k_L}$ and $+ = s_{k_{L+1}}$) has one collision less than $\psi^{(n)}_{s_1...+-...s_n}$, so it has $L-1$ collisions with the same times $\tau_a$ and indices $k_a$, $a=1,...,L-1$. By the induction hypothesis, it is then given by \eqref{eq:solNIBC}, i.e.
\begin{equation} \begin{split}
& \psi^{(n)}_{s_1...-+...s_n}(P) =  \bigg( I_{\tau_{L-1}}^{(\tau_L-\tau_{L-1},...,\tau_L-\tau_{L-1})}  \times \\ & \times \bigg( \prod_{a=1}^{L-1} V^{\tau_a}_{k_a} I_{\tau_{a-1}}^{(\tau_a - \tau_{a-1},...,\tau_a - \tau_{a-1})} \bigg) \psi^{(n)}_0 \bigg) _{s_1...-+...s_n}(z_1 + s_1 (\tau_L - t_1),...,z_n + s_n (\tau_L - t_n)). \end{split}
\end{equation}
By comparison of \eqref{eq:defV} with the IBC from \eqref{eq:thegeneralsystem}, it becomes apparent that the latter evaluated at $P$ is equivalent to
\begin{equation}
\psi^{(n)}_{s_1...+-...s_n} (P) = \left( V_{k_L}^{\tau_L} \psi^{(n)}(\tau_L, \cdot, ..., \tau_L, \cdot) \right)_{s_1...+-...s_n} (z_1 + s_1 (\tau_L - t_1),...,z_n + s_n (\tau_L - t_n)).
\end{equation}
Therefore, combining both steps, we see that the system \eqref{eq:thegeneralsystem} is satisfied if and only if $\psi^{(n)}_{s_1...+-...s_n}(t_1,z_1,...,t_n,z_n)$ is given by \eqref{eq:solNIBC}. 
This finishes the induction and thus the proof. \qed
\end{proof}

The insight that Eq. \eqref{eq:solNIBC} gives the solution in a specified sector is the basis of the following central definition. 

\begin{definition}[Definition (fixed point map).] Recall the definition \eqref{eq:banachn} of the spaces $\Banach_n$ and let
\begin{equation}
\mathcal{D}:= \left\lbrace (\psii, ..., \psi^{(N)}) \in \Banach_1 \oplus \dots \oplus \Banach_N : \psi^{(n)}\big|_{t_1=...=t_n=0}=\psi^{(n)}_0  \ \forall n =1,...,N \right\rbrace.
\end{equation}
Then the map $F: \mathcal{D} \to \mathcal{D}$, $(v^{(1)}, ..., v^{(N)}) \mapsto (\psi^{(1)},...,\psi^{(N)})$ is defined by the following procedure:
\begin{itemize}
\item Let $\psi^{(1)}$ be given by formula \eqref{eq:solNIBC} for $n=1$ with no boundary terms ($g^{(1)} = 0$ since $\partial \mathscr{S}^{(1)} = \emptyset$) and the inhomogeneity 
	\begin{equation}
f^{(1)}_{1,s}(x_1) ~= -\sum_{t,u=\pm 1} A^{tu}_{s} v^{(2)}_{tu}(x_1,x_1).
	\end{equation}
\item Repeat the following for all $n=2,...,N-1$ in ascending order: $\psi^{(n)}$ is defined by formula \eqref{eq:solNIBC} with boundary terms given, as in \eqref{eq:mgeneralibcs}, by the already determined function $\psi^{(n-1)}$ and the inhomogeneity
	\begin{equation}
f_{k,s_1...s_n}^{(n)}(x_1,...,x_n) ~= \sum_{t,u = \pm 1} (-1)^k A_{s_k}^{tu} v^{(n+1)}_{s_1...s_{k-1}\,  t \, u \, s_{k+1}...s_n}(x_1,...,x_k,x_k,x_{k+1},...,x_n)	.
	\end{equation}	

\item Finally, $\psi^{(N)}$ is defined by \eqref{eq:solNIBC} with boundary terms given, as in \eqref{eq:mgeneralibcs}, by the already determined $\psi^{(N-1)}$ and the inhomogeneity $f^{(N)}=0$.
\end{itemize}
\end{definition}

\paragraph{Well-definedness of $F$.} To show that $F$ actually maps into $\mathcal{D}$, we have to check the $C^1_b$-property of the function defined by \eqref{eq:solNIBC}. Since for $n=0$, the set $\mathsf{Collisions}$ is empty, $\psii \in C^1_b$ follows directly from the properties of the initial values. For $n \geq 2$, one has to consider those points separately where for some $j<k, c_j=c_k$. This occurs in $\mathscr{S}_1^{(n)}$ for positive times only if $s_j=+1$ and $s_k=-1$.
When $c_j \searrow c_k$, the collision time $\tau_{(j,k)}$ approaches $0$, so continuity at a point with $c_j = c_k$ amounts to:
\begin{equation}
\psi^{(n)}_{0,s_1...+-...s_n}(z_1,...z_{j-1},z,z,...,z_n) ~=~ V^0_{j} \psi^{(n)}_{0,s_1...+-...s_n}(z_1,...z_{j-1},z,z,...,z_n),
\end{equation}
which follows from \eqref{eq:IBCforinitialvaluesN}. By an argument analogous to the one given in \eqref{eq:2sectorC1ofF} for $N=2$, one can see that the $z$-derivative of \eqref{eq:IBCforinitialvaluesN} together with \eqref{eq:IBCforinitialvaluesN2} implies the $C_b^1$-property of the functions $\psi^{(n)}$.

To continue with the fixed point argument, we endow the spaces $\Banach_n$ with weighted norms, similarly as in \eqref{eq:2gammanorms}. For $\gamma \geq 0$, 
\begin{equation} \label{eq:defgammanorms} \begin{split}
\big\| f \big\|_{\Banach_n,\gamma} := & \sup_{\substack{t_1,...,t_n \in [0,T] \\ (z_1,...,z_n) \in Z_n}} \bigg( \Big( |f(t_1,z_1,...t_n,z_n)| \,  +
\\ & ~+ \max_{y \in \{t_1,z_1,...t_n,z_n \}} |\partial_y f(t_1,z_1,...t_n,z_n)| \Big) e^{-\tfrac{\gamma}{n}(t_1 +...+t_n)} \bigg). \end{split}
\end{equation}
As in the case $N=2$, $| \cdot |$ denotes the maximum norm in the finite dimensional spaces $\C^{2^n}$. Moreover, the norm on $C^1_b$ is defined as 
\begin{equation} \label{eq:defc1bnorms}
\big\| f \big\|_{C^1_b(Z_n,\C^{2^n})} :=  \sup_{(z_1,...,z_n) \in Z_n}  \Big( |f(z_1,...,z_n)| \  
~+ \max_{y \in \{z_1,...,z_n \}} |\partial_y f(z_1,...,z_n)| \Big). 
\end{equation}
 The constructive proof of the previous lemma directly yields a bound of the norm of $\psi^{(n)}$.

\begin{lemma} 
The function $\psi^{(n)}$ given by \eqref{eq:solNIBC} satisfies the bound
\begin{equation} \label{eq:boundonpsin}
\left\| \psi^{(n)} \right\|_{\Banach_n,\gamma} \leq ~\left\| \psi_0 \right\|_{C_b^1(Z_n,\C^{2^n})} + \frac{2n^4}{\gamma} \max_{1 \leq k \leq n} \left\| f^{(n)}_k \right\|_{\Banach_n,\gamma} + n^2 \max_{1 \leq j \leq n-1} \left\| g^{(n)}_j \right\|_{\Banach_{n-1},\gamma}.
\end{equation}
\end{lemma}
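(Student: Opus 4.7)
The plan is to estimate the $\gamma$-norm by tracking how each factor in the alternating product \eqref{eq:solNIBC} contributes. The solution is composed of $L+1$ applications of the integration operator $I$ and $L$ applications of the switching operator $V$, where the number of collisions is bounded by $L \leq \binom{n}{2} < n^2$. The initial-data piece $\psi_0$ survives unchanged, while each $I$ introduces $f$-terms and each $V$ introduces a $g$-term. The bound \eqref{eq:boundonpsin} will then follow from counting.

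For a single application of $I_\tau^{(t_1',\dots,t_n')}$, the defining formula \eqref{eq:defI} gives
\begin{equation}
\bigl(I_\tau^{(t_1',\dots,t_n')}\phi\bigr)_{s_1\dots s_n}(z_1,\dots,z_n) = \phi_{s_1\dots s_n}(c_1,\dots,c_n) - i\sum_{k=1}^n \int_0^{t_k'} f^{(n)}_{k,s_1\dots s_n}(\dots)\,ds.
\end{equation}
For each integral, the standard trick used in \eqref{eq:firstbound} yields
\begin{equation}
\Bigl|\int_0^{t_k'} f^{(n)}_{k}(\dots)\,ds\Bigr|\, e^{-\tfrac{\gamma}{n}(t_1+\cdots+t_n)} \leq \|f^{(n)}_k\|_{\Banach_n,\gamma}\int_0^{t_k'} e^{\tfrac{\gamma}{n}s}\,ds\; e^{-\tfrac{\gamma}{n}t_k'} \leq \tfrac{n}{\gamma}\|f^{(n)}_k\|_{\Banach_n,\gamma},
\end{equation}
so one $I$-application costs at most $\tfrac{n^2}{\gamma}\max_k\|f^{(n)}_k\|_{\Banach_n,\gamma}$ for the value of the function. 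For the derivatives entering the $C^1_b$-norm, the $z$-derivatives and the time derivatives acting on the integrand proceed identically, while the $t_k$-derivative produces an additional boundary contribution $f^{(n)}_k(t_k,\dots)$ which one handles by the telescoping estimate \eqref{eq:2boundderivatives}–\eqref{eq:additionalterm}; this doubles the bound at worst. So one $I$ contributes at most $\tfrac{2n^2}{\gamma}\max_k\|f^{(n)}_k\|_{\Banach_n,\gamma}$ to the $\gamma$-norm, while leaving the argument-$\phi$ contribution intact.

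For a single application of $V^{\tau}_j$, the defining formula \eqref{eq:defV} merely permutes certain spin components (up to a unimodular factor $e^{\pm i\theta}$) and adds a term $\pm e^{\mp i\theta} g^{(n)}_j(\dots)$. Since the weight $e^{-\tfrac{\gamma}{n}(t_1+\cdots+t_n)}$ for $\psi^{(n)}$ dominates the weight for $g^{(n)}_j$ on the restricted arguments, one finds $\|V_j^\tau \phi\|_{\Banach_n,\gamma} \leq \|\phi\|_{\Banach_n,\gamma} + \max_j \|g^{(n)}_j\|_{\Banach_{n-1},\gamma}$, so the $g$-contribution per $V$-factor is at most $\max_j\|g^{(n)}_j\|_{\Banach_{n-1},\gamma}$.

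Combining these estimates across the $L+1 \leq n^2$ applications of $I$ and $L \leq n^2$ applications of $V$ in \eqref{eq:solNIBC}, one obtains
\begin{equation}
\|\psi^{(n)}\|_{\Banach_n,\gamma} \leq \|\psi^{(n)}_0\|_{C^1_b(Z_n,\C^{2^n})} + (L+1)\cdot\tfrac{2n^2}{\gamma}\max_k\|f^{(n)}_k\|_{\Banach_n,\gamma} + L\cdot\max_j\|g^{(n)}_j\|_{\Banach_{n-1},\gamma},
\end{equation}
which gives \eqref{eq:boundonpsin}. The main obstacle is the careful bookkeeping of the $C^1_b$-norm through nested compositions: one must verify that derivatives of $(I\circ V\circ I\circ\cdots)\psi_0$ can indeed be controlled factor by factor without an exponential blow-up in the number of factors. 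The crucial point is that both $V$ and the pure advection part of $I$ are norm-preserving up to additive constants (independent of the iteration depth), so the bounds add rather than multiply across the product, yielding the polynomial-in-$n$ prefactors on the right-hand side.
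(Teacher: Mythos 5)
Your proposal is correct and follows essentially the same route as the paper's proof: the same per-operator estimates ($\tfrac{2n^2}{\gamma}\max_k\|f^{(n)}_k\|_{\Banach_n,\gamma}$ additive cost per application of $I$ including the derivative terms, an additive $\max_j\|g^{(n)}_j\|_{\Banach_{n-1},\gamma}$ per application of $V$), the same observation that these contributions accumulate additively rather than multiplicatively across the composition, and the same count $L+1< n^2$ of collisions to produce the stated prefactors.
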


\begin{proof}
Let $\phi \in C_b^1(Z_n,\C^{2^n})$. Then by \eqref{eq:defI},
\begin{equation}
\begin{split}
& \left| I_0^{(t_1,...,t_n)}\phi(z_1,...,z_n) \right| -  \left| \phi(c_1,...,c_n) \right|  \leq  \sum_{k=1}^{n}\left|  \int_0^{t_k} ds \ f^{(n)}_{k}(t_1,z_1;...;s,c_k-s_ks;0,c_{k+1};...) \right|
\\ & \leq n \max_{k \in \{1,...,n\}} \left( \int_0^{t_k} ds \ \big| f^{(n)}_{k}(t_1,z_1;...;s,c_k-s_ks;0,c_{k+1};...)\big| e^{-\tfrac{\gamma}{n} s} e^{\tfrac{\gamma}{n} s} \right).
\\ & ~\leq n \max_{k \in \{1,...,n\}} \sup_{s \in [0,t_k]} \left| f^{(n)}_k(t_1,z_1;...;s,c_k-s_ks;0,c_{k+1};...) e^{-\tfrac{\gamma}{n} s} \right| \int_0^{t_k} ds ~ e^{\tfrac{\gamma}{n} s}
\\ & \leq n \max_{k \in \{1,...,n\}} \sup_{s \in [0,t_k]} \left| f^{(n)}_k(t_1,z_1;...;s,c_k-s_ks;0,c_{k+1};...) e^{-\tfrac{\gamma}{n} (s+t_1+...+t_{k-1})} \right| \frac{n}{\gamma} e^{\tfrac{\gamma}{n}(t_k + t_1 + ... + t_{k-1})}.
\end{split}
\end{equation}
This implies:
\begin{equation} \begin{split}
& \left| I_0^{(t_1,...,t_n)}\phi(z_1,...,z_n) \right| e^{-\tfrac{\gamma}{n} (t_1+...+t_{n})} ~\leq~  \left| \phi(c_1,...,c_n) \right|  e^{-\tfrac{\gamma}{n} (t_1+...+t_{n})}+ \\ &+ \frac{n^2}{\gamma} \max_{k \in \{1,...,n\}} \sup_{s \in [0,t_k]} \big| f^{(n)}_k(t_1,z_1;...;s,c_k-s_ks;0,c_{k+1};...)\big| e^{-\tfrac{\gamma}{n} (s+t_1+...+t_{k-1})} . \end{split}
\end{equation}
Changing the starting time from 0 to $t$ changes only little,
\begin{equation} \begin{split}
& \left| I_t^{(t_1,...,t_n)}\phi(z_1,...,z_n) \right| e^{-\tfrac{\gamma}{n} (t+ t_1+...+t+ t_{n})} ~\leq ~  \left| \phi(c_1,...,c_n) \right|  e^{-\tfrac{\gamma}{n} (t+t_1+...+t+t_{n})}+ \\ &+ \frac{n^2}{\gamma} \max_{k \in \{1,...,n\}} \sup_{s \in [0,t_k]} \big| f^{(n)}_k(t+t_1,z_1;...;t+s,c_k-s_ks;t,c_{k+1};...)\big|\, e^{-\tfrac{\gamma}{n} (t+s+t+t_1+...+t+t_{k-1})} . \end{split} \end{equation}
Hence, together with a similar consideration for the derivatives (analogous to the one for $N=2$, $n=1$ in \eqref{eq:2boundderivatives}) we find:
\begin{equation} \label{eq:boundonI}
 \left\| I_t^{(t_1,...,t_n)}\phi \right\|_{\Banach_n,\gamma} ~\leq~  \left\| \phi \right\|_{C_b^1(Z_n, \C^{2^n})} + \frac{2n^2}{\gamma} \max_{1 \leq k \leq n} \left\| f^{(n)}_k \right\|_{\Banach_n,\gamma}.
\end{equation}
Similarly, it follows from the definition \eqref{eq:defV} that
\begin{equation}
\left\| V^t_j \phi \right\|_{C_b^1(Z_n, \C^{2^n})} e^{-\tfrac{\gamma}{n}nt} ~\leq~ \left\| \phi \right\|_{C_b^1(Z_n, \C^{2^n})}e^{-\tfrac{\gamma}{n} nt} + \left\| g^{(n)}_j \right\|_{C_b^1(Z_{n-1}, \C^{2^{n-1}})}e^{-\tfrac{\gamma}{n-1}(n-1) t}.
\end{equation}
We see that each application of $V^t_k$ leads to an additive contribution of at most
\begin{equation}
\max_{1 \leq j \leq n} \left\| g^{(n)}_j \right\|_{C_b^1(Z_{n-1}, \C^{2^{(n-1)}})} \times e^{-\tfrac{\gamma}{n-1}(n-1) t},
\end{equation}
where the last factor is the appropriate weight factor of the $\gamma$-norm. Thus,
	\begin{equation} \label{eq:boundonV}
\left\| V^t_k \phi \right\|_{\Banach_n, \gamma} ~\leq~ \left\| \phi \right\|_{C_b^1(Z_n, \C^{2^n})} + \max_{1 \leq j \leq n-1} \left\| g^{(n)}_j \right\|_{\Banach_{n-1},\gamma}.
	\end{equation}
We know that $\psi^{(n)}$ is given by the formula \eqref{eq:solNIBC}. If there are at most $L$ collisions, the operators $I$ and $V$ are applied at most $L+1$ times. Each time, the terms from equations \eqref{eq:boundonI} and \eqref{eq:boundonV} add up.  Therefore, we obtain the following bound of the norm of $\psi^{(n)}$:
	\begin{equation}
\left\| \psi^{(n)} \right\|_{\Banach_n,\gamma}  \leq \left\| \psi^{(n)}_0 \right\|_{C_b^1(Z_n,\C^{2^n})} +  \left( L+1 \right) \frac{2n^2}{\gamma}  \max_{1 \leq k \leq n} \left\| f^{(n)}_k \right\|_{\Banach_n,\gamma} + (L+1) \max_{1 \leq j \leq n-1} \left\| g^{(n)}_j \right\|_{\Banach_{n-1},\gamma}.
	\end{equation}
Since the number of collisions is bounded by $L+1 < n^2$, this yields \eqref{eq:boundonpsin}. \qed
\end{proof}

\begin{lemma} \label{thm:mlemmaF} Let $(\psi^{(1)},...,\psi^{(N)}) \in \Banach_1 \oplus \dots \oplus \Banach_N$. Then the following statements are equivalent:
\begin{enumerate}
\item[(i)] $(\psi^{(1)}, ..., \psi^{(N)})$ is a $C_b^1$-solution of the IBC system \eqref{eq:multitimeeqwithsources}--\eqref{eq:mgeneralibcs} with initial values given by \eqref{eq:minitial}.
\item[(ii)] $(\psi^{(1)},..., \psi^{(N)}) \in \mathcal{D}$ is a fixed point of $F$.
\end{enumerate} 
\end{lemma}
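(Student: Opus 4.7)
My plan is to treat the two implications separately, relying throughout on the existence/uniqueness statement of Lemma~\ref{thm:existencinsectorn} (applied sector by sector) and on the consistency conditions verified in Lemma~\ref{thm:nsectorsconsistency}. The basic observation is that $F$ was constructed precisely so that each component $\psi^{(n)}$ is the \emph{unique} solution in the $n$-th sector of the inhomogeneous IBC subsystem \eqref{eq:thegeneralsystem}, with inhomogeneities $f^{(n)}_k$ coming from the input $v^{(n+1)}$ and boundary data $g^{(n)}_j$ coming from the already-constructed $\psi^{(n-1)}$. So the lemma should reduce to reading off this definition.

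For the implication (ii)$\Rightarrow$(i): assume $(\psi^{(1)},\dots,\psi^{(N)})\in\mathcal{D}$ satisfies $F(\psi)=\psi$. Unfolding the definition of $F$ for $n=1$, the first component $\psi^{(1)}$ is given by formula \eqref{eq:solNIBC} in the one-particle sector with inhomogeneity $f^{(1)}_{1,s}(x)=-\sum_{t,u}A^{tu}_s\psi^{(2)}_{tu}(x,x)$. By Lemma~\ref{thm:existencinsectorn} this implies that $\psi^{(1)}$ solves \eqref{eq:multitimeeqwithsources} with exactly the source term \eqref{eq:sourceterms} for $n=1$. For each subsequent $n\in\{2,\dots,N-1\}$, the fixed-point property together with Lemma~\ref{thm:existencinsectorn} yields the multi-time equations \eqref{eq:multitimeeqwithsources} with source \eqref{eq:sourceterms} and the IBC \eqref{eq:mgeneralibcs} (the latter because $g^{(n)}_j$ was chosen to be exactly the right-hand side of \eqref{eq:mgeneralibcs} evaluated on $\psi^{(n-1)}$). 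The case $n=N$ is analogous with $f^{(N)}\equiv 0$ as in \eqref{eq:mthsourceterm}. The initial data \eqref{eq:minitial} hold because membership in $\mathcal{D}$ enforces them. Hence $\psi$ is a $C^1_b$-solution of the full IBC system.

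For the implication (i)$\Rightarrow$(ii): assume $\psi$ is a $C^1_b$-solution of \eqref{eq:multitimeeqwithsources}--\eqref{eq:mgeneralibcs} with the prescribed initial data. Membership in $\mathcal{D}$ is immediate from \eqref{eq:minitial}. To see $F(\psi)=\psi$, fix $n$ and apply Lemma~\ref{thm:existencinsectorn} to the $n$-th sector, with $f^{(n)}_k$ and $g^{(n)}_j$ read off from \eqref{eq:sourceterms}--\eqref{eq:mgeneralibcs} using the components $\psi^{(n+1)}$ and $\psi^{(n-1)}$ of the given solution. The hypotheses of that lemma require the consistency conditions \eqref{eq:ccgeneraln}, which is exactly the content of Lemma~\ref{thm:nsectorsconsistency}(ii) once one knows $\psi^{(n+1)}$ satisfies its own multi-time equations — and by assumption it does. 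Hence \eqref{eq:solNIBC} yields the unique solution in the $n$-th sector, and since $\psi^{(n)}$ \emph{is} such a solution, it must coincide with that formula. But the right-hand side of \eqref{eq:solNIBC} with these inputs is precisely the $n$-th component of $F(\psi)$, which gives $F(\psi)=\psi$.

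The only subtle point in the plan is the invocation of the consistency conditions in the (i)$\Rightarrow$(ii) direction: strictly speaking, Lemma~\ref{thm:existencinsectorn} should be applied from $n=N$ downward (or simultaneously), because the consistency of $f^{(n)}_k$ relies on $\psi^{(n+1)}$ already satisfying its multi-time equations. Since in direction (i)$\Rightarrow$(ii) the solution property is assumed for \emph{all} sectors simultaneously, this is unproblematic; in direction (ii)$\Rightarrow$(i) the $C^1_b$-regularity ensured at the construction step for $F$ (together with the compatibility conditions \eqref{eq:IBCforinitialvaluesN}, \eqref{eq:IBCforinitialvaluesN2} built into $\mathcal{D}$) guarantees that all differential identities used hold in the classical sense. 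Apart from this bookkeeping, the proof is a direct translation between ``fixed point of $F$'' and ``solves the full coupled system'', and no new estimates are needed.
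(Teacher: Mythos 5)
Your overall strategy is the paper's: reduce everything to Lemma \ref{thm:existencinsectorn} (sector-wise uniqueness of the solution formula \eqref{eq:solNIBC}) combined with Lemma \ref{thm:nsectorsconsistency} (consistency). The direction (i)$\Rightarrow$(ii) is fine as you present it --- indeed slightly more explicit than the paper, which merely notes that $F$ is built so as not to change a solution --- and you are right that there the consistency conditions are available in every sector at once, because the solution property is assumed everywhere.

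The gap is in (ii)$\Rightarrow$(i), and your closing paragraph misattributes the difficulty to the other direction. As written, you process the sectors in ascending order $n=1,2,\dots$ and invoke Lemma \ref{thm:existencinsectorn} at each step; but the conclusion of that lemma (that the function given by \eqref{eq:solNIBC} \emph{solves} the $n$-th sector system) is conditional on the consistency conditions \eqref{eq:ccgeneraln} for the inhomogeneities $f^{(n)}_k$, and by Lemma \ref{thm:nsectorsconsistency}(ii) these hold only once one already knows that $\psi^{(n+1)}$ satisfies its own multi-time equations --- which, in an ascending pass starting from a mere fixed point, has not yet been established when sector $n$ is treated. Your proposed patch, that ``the $C^1_b$-regularity \dots guarantees that all differential identities used hold in the classical sense,'' does not work: regularity ensures the derivatives exist, but the consistency conditions are a nontrivial structural identity (equality of the two ways of differentiating along the characteristics), and without them the path-ordered formula \eqref{eq:solNIBC} need not satisfy all $n$ equations. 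The correct repair is the paper's descending induction: start at $n=N$, where $f^{(N)}_k\equiv 0$ makes \eqref{eq:ccgeneraln} trivial, conclude via Lemma \ref{thm:existencinsectorn} that $\psi^{(N)}$ solves its sector, then feed this into Lemma \ref{thm:nsectorsconsistency} to obtain consistency for $n=N-1$, and alternate the two lemmas down to $n=1$.
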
 

\begin{proof}
On the one hand, if (i) holds then we have $(\psi^{(1)}, ..., \psi^{(N)}) \in \mathcal{D}$ and $F$ is constructed such that it does not change the functions $\psi^{(n)}$.

On the other hand, let $(\psi^{(1)},..., \psi^{(N)}) \in \mathcal{D}$ be a fixed point of $F$. Then the initial conditions are satisfied by definition of $\mathcal{D}$. It remains to check that $(\psi^{(1)},..., \psi^{(N)})$ also satisfies the multi-time equations and IBCs. To demonstrate this, we would like to apply lemma \ref{thm:existencinsectorn}. However, in order to show that the IBC system is satisfied for some sector $1 \leq n \leq N$, lemma \ref{thm:existencinsectorn} requires the consistency conditions \eqref{eq:ccgeneraln} for that $n$. Now for $n=N$, \eqref{eq:ccgeneraln} is satisfied trivially as $f_k^{(N)} = 0 \, \forall k$. That means, lemma \ref{thm:existencinsectorn} yields that $\psi^{(N)}$ satisfies the IBC system for the $N$-th sector. Now we can use lemma \ref{thm:nsectorsconsistency} to conclude that the consistency condition \eqref{eq:ccgeneraln} holds also for $n=N-1$. Thus, lemma \ref{thm:existencinsectorn} shows that the IBC system is satisfied for $n=N-1$. Using the two lemmas alternatingly for descending $n = N-2,...,1$ eventually yields (i).
\qed
\end{proof}

\begin{lemma}
For every $T>0$, $F$ has a unique fixed point in $\mathcal{D}$.
\end{lemma}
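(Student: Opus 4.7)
I would mirror the contraction-mapping proof given in Lemma~\ref{thm:lemmafixedpoint} for the case $N = 2$. Endow $\mathcal{D} \subset \bigoplus_{n=1}^N \Banach_n$ with the direct sum of the weighted norms \eqref{eq:defgammanorms},
\begin{equation}
\|(\psi^{(1)},\ldots,\psi^{(N)})\|_\gamma ~:=~ \sum_{n=1}^N \|\psi^{(n)}\|_{\Banach_n,\gamma},
\end{equation}
and note that, as for $N=2$, for every fixed $\gamma \geq 0$ this norm is equivalent to the standard $\gamma = 0$ norm (the ratio is controlled by $e^{\gamma T}$); hence $\mathcal{D}$ is a complete metric space and it is enough to show that $F$ is a strict contraction on $\mathcal{D}$ with respect to $\|\cdot\|_\gamma$ for some sufficiently large $\gamma$.

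For $v, w \in \mathcal{D}$, set $\psi := F(v)$ and $\phi := F(w)$. Since $v$ and $w$ share the same initial data, so do $\psi$ and $\phi$, and the difference $\psi - \phi$ solves the IBC system \eqref{eq:thegeneralsystem} with zero initial data. By the recursive construction of $F$, the inhomogeneity driving the equation for $\psi^{(n)} - \phi^{(n)}$ is linear in $v^{(n+1)} - w^{(n+1)}$ via the matrix $A$, and the boundary data is linear in $\psi^{(n-1)} - \phi^{(n-1)}$ via $B$. Inserting this into the explicit a priori bound \eqref{eq:boundonpsin} (with vanishing initial data) yields, for each $1 \leq n \leq N$,
\begin{equation}
\|\psi^{(n)} - \phi^{(n)}\|_{\Banach_n,\gamma} ~\leq~ \frac{\alpha_n}{\gamma}\, \|v^{(n+1)} - w^{(n+1)}\|_{\Banach_{n+1},\gamma} ~+~ \beta_n\, \|\psi^{(n-1)} - \phi^{(n-1)}\|_{\Banach_{n-1},\gamma},
\end{equation}
with the conventions that nonexistent sectors are set to zero (so in particular $\alpha_N = 0$, and $\beta_1 = 0$ because sector 1 has no boundary) and with coefficients $\alpha_n, \beta_n$ depending on $n, N, T, \|A\|_\infty, \|B\|_\infty$ but not on $\gamma$.

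Iterating this recursion upward from $n = 1$, one shows by induction on $n$ that
\begin{equation}
\|\psi^{(n)} - \phi^{(n)}\|_{\Banach_n,\gamma} ~\leq~ \frac{K_n}{\gamma}\, \|v - w\|_\gamma, \quad n = 1, \ldots, N,
\end{equation}
for constants $K_n$ independent of $\gamma$. Summing over $n$ produces $\|F(v) - F(w)\|_\gamma \leq (K/\gamma) \|v - w\|_\gamma$ with $K := \sum_n K_n$, so choosing $\gamma > K$ turns $F$ into a strict contraction and Banach's fixed point theorem delivers the unique fixed point; combined with Lemma~\ref{thm:mlemmaF}, this completes the proof of Theorem~\ref{thm:mmain}.

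The main obstacle I foresee is the asymmetry in \eqref{eq:boundonpsin}: the boundary-term coefficient $\beta_n$ does not carry a $1/\gamma$ factor, so at first sight the contraction seems to fail. This is resolved by the observation that, unlike the inhomogeneity (which is fed from the input $v^{(n+1)}$), the boundary data for sector $n$ is fed from the previously-constructed $\psi^{(n-1)}$, and the inductive bound on $\psi^{(n-1)} - \phi^{(n-1)}$ already contains the $1/\gamma$ factor. Hence the contraction gain propagates upward through all sectors. A secondary technical point is that, for $n \geq 2$, the sector-$n$ and sector-$(n+1)$ weights in \eqref{eq:defgammanorms} do not exactly match on the diagonal configurations $x_{k+1} = x_k$, so a bounded comparison factor must be absorbed into the constants $\alpha_n$ when estimating $\|f^{(n)}_k\|_{\Banach_n,\gamma}$ in terms of $\|v^{(n+1)}\|_{\Banach_{n+1},\gamma}$; this may require $\gamma$ to be coupled to $T$ (or, if necessary, a concatenation of local-in-time solutions covering $[0,T]$), but does not alter the basic structure of the fixed-point argument.
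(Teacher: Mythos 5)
Your proposal is correct and follows essentially the same route as the paper: the same weighted norms \eqref{eq:defgammanorms}, the same a priori bound \eqref{eq:boundonpsin} applied to $\psi-\phi$ with zero initial data, the same upward induction through the sectors showing that the $1/\gamma$ gain from the inhomogeneity propagates into the boundary terms, and Banach's fixed point theorem after choosing $\gamma$ large. The weight-mismatch on diagonal configurations that you flag as a secondary point is a genuine subtlety that the paper's own estimate passes over silently, and your suggested remedies would patch it without altering the structure of the argument.
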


\begin{proof}
We shall use Banach's fixed point theorem. To this end, we first show that $\mathcal{D}$ is a closed subset of a Banach space. For $\gamma =0$, the weighted norms \eqref{eq:defgammanorms} on $\Banach_n$ are the canonical norms on $C^1_b$-function, and it is well-know that this leads to complete spaces. Since we have the inequalities
\begin{equation}
\left\| f^{(n)} \right\|_{\Banach_n, 0} e^{-\gamma T} \leq \left\| f^{(n)} \right\|_{\Banach_n, \gamma}\leq \left\| f^{(n)} \right\|_{\Banach_n, 0} \quad \forall n \in \N,
\end{equation}
the $\| \cdot \|_{\Banach_n, \gamma}$ norms are all equivalent for all $\gamma \geq 0$. Thus, $\Banach_1 \oplus \dots \oplus \Banach_N$ equipped with the norm
\begin{equation}
	\left\| (f^{(1)},...,f^{(N)}) \right\|_{\Banach_1 \oplus \dots \oplus \Banach_N, \gamma} = ~	\sum_{n=1}^N \left\| f^{(n)} \right\|_{\Banach_n,\gamma}.
\end{equation}
is a Banach space. Moreover, it is easy to see that $\mathcal{D} \subset \Banach_1 \oplus \dots \oplus \Banach_N$ is a closed subset. 
\\ It remains to show that $F: \mathcal{D} \to \mathcal{D}$ is a contraction. To this end, let 
\begin{equation} \begin{split}
& (v^{(1)},...,v^{(N)}), (w^{(1)},...,w^{(N)}) \in  \mathcal{D}, \\ & (\psi^{(1)},...,\psi^{(N}) :=F(v^{(1)},...,v^{(N)}), (\phi^{(1)},...,\phi^{(N)}) :=F(w^{(1)},...,,w^{(N)}).
\end{split}
\end{equation}
By linearity of the IBC system, $\psi^{(n)} - \phi^{(n)}$ is, for $n=1,...,N$, a solution of \eqref{eq:thegeneralsystem} with the inhomogeneities and boundary values replaced by the difference between the ones for $\psi^{(n)}$ and $\phi^{(n)}$ and initial data equal to zero. 
Keeping this in mind, we now prove the following (crude) estimate by induction over $n \leq N$: 
\begin{equation} \label{eq:boundforfixedpoint} 
\left\| \psi^{(n)} - \phi^{(n)} \right\|_{\Banach_n, \gamma}~ \leq~ \frac{1}{\gamma} 
(2N)^{4n}
\left\| A\right\|_\infty \left( \sum_{k=2}^{\min\{n+1, N\}} \left\| B \right\|_{\infty}^{n+1-k} \left\| v^{(k)} - w^{(k)}\right\|_{\Banach_k,\gamma} \right).
\end{equation}
\textbf{Base case $n=1$:} Considering that for $n=1$, there are no boundary conditions, the bound \eqref{eq:boundonpsin} directly leads to:
\begin{equation} \begin{split}
\left\| \psi^{(1)} - \phi^{(1)} \right\|_{\Banach_1, \gamma} ~& \leq~ 0 + \frac{2^4}{\gamma} \left\| A \, \big(v^{(2)}(x_1,x_1) - w^{(2)}(x_1,x_1) \big) \right\|_{\Banach_1,\gamma}  
\\ & \leq~ \frac{1}{\gamma} (2N)^4 \left\| A \right\|_\infty \left\| v^{(2)} - w^{(2)}\right\|_{\Banach_2,\gamma}, 
\end{split}
\end{equation}
i.e., \eqref{eq:boundforfixedpoint} for $n=1$.\\[0.2cm]
\textbf{Induction step $n \rightarrow n+1$:} Assume that \eqref{eq:boundforfixedpoint} holds for some $n \geq 1$. We first consider the case $n \leq N-2$. Using \eqref{eq:boundonpsin} for $n+1$ and then the induction hypothesis yields:
\begin{equation} \begin{split}
\left\| \psi^{(n+1)} - \phi^{(n+1)} \right\|_{\Banach_{n+1}, \gamma} & \leq~ \frac{(2N)^4}{\gamma} \left\| A \right\|_\infty \left\| v^{(n+2)} - w^{(n+2)} \right\|_{\Banach_{n+2,\gamma}} \\ &~~~~ + N^2 \left\| B \right\|_\infty \left\| \psi^{(n)} - \phi^{(n)} \right\|_{\Banach_{n}, \gamma}
\\ & \leq~ \frac{1}{\gamma} 
 (2N)^{4(n+1)} \left\| A\right\|_\infty \left( \sum_{k=2}^{n+2} \left\| B \right\|_{\infty}^{n+2-k} \left\| v^{(k)} - w^{(k)}\right\|_{\Banach_k,\gamma} \right).
 \end{split}
\end{equation}
In case $n= N-1$, the same calculation goes through, except that the term with $v^{(n+2)}-w^{(n+2)}$ is absent since $f_k^{(N)} = 0 \, \forall k$. Thus, the sum ends at $N$. This proves \eqref{eq:boundforfixedpoint}.

Next, we determine an upper bound on the total norm, 
\begin{align} \label{eq:boundonFixed}
&\left\| \big(\psi^{(1)},...,\psi^{(N)}\big)-\big(\phi^{(1)},...,\phi^{(N)}\big) \right\|_{\Banach_1 \oplus \dots \oplus \Banach_N, \gamma} ~\leq~ \sum_{n=1}^{N} \left\| \psi^{(n)} - \phi^{(n)} \right\|_{\Banach_n, \gamma} \nonumber\\
&\leq~ \frac{1}{\gamma} 
(2N)^{4N}  \left\| A \right\|_\infty \sum_{n=1}^N \sum_{k=2}^{\min \{n+1,N\}} \left\| B \right\|_{\infty}^{n+1-k} \left\| v^{(k)} - w^{(k)}\right\|_{\Banach_k,\gamma}\nonumber\\ 
 &\leq~ \frac{1}{\gamma} 
(2N)^{4N}  \left\| A \right\|_\infty \sum_{n=1}^N ~ \max \{1, \left\| B \right\|_{\infty}^{N} \} \sum_{k=1}^{N} \left\| v^{(k)} - w^{(k)}\right\|_{\Banach_k,\gamma}\nonumber\\
 &\leq~ \frac{1}{\gamma} 
(2N)^{4N+1} \left\| A \right\|_\infty \max \{1, \left\| B \right\|_{\infty}^{N} \} \left\| (v^{(1)},...,v^{(N)}) - (w^{(1)},...,w^{(N)}) \right\|_{\Banach_1 \oplus \dots \oplus \Banach_N, \gamma}.
\end{align}
If we now choose, for example,  $\gamma=5 \,(2N)^{4N+1} \left\| A\right\|_\infty \max \{1, \left\| B \right\|_{\infty}^{N} \}$, then \eqref{eq:boundonFixed} yields:
\begin{equation} \begin{split}
&\left\| F(v^{(1)},...,v^{(N)}) - F(w^{(1)},...,w^{(N)}) \right\|_{\Banach_1 \oplus \dots \oplus \Banach_N, \gamma} \\ \leq~ \frac{1}{5} &\left\| (v^{(1)},...,v^{(N)}) - (w^{(1)},...,w^{(N)}) \right\|_{\Banach_1 \oplus \dots \oplus \Banach_N, \gamma}.
\end{split}
\end{equation}
Thus $F$ is a contraction, and Banach's fixed point theorem yields the claim. \qed
\end{proof}

Together with lemma \ref{thm:mlemmaF}, this finishes the proof of theorem \ref{thm:mmain}, establishing that the multi-time IBC system has a unique $C_b^1$-solution for all positive times.

\subsection{Proof of theorem \ref{thm:nibc}} \label{sec:proofnibc}

	We need to show that \eqref{eq:currentcondition} holds. To this end, we consider both sides of the condition separately. For $n=N$, we just have the free evolution equations on $\spacelike_1^{(N)}$, which ensure \eqref{eq:currentcondition}. We now turn to $1 \leq n <N$.

\noindent On the one hand, we have (writing out $j^{\mu_1 ... \mu_{n+1}}$ in components):
	\be
		j^{\mu_1 ... \mu_{n+1}}~ = \sum_{s_1 ... s_{n+1} = \pm 1} |\psi^{(n+1)}_{s_1 ... s_{n+1}}|^2 \, (-s_1)^{\mu_1} \cdots (-s_{n+1})^{\mu_{n+1}},
		\label{eq:currentincomponents}
	\ee
and therefore:
\begin{align}
	&j^{\mu_1 ... \mu_{k-1}\, 0 1 \,\mu_{k+1} ... \mu_n} - j^{\mu_1 ... \mu_{k-1} \,1 0\, \mu_{k+1} ... \mu_n}\nonumber\\
	&= 2 \!\!\! \sum_{s_1,...,\widehat{s}_k,...,s_n } \!\!\!\!\!\!\!\!\! (-s_1)^{\mu_1} \cdots \widehat{(-s_k)}^{\mu_k} \cdots  (-s_n)^{\mu_n} \big( -|\psi^{(n+1)}_{s_1 ... s_{k-1} -+ s_{k+1} ... s_n}|^2 +  |\psi^{(n+1)}_{s_1 ... s_{k-1} +- s_{k+1} ... s_n}|^2 \big),
	\label{eq:localcurrentcalc0}
\end{align}
where $\widehat{(\cdot)}$ denotes omission. On the other hand, we compute:
\begin{align}
	(-1)^k \partial_{k,\mu_k} j^{\mu_1... \mu_k ... \mu_n}(x_1,...,x_n) \stackrel{\eqref{eq:currentincomponents}}{=}   &\sum_{s_1, ... ,s_n} \!\!\! (-s_1)^{\mu_1} \cdots (\widehat{-s_k)}^{\mu_k} \cdots  (-s_n)^{\mu_n}\nonumber\\
&~~~\times (-1)^k (\partial_{t_k} - s_k \partial_{z_k}) |\psi^{(n)}_{s_1...s_n}|^2(x_1,...,x_n).
\label{eq:localcurrentcalc1}
\end{align}
Using \eqref{eq:multitimeeqwithsources}, we have:
\begin{align}
	&(-1)^k(\partial_{t_k} - s_k \partial_{z_k}) |\psi^{(n)}_{s_1...s_n}|^2(x_1,...,x_n) \nonumber\\
	&= -i \big(\psi^{(n)}\big)^*_{s_1...s_n}(x_1,...,x_n)  \left( \sum_{t,u}A^{tu}_{s_k} \psi^{(n+1)}_{s_1...s_{k-1} \, t\, u \, s_{k+1}...s_n}(x_1,...,x_{k-1},x_k,x_k, x_{k+1},...,x_n) \right) + {\rm c.c.}\nonumber\\
	&= 2 \Im  \left(\big(\psi^{(n)}\big)^*_{s_1...s_n}(x_1,...,x_n)   \sum_{t,u} A^{tu}_{s_k} \psi^{(n+1)}_{s_1...s_{k-1} \, t\, u \, s_{k+1}...s_n}(x_1,...,x_{k-1},x_k,x_k, x_{k+1},...,x_n) \right).
	\label{eq:localcurrentcalc2}
\end{align}
Here ``c.c.'' denotes the complex conjugate of the previous summand.

Now we compare \eqref{eq:localcurrentcalc1} with \eqref{eq:localcurrentcalc0}. Demanding that both expressions be equal (condition \eqref{eq:currentcondition}) yields 
\begin{align}
	0 ~= ~2\!\!\! &\sum_{s_1, ...,\widehat{s}_k,... ,s_n} \!\!\!\!\!\!  (-s_1)^{\mu_1} \cdots (\widehat{-s_k)}^{\mu_k} \!\!\! \cdots  (-s_n)^{\mu_n} \left( \big(|\psi^{(n+1)}_{s_1 ... s_{k-1} -+ s_{k+1} ... s_n}|^2 \right.\nonumber\\
	& ~~~~~~~~~-  |\psi^{(n+1)}_{s_1 ... s_{k-1} +- s_{k+1} ... s_n}|^2\big)(x_1,...,x_{k-1},x_k,x_k,x_{k+1},...,x_n)\nonumber\\
	 & ~~~~~~~~~+ \tfrac{1}{2} (-1)^k \sum_{s_k} \left. (\partial_{t_k} - s_k \partial_{z_k}) |\psi^{(n)}_{s_1...s_n}|^2(x_1,...,x_n)\right).
\end{align}
This condition is certainly satisfied if all the summands vanish individually. Demanding that this be so and considering \eqref{eq:localcurrentcalc2}  gives the condition
\begin{align}
	&\left(-|\psi^{(n+1)}_{s_1 ... s_{k-1} -+ s_{k+1} ... s_n}|^2 +  |\psi^{(n+1)}_{s_1 ... s_{k-1} +- s_{k+1} ... s_n}|^2\right)(x_1,...,x_{k-1},x_k,x_k,x_{k+1},...,x_n) ~=
	\nonumber\\
	&  \sum_{s_k} \Im \left(\big(\psi^{(n)}\big)^*_{s_1...s_n}(x_1,...,x_n)   \sum_{t,u} A^{tu}_{s_k} \psi^{(n+1)}_{s_1...s_{k-1} \, t\, u \, s_{k+1}...s_n}(x_1,...,x_{k-1},x_k,x_k, x_{k+1},...,x_n) \right).
	\label{eq:localdetailedcond}
\end{align}
Now, except for additional spin indices and spacetime variables this condition has exactly the form of the condition for the model with only two sectors (see \eqref{eq:2currentbalance}). There the condition
\be
	\left( - |\psi^{(2)}_{-+}|^2 + |\psi^{(2)}_{+-}|^2 \right) (x,x) = \sum_{s_1} \Im \left(  \big(\psi^{(1)} \big)^*_{s_1}(x) \sum_{t,u} A^{tu}_{s_1} \psi^{(2)}_{tu}(x,x) \right)
\ee
was ensured by the IBC \eqref{eq:2modelIBC}
\be
	\psi^{(2)}_{-+} - e^{i\theta} \psi^{(2)}_{+-}(x,x) = \sum_s B^s \psi^{(1)}_s(x).
\ee
In an analogous way, we conclude that the IBCs \eqref{eq:mgeneralibcs} ensure the conditions \eqref{eq:localdetailedcond} for $k=1,...,n$; $n=1,...,N-1$. \qed

\subsection{Proof of theorem \ref{thm:massive}} \label{sec:proofmassive}

The proof of our main theorem built on the construction of a fixed point map $F$ in Sec. \ref{sec:proofmmain}. We now define an analogous map $\widetilde{F}$ with additional mass terms.

\begin{definition}[Definition (fixed point map in the massive case).]
Recall the definition of the map $F$ from the previous subsection. The massive fixed point map $\widetilde{F}: \mathcal{D} \to \mathcal{D}$ is defined in the same way, with the only difference that we use as the inhomogeneities 
\begin{equation}
	\widetilde{f}^{(n)}_{k} ~=~ f^{(n)}_k + m \gamma_k^0 v^{(n)}.
\end{equation}
\end{definition}

The well-definedness of $\widetilde{F}$ can be seen analogously as in the case $m = 0$. The statement of our theorem \ref{thm:massive} now comes from the combination of the two points in the following lemma.

\begin{lemma}
The map $\widetilde{F}$ has the following properties:
\begin{enumerate}
	\item[(i)] For every $T>0$, $\widetilde{F}$ has a unique fixed point in $\mathscr{D}$.
	\item[(ii)] $(\psii,...,\psi^{(N)}) \in \mathscr{B}_1 \oplus \dots \oplus \mathscr{B}_N$ is a $C^1_b$-solution of the IBC system \eqref{eq:multitimeeqwithsources},\eqref{eq:mgeneralibcs}, \eqref{eq:sourcetermswithmass}, \eqref{eq:mthsourcetermwithmass} with initial values given by \eqref{eq:minitial} if and only if it is in $\mathcal{D}$ and a fixed point of $\widetilde{F}$.
\end{enumerate}
\end{lemma}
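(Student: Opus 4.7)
The plan is to repeat the fixed-point scheme of Section~\ref{sec:proofmmain} with $f^{(n)}_k$ replaced everywhere by $\widetilde{f}^{(n)}_k=f^{(n)}_k+m\gamma^0_k v^{(n)}$. Since $v^{(n)}\mapsto m\gamma^0_k v^{(n)}$ is bounded linear on $C_b^1$, the well-definedness of $\widetilde{F}:\mathcal{D}\to\mathcal{D}$ (including $C_b^1$-regularity of the output via \eqref{eq:solNIBC} and the preservation of the compatibility conditions \eqref{eq:IBCforinitialvaluesN}, \eqref{eq:IBCforinitialvaluesN2} at $t=0$) follows by the same analysis as in the massless case; the $f^{(n)}_k$-term appearing in \eqref{eq:IBCforinitialvaluesN2} is simply interpreted as $\widetilde{f}^{(n)}_k$.

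For (i), I would track the additional mass contribution through the $I$-operator bound \eqref{eq:boundonI}, where it adds a summand bounded by $(2n^2|m|/\gamma)\,\|v^{(n)}-w^{(n)}\|_{\Banach_n,\gamma}$ to the sector-wise difference of outputs. Propagating this through the inductive estimate \eqref{eq:boundforfixedpoint} and summing yields
\begin{equation*}
\|\widetilde{F}(v)-\widetilde{F}(w)\|_{\Banach_1\oplus\cdots\oplus\Banach_N,\gamma}\;\leq\;\frac{C(N,\|A\|_\infty,\|B\|_\infty,|m|)}{\gamma}\,\|v-w\|_{\Banach_1\oplus\cdots\oplus\Banach_N,\gamma},
\end{equation*}
which is a contraction for $\gamma$ sufficiently large. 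Banach's fixed point theorem then provides the unique fixed point in $\mathcal{D}$, proving (i).

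For (ii), writing $L_k:=i(\partial_{t_k}-s_k\partial_{z_k})$, the forward direction is routine: commuting partial derivatives of a $C_b^1$-solution give $L_k\widetilde{f}^{(n)}_l=L_l\widetilde{f}^{(n)}_k$, and the proof of Lemma~\ref{thm:existencinsectorn} (whose key input is precisely the consistency of the inhomogeneity, not its specific form) identifies such a solution with the right-hand side of \eqref{eq:solNIBC}, so $\psi=\widetilde{F}(\psi)$. The reverse direction is the main obstacle. Following the descending chain of Lemma~\ref{thm:mlemmaF}, I would first handle the top sector by noting that the free massive multi-time Dirac equations $(L_k-m\gamma^0_k)\psi^{(N)}=0$ are multi-time consistent, since both $L_k,L_l$ and $\gamma^0_k,\gamma^0_l$ commute for $k\neq l$. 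Hence a standard method-of-characteristics argument (identical to the massless base case but now also integrating the diagonalizable mass coupling) produces a unique $C_b^1$-solution on sector $N$; by commuting derivatives this solution is itself a fixed point of the sector-$N$ iteration and therefore coincides with $\psi^{(N)}$ by the uniqueness established in (i), so $\psi^{(N)}$ satisfies all $N$ of its multi-time equations. The inductive step requires a mass-aware analog of Lemma~\ref{thm:nsectorsconsistency}: the extra commutator contributions in $L_k\widetilde{f}^{(n)}_l-L_l\widetilde{f}^{(n)}_k$ take the form $m^2(\gamma^0_l\gamma^0_k-\gamma^0_k\gamma^0_l)\psi^{(n)}=0$ (by commutativity of $\gamma^0$'s at different particles) together with mass$\,\times\,$creation cross-terms that cancel by the original telescoping calculation. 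The principal technical difficulty I anticipate is precisely this bookkeeping, where the IBC-induced and mass-induced couplings have to be tracked simultaneously through the consistency identities; however, all algebraic cancellations rest on $[\gamma^0_k,\gamma^0_l]=0$ for $k\neq l$, so the argument should close without structural changes.
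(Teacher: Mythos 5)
Your proposal follows essentially the same route as the paper: for (i) the paper likewise absorbs the mass term as an extra $O(m/\gamma)$ contribution to the contraction estimate and invokes Banach's fixed point theorem, and for (ii) it simply notes that the mass term does not obstruct the consistency condition \eqref{eq:consistencymsectors}, which is exactly the point you verify (in more detail than the paper does) via $[\gamma^0_k,\gamma^0_l]=0$ and the cancellation of the cross-terms. Your treatment is, if anything, more explicit than the paper's one-line argument for (ii), so there is nothing to correct.
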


\begin{proof}
\emph{Item (i).} The proof works like in the massless case. The only difference is that an additional summand
\begin{equation}
	\frac{m}{\gamma} \sum_{k=1}^n \left\| v^{(k)} - w^{(k)} \right\|_{\mathscr{B}_k, \gamma}
\end{equation}
appears on the right hand side of \eqref{eq:boundforfixedpoint}. For this, we have the estimate
\begin{equation}
	\frac{m}{\gamma}\sum_{k=1}^n \left\| v^{(k)} - w^{(k)} \right\|_{\mathscr{B}_k, \gamma} ~\leq~ \frac{m N}{\gamma} \left\| (v^{(1)},...,v^{(N)}) - (w^{(1)},...,w^{(N)}) \right\|_{\Banach_1 \oplus \dots \oplus \Banach_N, \gamma}.
\end{equation}

 Hence,
\begin{equation}\begin{split}
	& \left\| \widetilde{F}(v^{(1)},...,v^{(N)}) - \widetilde{F}(w^{(1)},...,w^{(N)}) \right\|_{\Banach_1 \oplus \dots \oplus \Banach_N, \gamma}\\ 
\leq~~ &\frac{1}{\gamma} 
(2N)^{4N+1} \left\| A \right\|_\infty \max \{1, \left\| B \right\|_{\infty}^{N} \} \left\| (v^{(1)},...,v^{(N)}) - (w^{(1)},...,w^{(N)}) \right\|_{\Banach_1 \oplus \dots \oplus \Banach_N, \gamma}
\\ +~ &N^2 \frac{m}{\gamma} \left\| (v^{(1)},...,v^{(N)}) - (w^{(1)},...,w^{(N)}) \right\|_{\Banach_1 \oplus \dots \oplus \Banach_N, \gamma}.
\end{split}
\end{equation}
Also in this case, $\gamma$ can be chosen large enough to make $\widetilde{F}$ a contraction, thus Banach's fixed point theorem yields the claim.
\\ 

\noindent\emph{Item (ii).} The proof is exactly analogous to the one in the massless case, noting that the mass term does not obstruct the consistency condition \eqref{eq:consistencymsectors}. \qed
\end{proof}

\section{Discussion} \label{sec:discussion}

By constructing an explicit model of Dirac particles in 1+1 dimensions, we have shown that interior-boundary conditions can be combined with Dirac's concept of multi-time wave functions. As the IBC approach uses the particle-position representation,  multi-time wave functions are necessary to make the IBC approach manifestly covariant. We have then demonstrated rigorously the existence and uniqueness of solutions and identified a wide class of IBCs which ensure probability conservation on arbitrary Cauchy surfaces, in particular in all Lorentz frames. Overall, we have obtained a new rigorous and (almost fully) covariant model QFT.
As discussed in Sec.\ \ref{sec:li}, ``almost fully covariant'' means that the model contains certain constant matrices $A, B$ which would have to transform in a certain way for the model to be Lorentz invariant. We believe that this li\-mi\-ta\-tion is due to the fact that the model uses only fermions and does not involve exchange particles (e.g.\ photons) yet.
 As more sophisticated models could overcome this point, we think that our work demonstrates sufficiently clearly that IBCs can be combined with relativity.

Note that there are rigorous, interacting and Lorentz invariant QFT models for Dirac particles in 1+1 dimensions, for example the Thirring model \cite{thirring}. The Thirring model achieves Lorentz invariance through a quartic interaction term in the Lagrangian, \newline $(\overline{\psi} \gamma_\mu \psi) (\overline{\psi} \gamma^\mu \psi)$, while our model corresponds to a cubic term in the field operator $\psi$. While the former is necessary in 1+1 dimensions to achieve Lorentz invariance, the advantage of the latter is that it is closer to the interaction term of QED, $A_\mu \, \overline{\psi} \gamma^\mu \psi$, which also contains three field operators. In this regard, our model comes closer to the realistic situation (which, besides simplicity and tractability, has been our main criterion for setting up a toy model).

While we have explained how to obtain a multi-time formulation of the IBC approach only at the example of a particular model, we think that some of the basic insights can also be transferred to more general cases. For higher space-time dimensions, for example, we expect that a similar consideration as used in Sec.\ \ref{sec:prooflocalcurrentcons} to obtain the condition \eqref{eq:localcurrentcons} for local probability conservation will lead to the correct IBCs. This consideration only requires that the respective free multi-time equations have a conserved tensor current $j^{\mu_1 ... \mu_n}(x_1,...,x_n)$ with the right properties for each sector of Fock space.

For higher dimensions, though, one will have to take into account that the codimension of the set of coincidence points $\coincidence$ is greater than for 1+1 dimensions. To determine the probability flow into $\coincidence$, one then has to take integrals over balls with infinitesimal radii instead of mere limits towards $\coincidence$. This will lead to the appearance of integrals also in the IBC, as well as in the term which then replaces $A \psi^{(n+1)}$ in the equation of motion for $\psi^{(n)}$. The integrals in the IBC will then force the wave function to become divergent, i.e., more singular than in the 1+1 dimensional case.  It seems that such a strong singularity is related to the ultraviolet divergence problem. It would be a highly interesting question for future research to see whether the IBC approach can also help to overcome the UV problem in these cases. Previous works in the non-relativistic case \cite{ibc1,ibc2,ls:2018,lampart:2018} suggest that this may be so; however, in the relativistic case (and for the Dirac Hamiltonian) the question is completely open. An important obstacle is that point interactions do not seem to be possible for the Dirac operator in 1+3 dimensions \cite{Sve81}.

As indicated before, it would be desirable to treat more realistic QFTs involving exchange bosons with the multi-time IBC approach in the future. The exchange particles would help to make the theory fully Lorentz invariant. However, one then requires a suitable particle-position representation of bosonic wave functions which is an open problem by itself. For photons, the difficulties have recently been discussed in \cite{ktz:2018} and a new wave equation for the photon has been suggested. It might be possible to combine a multi-particle (and multi-time) version of that equation with our model (see \cite{kltz:2019} which already achieves contact interactions for a photon-electron system). As the probability current of \cite{ktz:2018} involves a preferred time-like vector field, the IBCs (and consequently the whole wave function dynamics) would then likely depend on that preferred vector field.

\vspace{0.4cm}

\noindent \textbf{Acknowledgments.}\\[1mm]
We would like to thank Detlef D\"urr, Stefan Keppeler, Jonas Lampart, Julian Schmidt and Stefan Teufel for helpful discussions. Special thanks go to Roderich Tumulka for advice and helpful comments on the manuscript. We would furthermore like to thank an anonymous referee for helpful suggestions. L. N. gratefully acknowledges financial support by the Studienstiftung des deutschen Volkes.\\[1mm]
\begin{minipage}{15mm}
\includegraphics[width=13mm]{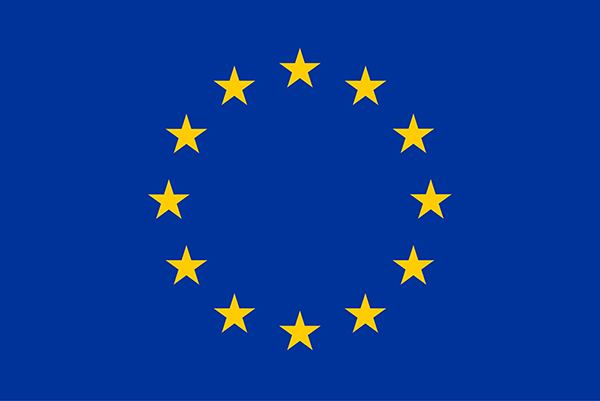}
\end{minipage}
\begin{minipage}{143mm}
This project has received funding from the European Union's Framework for
Re- \\ search and Innovation Horizon 2020 (2014--2020) under the Marie Sk{\l}odowska-
\end{minipage}\\[1mm]
Curie Grant Agreement No.~705295.


\end{document}